  \newcommand{\BlackBox}{\rule{1.5ex}{1.5ex}}    \newenvironment{proof}{\par\noindent{\bf Proof\ }}{\hfill\BlackBox\\[2mm]}
  \newtheorem{theorem}{Theorem}
  \newtheorem{proposition}[theorem]{Proposition}
\def\maxwidth{   \ifdim\Gin@nat@width>\linewidth
    \linewidth
  \else
    \Gin@nat@width
  \fi
}
\definecolor{fgcolor}{rgb}{0.345, 0.345, 0.345}
\definecolor{shadecolor}{rgb}{.97, .97, .97}
\definecolor{messagecolor}{rgb}{0, 0, 0}
\definecolor{warningcolor}{rgb}{1, 0, 1}
\definecolor{errorcolor}{rgb}{1, 0, 0}
\newenvironment{knitrout}{}{} 
\newtheorem{assumption}{Assumption}
\newtheorem{condition}{Condition}
\newtheorem{lrvb_lemma}{Lemma}
\newtheorem{lrvb_corollary}{Corollary}
\newtheorem{lrvb_definition}{Definition}
\newcommand{\fig}[1]{Fig.~\ref{fig:#1}}
\begin{document}

\nottoggle{arxivformat}{\jmlrheading{19}{2018}
    {1-\pageref{LastPage}}{11/17; Revised 7/18}{8/18}{17-670}
    {Ryan Giordano, Tamara Broderick, and Michael I.~Jordan}
}

\title{Covariances, Robustness, and Variational Bayes}

\iftoggle{arxivformat}{
\author{    Ryan Giordano \\
    \texttt{rgiordano@berkeley.edu} \\
    Department of Statistics, UC Berkeley\\
          367 Evans Hall, UC Berkeley\\
          Berkeley, CA 94720
    \and
    Tamara Broderick \\
    \texttt{tbroderick@csail.mit.edu}\\
    Department of EECS, MIT\\
    77 Massachusetts Ave., 38-401\\
    Cambridge, MA 02139
    \and
    Michael I.~Jordan \\
    \texttt{jordan@cs.berkeley.edu} \\
    Department of Statistics and EECS, UC Berkeley\\
    367 Evans Hall, UC Berkeley\\
    Berkeley, CA 94720
}
}{
\author{    \name Ryan Giordano
    \email rgiordano@berkeley.edu \\
    \addr Department of Statistics, UC Berkeley\\
          367 Evans Hall, UC Berkeley\\
          Berkeley, CA 94720
    \AND
    \name Tamara Broderick
    \email tbroderick@csail.mit.edu\\
    \addr Department of EECS, MIT\\
          77 Massachusetts Ave., 38-401\\
          Cambridge, MA 02139
    \AND
    \name Michael I.~Jordan \email jordan@cs.berkeley.edu \\
    \addr Department of Statistics and EECS, UC Berkeley\\
          367 Evans Hall, UC Berkeley\\
          Berkeley, CA 94720
}
\begin{keywords}
Variational Bayes; Bayesian robustness; Mean field approximation; Linear
response theory; Laplace approximation; Automatic differentiation
\end{keywords}
}

\maketitle

\begin{abstract}Mean-field Variational Bayes (MFVB) is an approximate Bayesian posterior
inference technique that is increasingly popular due to its fast runtimes on
large-scale data sets. However, even when MFVB provides accurate posterior means
for certain parameters, it often mis-estimates variances and covariances.
Furthermore, prior robustness measures have remained undeveloped for MFVB. By
deriving a simple formula for the effect of infinitesimal model perturbations on
MFVB posterior means, we provide both improved covariance estimates and local
robustness measures for MFVB, thus greatly expanding the practical usefulness of
MFVB posterior approximations. The estimates for MFVB posterior covariances rely
on a result from the classical Bayesian robustness literature that relates
derivatives of posterior expectations to posterior covariances and includes the
Laplace approximation as a special case. Our key condition is that the MFVB
approximation provides good estimates of a select subset of posterior
means---an assumption that has been shown to hold in many practical settings. In our
experiments, we demonstrate that our methods are simple, general, and fast,
providing accurate posterior uncertainty estimates and robustness measures with
runtimes that can be an order of magnitude faster than MCMC.
 \end{abstract}

\global\long\def\constant{Constant}

\newcommandx\tconst[1][usedefault, addprefix=\global, 1=\alpha]{C_{#1}}

\global\long\def\trans{\intercal}

\global\long\def\mbe{\mathbb{E}}

\global\long\def\indep{\stackrel{indep}{\sim}}

\global\long\def\iid{\stackrel{iid}{\sim}}

\global\long\def\kl{\mathrm{KL}}

\global\long\def\qmfvb{\mathcal{Q}_{mf}}

\global\long\def\cov{\mathrm{Cov}}

\global\long\def\var{\mathrm{Var}}

\global\long\def\normal{\mathcal{N}}

\global\long\def\at#1{\vert_{#1}}

\global\long\def\argmin{\operatornamewithlimits{argmin}}

\global\long\def\argmax{\operatornamewithlimits{argmax}}

\newcommandx\pthetapost[1][usedefault, addprefix=\global, 1=\alpha]{p_{#1}}

\global\long\def\alphazeroball{\mathcal{A}_{0}}

\newcommandx\qthetapost[1][usedefault, addprefix=\global, 1=\alpha]{q_{#1}}

\global\long\def\pzeropost{\pthetapost[0]}

\global\long\def\qzeropost{\qthetapost[0]}

\global\long\def\etaopt{\eta^{*}}

\global\long\def\etaoptzero{\eta_{0}^{*}}

\global\long\def\etatopt{\eta^{*}\left(\alpha\right)}

\global\long\def\qthetapostarg{\qthetapost\left(\theta;\eta^{*}\right)}

\global\long\def\qtthetapostarg{\qthetapost\left(\theta;\eta^{*}\left(\alpha\right)\right)}

\newcommandx\gtheta[1][usedefault, addprefix=\global, 1=\theta]{g\left(#1\right)}

\global\long\def\mbeq{\mbe_{\qthetapost}}

\global\long\def\mbep{\mbe_{\pthetapost}}

\global\long\def\epgtheta{\mbe_{\pthetapost}\left[\gtheta\right]}

\global\long\def\eqgtheta{\mbe_{\qthetapost}\left[\gtheta\right]}

\global\long\def\lrvbcov{\mathrm{Cov}_{\qthetapost[0]}^{LR}}

\global\long\def\qcov{\cov_{\qzeropost}}

\global\long\def\pcov{\cov_{\pzeropost}}

\global\long\def\klhess{\mathbf{H}_{\eta\eta}}

\global\long\def\efhess{\mathbf{f}_{\alpha\eta}}

\global\long\def\eggrad{\mathbf{g}_{\eta}}

\global\long\def\covmat{\boldsymbol{\Sigma}}

\global\long\def\infomat{\boldsymbol{\Lambda}}

\global\long\def\lkjparam{\boldsymbol{\Psi}}

\global\long\def\lkjdiagmat{\mathbf{M}}

\global\long\def\lkjcorrmat{\mathbf{R}}

\global\long\def\lkjlocmat{\mathbf{V}}

\global\long\def\psens{\mathbf{S}_{\alpha_{0}}}

\global\long\def\qsens{\mathbf{S}_{\alpha_{0}}^{q}}

\global\long\def\psenshat{\hat{\mathbf{S}}_{\alpha_{0}}}

\global\long\def\contampriornoarg{u}

\newcommandx\contamprior[1][usedefault, addprefix=\global, 1=\theta]{\contampriornoarg\left(#1\right)}

\global\long\def\psensfunc{S_{\contampriornoarg}}

\global\long\def\qsensfunc{S_{\contampriornoarg}^{q}}

\global\long\def\qsenssup{S_{sup}^{q}}

\global\long\def\psenssup{S_{sup}}

\global\long\def\contamnorm{C_{u}}

\newcommandx\origprior[1][usedefault, addprefix=\global, 1=\theta]{p_{0}\left(#1\right)}

\newcommandx\influence[2][usedefault, addprefix=\global, 1=\phantom{}, 2=\phantom{}]{I_{#2}^{#1}\left(\theta\right)}

\global\long\def\influenceplus{\influence[+]}

\newcommandx\qinfluence[1][usedefault, addprefix=\global, 1=\phantom{}]{\influence[#1][q]}

\newcommandx\covdens[2][usedefault, addprefix=\global, 1=\theta, 2=\alpha]{\rho\left(#1,#2\right)}

\newcommandx\unusedcovdensnorm[2][usedefault, addprefix=\global, 1=\theta, 2=\alpha]{p\left(#1\vert#2\right)}

\newcommandx\linearoperator[2][usedefault, addprefix=\global, 1=\theta, 2={\,}]{L^{#2}\left(#1\right)}

\global\long\def\laptheta{\hat{\theta}_{Lap}}

\global\long\def\laphess{\mathbf{H}_{Lap}}

\global\long\def\lap{q_{Lap}}

\global\long\def\qlap{\mathcal{Q}_{Lap}}

\global\long\def\covlap{\mathrm{Cov}_{q_{Lap}}^{Lap}}

\global\long\def\qsimple{\qmfvb}

\global\long\def\lapgtheta{g_{\laptheta}}

\global\long\def\qadvi{\mathcal{Q}_{ad}}

\newcommand{\vbassum}{Assumptions~\ref{assu:exchange_order}--\ref{assu:e_q_g_smooth} }
 
\section{Introduction
\label{sec:intro}}
Most Bayesian posteriors cannot be calculated analytically, so in practice we
turn to approximations. Variational Bayes (VB) casts posterior approximation as
an optimization problem in which the objective to be minimized is the
divergence, among a sub-class of tractable distributions, from the
exact posterior. For example, one widely-used and relatively simple flavor of VB
is ``mean field variational Bayes'' (MFVB), which employs Kullback-Leibler (KL)
divergence and a factorizing exponential family approximation for the tractable
sub-class of posteriors \citep{wainwright2008graphical}. MFVB has been
increasingly popular as an alternative to Markov Chain Monte Carlo (MCMC) in
part due to its fast runtimes on large-scale data sets. Although MFVB does not
come with any general accuracy guarantees (except asymptotic ones in special
cases \citep{westling:2015:vbconsistency,wang:2017:vbconsistency}), MFVB
produces posterior mean estimates of certain parameters that are accurate enough
to be useful in a number of real-world applications~\citep{blei:2016:variational}.
Despite this ability to produce useful point estimates for large-scale data sets,
MFVB is limited as an inferential tool; in particular, MFVB
typically underestimates marginal variances
\citep{mackay:2003:information,wang:2005:inadequacy,turner:2011:two}.  Moreover, to
the best of our knowledge, techniques for assessing Bayesian robustness have not yet been
developed for MFVB.  It is these inferential issues that are the focus of the
current paper.

Unlike the optimization approach of VB,
an MCMC posterior estimate is an empirical distribution formed with
posterior draws.
MCMC draws lend themselves
naturally to the approximate calculation of posterior moments, such
as those required for covariances. In contrast, VB approximations
lend themselves naturally to sensitivity analysis, since we can analytically
differentiate the optima with respect to perturbations.
However, as has long been known in the Bayesian robustness literature, the
contrast between derivatives and moments is not so stark since, under mild
regularity conditions that allow the exchange of integration and
differentiation, there is a direct correspondence between derivatives and
covariance
\citep[][\prettyref{subsec:cov_and_sens} below]{gustafson:1996:localposterior,basu:1996:local,efron:2015:frequentist}.

Thus, in order to calculate local sensitivity to model hyperparameters, the
Bayesian robustness literature re-casts derivatives with respect to
hyperparameters as posterior covariances that can be calculated with MCMC. In
order to provide covariance estimates for MFVB, we turn this idea on its head
and use the sensitivity of MFVB posterior expectations to estimate their
covariances. These sensitivity-based covariance estimates are referred to as
``linear response'' estimates in the statistical mechanics literature
\citep{opper:2001:advancedmeanfield}, so we
refer to them here as \emph{linear response variational Bayes} (LRVB)
covariances. Additionally, we derive straightforward MFVB versions of
hyperparameter sensitivity measures from the Bayesian robustness literature.
Under the assumption that the posterior means of interest are well-estimated by
MFVB for all the perturbations of interest, we establish that LRVB provides a
good estimate of local sensitivities. In our experiments, we compare LRVB
estimates to MCMC, MFVB, and Laplace posterior approximations. We find that the
LRVB covariances, unlike the MFVB and Laplace approximations, match the MCMC
approximations closely while still being computed over an order of magnitude
more quickly than MCMC.

In \prettyref{sec:theory} we first discuss the general relationship
between Bayesian sensitivity and posterior covariance and then define local
robustness and sensitivity. Next, in \prettyref{sec:vb_theory},
we introduce VB and derive the linear
system for the MFVB local sensitivity estimates. In
\prettyref{sec:sensitivity_in_action}, we show how to use the MFVB local
sensitivity results to estimate covariances and calculate canonical
Bayesian hyperparameter sensitivity measures. Finally, in
\prettyref{sec:experiments}, we demonstrate the speed and effectiveness of our
methods with simple simulated data, an application of automatic differentiation
variational inference (ADVI), and a large-scale industry data set.
 
\section{Bayesian Covariances and Sensitivity
\label{sec:theory}}

\subsection{Local Sensitivity and Robustness
\label{subsec:local_sensitivity}}

Denote an unknown model parameter by the vector $\theta\in\mathbb{R}^{K}$,
assume a dominating measure for $\theta$ on $\mathbb{R}^{K}$ given by $\lambda$, and denote
observed data by $x$. Suppose that we have a vector-valued hyperparameter
$\alpha\in\mathcal{A}\subseteq\mathbb{R}^{D}$ that parameterizes some
aspects of our model.  For example, $\alpha$ might represent prior parameters,
in which case we would write the prior density with respect to $\lambda$ as $p\left(\theta\vert\alpha\right)$,
or it might parameterize a class of likelihoods, in which case we could write
the likelihood as $p\left(x\vert\theta,\alpha\right)$.
Without loss of generality, we will include $\alpha$ in the definition of
both the prior and likelihood.
For the moment, let $\pthetapost\left(\theta\right)$ denote the
posterior density of $\theta$ given $x$ and $\alpha$, as given by Bayes'
Theorem (this definition of $\pthetapost\left(\theta\right)$
will be a special case of the more general \prettyref{def:pthetapost} below):
\begin{align*}
\pthetapost\left(\theta\right) &
    := p\left(\theta\vert x,\alpha\right)=
    \frac{p\left(x\vert\theta,\alpha\right)p\left(\theta\vert\alpha\right)}
    {\int p\left(x\vert\theta',\alpha\right)p\left(\theta'\vert\alpha\right)
    \lambda\left(d\theta'\right)} =
    \frac{p\left(x\vert\theta,\alpha\right)p\left(\theta\vert\alpha\right)}
    {p\left(x\vert\alpha\right)}.
\end{align*}
We will assume that we are interested in a posterior expectation of
some function $\gtheta$ (e.g., a parameter mean, a posterior predictive
value, or squared loss): $\epgtheta$. In the current work, we will
quantify the uncertainty of $\gtheta$ by the posterior variance,
$\var_{\pthetapost}\left(\gtheta\right)$. Other measures of central
tendency (e.g., posterior medians) or uncertainty (e.g., posterior
quantiles) may also be good choices but are beyond the scope of the
current work.

Note the dependence of $\epgtheta$ on both the likelihood and prior, and hence
on $\alpha$, through Bayes' Theorem. The choice of a prior and choice of a
likelihood are made by the modeler and are almost invariably a simplified
representation of the real world. The choices are therefore to some extent
subjective, and so one hopes that the salient aspects of the posterior would not
vary under reasonable variation in either choice. Consider the prior, for
example. The process of prior elicitation may be prohibitively time-consuming;
two practitioners may have irreconcilable subjective prior beliefs, or the model
may be so complex and high-dimensional that humans cannot reasonably express
their prior beliefs as formal distributions. All of these circumstances might
give rise to a range of reasonable prior choices. A posterior quantity is
``robust'' to the prior to the extent that it does not change much when
calculated under these different prior choices.

Quantifying the sensitivity of the posterior to variation in the likelihood and
prior is one of the central concerns of the field of robust Bayes
\citep{berger:2012:robust}. (We will not discuss the other central concern,
which is the selection of priors and likelihoods that lead to robust
estimators.) Suppose that we have determined that the hyperparameter $\alpha$
belongs to some open set $\mathcal{A}$, perhaps after expert prior elicitation.
Ideally, we would calculate the extrema of $\epgtheta$ as $\alpha$ ranges over
all of $\mathcal{A}$. These extrema are a measure of \textit{global robustness},
and their calculation is intractable or difficult except in special cases
\citep[Chapter 15]{moreno:2012:globalrobustness,huber:2011:robust}. A more
practical alternative is to examine how much $\epgtheta$ changes locally in
response to small perturbations in the value of $\alpha$ near some tentative
guess, $\alpha_{0}\in\mathcal{A}$. To this end we define the \textit{local
sensitivity at $\alpha_{0}$} \citep{gustafson:2012:localrobustnessbook}.
\begin{lrvb_definition} \label{def:exact_sensitivity}
The local sensitivity of $\epgtheta$
to hyperparameter $\alpha$ at $\alpha_{0}$ is given by
\begin{eqnarray}
\psens & := &
    \left.\frac{d\epgtheta}{d\alpha}\right|_{\alpha_{0}}.
    \label{eq:local_robustness}
\end{eqnarray}
\end{lrvb_definition}
$\psens$, the local sensitivity, can be considered a measure of\emph{
local robustness} \citep{gustafson:2012:localrobustnessbook}. Throughout
the paper we will distinguish between sensitivity, which comprises
objectively defined quantities such as $\psens$, and robustness,
which we treat as a more subjective concept that may be informed
by the sensitivity as well as other considerations. For example, even
if one knows $\psens$ precisely, how much posterior change is too
much change and how much prior variation is reasonable remain decisions
to be made by the modeler. For a more in-depth discussion of how we
use the terms sensitivity and robustness, see
\prettyref{app:sens_and_robustness}.

The quantity $\psens$ can be interpreted as measuring sensitivity
to hyperparameters within a small region near $\alpha=\alpha_{0}$
where the posterior dependence on $\alpha$ is approximately linear.
Then local sensitivity provides an approximation to global sensitivity
in the sense that, to first order,
\begin{align*}
\epgtheta & \approx\mbe_{\pthetapost[\alpha_0]}\left[\gtheta\right] +
    \psens^{\trans}\left(\alpha-\alpha_{0}\right).
\end{align*}
Generally, the dependence of $\epgtheta$ on $\alpha$ is not given in any
closed form that is easy to differentiate.  However, as we will now see,
the derivative $\psens$ is equal, under mild regularity conditions, to
a particular posterior covariance that can easily be estimated with MCMC draws.

\subsection{Covariances and Sensitivity\label{subsec:cov_and_sens}}

We will first state a general result relating sensitivity and covariance and
then apply it to our specific cases of interest as they arise throughout the
paper, beginning with the calculation of $\psens$ from
\prettyref{subsec:local_sensitivity}. Consider a general base density
$\pzeropost\left(\theta\right)$ defined relative to $\lambda$ and define
$\covdens$ to be a $\lambda$-measurable log perturbation function that depends
on $\alpha\in\mathcal{A}\subseteq\mathbb{R}^{D}$.
We will require the following mild technical assumption:
\begin{assumption}
\label{assu:exchange_order}
For all $\alpha\in\mathcal{A}$, $\covdens$ is continuously differentiable
with respect to $\alpha$, and, for a given $\lambda$-measurable
$g\left(\theta\right)$ there exist $\lambda$-integrable functions
$f_{0}\left(\theta\right)$ and $f_{1}\left(\theta\right)$ such that
$\left|\pzeropost\left(\theta\right)\exp\left(\covdens\right)
g\left(\theta\right)\right|<f_{0}\left(\theta\right)$
and
$\left|\pzeropost\left(\theta\right)\exp\left(\covdens\right)\right|<f_{1}\left(\theta\right)$.
\end{assumption}
Under \prettyref{assu:exchange_order} we can normalize the log-perturbed
quantity $\pzeropost\left(\theta\right) \exp\left(\covdens\right)$ to get
a density in $\theta$ with respect to $\lambda$.
\begin{lrvb_definition}\label{def:pthetapost}
Denote by $\pthetapost\left(\theta\right)$ the normalized posterior
given $\alpha$:
\begin{align}
\pthetapost\left(\theta\right) := &
        \frac{\pzeropost\left(\theta\right)\exp\left(\covdens\right)}
            {\int \pzeropost\left(\theta'\right)\exp\left(\covdens[\theta']\right)
            \lambda\left(d\theta'\right)}.\label{eq:tilting_definition}
\end{align}
\end{lrvb_definition}
For example,
$\pthetapost\left(\theta\right)$ defined in
\prettyref{subsec:local_sensitivity} is equivalent to taking
$\pzeropost\left(\theta\right) = p\left(\theta \vert x, \alpha_0\right)$
and $\covdens[][\alpha] =\log p\left(x\vert\theta,\alpha\right) +
    \log p\left(\theta\vert\alpha\right)-\log p\left(x\vert\theta,\alpha_{0}\right) -
    \log p\left(\theta\vert\alpha_{0}\right)$.

For a $\lambda$-measurable function $\gtheta$, consider differentiating
the expectation $\mbe_{\pthetapost}\left[g\left(\theta\right)\right]$
with respect to $\alpha$:
\begin{align}
    \label{eq:dg_dalpha}
\frac{d\mbe_{\pthetapost}\left[
    g\left(\theta\right)\right]}{d\alpha^{\trans}} &
    :=\frac{d}{d\alpha}\int \pthetapost\left(\theta\right)
    \gtheta\lambda\left(d\theta\right).
\end{align}
When evaluated at some $\alpha_{0} \in \mathcal{A}$, this derivative measures
the local sensitivity of $\mbe_{\pthetapost}\left[g\left(\theta\right)\right]$
to the index $\alpha$ at $\alpha_{0}$. Define
$\alphazeroball\subseteq\mathcal{A}$ to be an open ball containing $\alpha_{0}$.
Under \prettyref{assu:exchange_order} we assume without loss of generality that
$\covdens[][\alpha_0] \equiv 0$ so that $\pzeropost\left(\theta\right) =
\pthetapost[\alpha_{0}]\left(\theta\right)$; if $\covdens[][\alpha_0]$ is
non-zero, we can simply incorporate it into the definition of
$\pzeropost\left(\theta\right)$. Then, under \prettyref{assu:exchange_order},
the derivative in \prettyref{eq:dg_dalpha} is equivalent to a particular
posterior covariance.
\begin{theorem}
\label{thm:sens_cov}Under \prettyref{assu:exchange_order} ,
\begin{align}
\left.\frac{d\mbe_{\pthetapost}
    \left[g\left(\theta\right)\right]}
    {d\alpha^{\trans}}\right|_{\alpha_{0}} &
    =\cov_{\pzeropost}\left(g\left(\theta\right),
    \left.\frac{\partial\covdens}{\partial\alpha}\right|_{\alpha_{0}}\right).
    \label{eq:covariance_sensitivity_general}
\end{align}
\end{theorem}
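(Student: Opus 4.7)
The plan is to differentiate $\mbe_{\pthetapost}[g(\theta)]$ by brute force via the quotient rule, exchange differentiation and integration using \prettyref{assu:exchange_order}, evaluate at $\alpha_{0}$ using the normalization $\covdens[][\alpha_{0}] \equiv 0$, and recognize the resulting expression as a covariance under $\pzeropost$.

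First, writing out the expectation explicitly from \prettyref{eq:tilting_definition}, I would set
\begin{align*}
N(\alpha) := \int \pzeropost(\theta)\exp(\covdens)\gtheta\,\lambda(d\theta),
\qquad
Z(\alpha) := \int \pzeropost(\theta)\exp(\covdens)\,\lambda(d\theta),
\end{align*}
so that $\mbep[\gtheta] = N(\alpha)/Z(\alpha)$. Next, I would invoke \prettyref{assu:exchange_order}: the pointwise continuous differentiability of $\covdens$ in $\alpha$ together with the $\lambda$-integrable dominators $f_{0}, f_{1}$ lets the standard dominated-convergence argument for differentiation under the integral sign carry $d/d\alpha$ inside both $N$ and $Z$. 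This yields
\begin{align*}
\frac{dN}{d\alpha^{\trans}} = \int \pzeropost(\theta)\exp(\covdens)\gtheta\frac{\partial \covdens}{\partial \alpha^{\trans}}\lambda(d\theta),
\qquad
\frac{dZ}{d\alpha^{\trans}} = \int \pzeropost(\theta)\exp(\covdens)\frac{\partial \covdens}{\partial \alpha^{\trans}}\lambda(d\theta).
\end{align*}

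Then I would apply the quotient rule and evaluate at $\alpha = \alpha_{0}$. Because $\covdens[][\alpha_{0}] \equiv 0$, the factor $\exp(\covdens[][\alpha_{0}]) = 1$ inside every integrand, and $Z(\alpha_{0}) = 1$ as well. Hence
\begin{align*}
\left.\frac{d\mbep[\gtheta]}{d\alpha^{\trans}}\right|_{\alpha_{0}} = \int \pzeropost(\theta)\gtheta\left.\frac{\partial \covdens}{\partial \alpha^{\trans}}\right|_{\alpha_{0}}\lambda(d\theta) - \left(\int \pzeropost(\theta)\gtheta\,\lambda(d\theta)\right)\left(\int \pzeropost(\theta)\left.\frac{\partial \covdens}{\partial \alpha^{\trans}}\right|_{\alpha_{0}}\lambda(d\theta)\right),
\end{align*}
which is exactly $\mbe_{\pzeropost}[\gtheta\,\partial_{\alpha^{\trans}}\covdens|_{\alpha_{0}}] - \mbe_{\pzeropost}[\gtheta]\,\mbe_{\pzeropost}[\partial_{\alpha^{\trans}}\covdens|_{\alpha_{0}}] = \cov_{\pzeropost}(\gtheta,\,\partial_{\alpha}\covdens|_{\alpha_{0}})$.

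The only substantive step is justifying the interchange of differentiation and integration; everything else is algebra. The dominators $f_{0}$ and $f_{1}$ are stated in \prettyref{assu:exchange_order} precisely to supply the hypothesis of Leibniz's rule for $N$ and $Z$ respectively, so I expect no real difficulty beyond explicitly citing that theorem. A minor bookkeeping point is to note that $\partial_{\alpha}\covdens$ inherits continuity (hence local integrability against $\pzeropost$) from the continuous differentiability assumption, ensuring the two marginal expectations on the right are well defined so that the covariance is meaningful.
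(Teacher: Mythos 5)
Your proposal is correct and follows essentially the same route as the paper's proof in \prettyref{app:sens_and_cov}: differentiate the ratio of the two integrals, justify the interchange of differentiation and integration via dominated convergence (the paper cites \citet[Chapter 5-11, Theorem 18]{fleming:1965:functions}), evaluate at $\alpha_{0}$ where $\covdens[][\alpha_0]\equiv 0$, and read off the covariance. The only cosmetic difference is that you name the numerator and denominator $N(\alpha)$ and $Z(\alpha)$ and note $Z(\alpha_0)=1$ explicitly, whereas the paper carries the denominator through the computation.
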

\prettyref{thm:sens_cov} is a straightforward consequence of the
Lebesgue dominated convergence theorem; see \prettyref{app:sens_and_cov} for a
detailed proof. Versions of \prettyref{thm:sens_cov} have appeared many times
before; e.g.,
\citet{diaconis:1986:consistency,basu:1996:local,gustafson:1996:localposterior,perez:2006:mcmc}
have contributed variants of this result to the robustness literature.

By using MCMC draws from $\pzeropost$$\left(\theta\right)$ to calculate the
covariance on the right-hand side of
\prettyref{eq:covariance_sensitivity_general}, one can form an estimate of
$d\mbe_{\pthetapost}\left[g\left(\theta\right)\right]/d\alpha^{\trans}$ at
$\alpha=\alpha_{0}$. One might also approach the problem of calculating
$d\mbe_{\pthetapost}\left[g\left(\theta\right)\right]/d\alpha^{\trans}$ using
importance sampling as follows \citep[Chapter 9]{owen:2013:mcmcbook}. First, an
importance sampling estimate of the dependence of
$\mbe_{\pthetapost}\left[\gtheta\right]$ on $\alpha$ can be constructed with
weights that depend on $\alpha$. Then, differentiating the weights with
respect to $\alpha$ provides a sample-based estimate of
$d\mbe_{\pthetapost}\left[g\left(\theta\right)\right]/d\alpha^{\trans}$. We show
in \prettyref{app:mcmc_importance_sampling} that this importance sampling
approach is equivalent to using MCMC samples to estimate the covariance in
\prettyref{thm:sens_cov}.

An immediate corollary of \prettyref{thm:sens_cov} allows us to calculate
$\psens$ as a  covariance.
\begin{lrvb_corollary}
\label{cor:sens_cov_prior}Suppose that \prettyref{assu:exchange_order}
holds for some $\alpha_{0}\in\mathcal{A}$, some $g\left(\theta\right)$, and for
\begin{align*}
\covdens[][\alpha] & =\log p\left(x\vert\theta,\alpha\right) +
    \log p\left(\theta\vert\alpha\right)-\log p\left(x\vert\theta,\alpha_{0}\right) -
    \log p\left(\theta\vert\alpha_{0}\right).
\end{align*}
Then \prettyref{thm:sens_cov} implies that
\begin{align}
\psens & =\pcov\left(\gtheta,\left.
    \frac{\partial\covdens}{\partial\alpha}\right|_{\alpha_{0}}\right).
    \label{eq:covariance_sensitivity}
\end{align}
\end{lrvb_corollary}
\prettyref{cor:sens_cov_prior} can be found in \citet{basu:1996:local},
in which a version of \prettyref{cor:sens_cov_prior} is stated in
the proof of their Theorem 1, as well as in \citet{perez:2006:mcmc}
and \citet{efron:2015:frequentist}. Note that the definition of $\covdens$
does not contain any normalizing constants and so can typically be
easily calculated. Given $N_s$ MCMC draws $\{\theta_n\}_{n=1}^{N_s}$ from a chain that we assume to have
reached equilibrium at the stationary distribution $\pzeropost\left(\theta\right)$,
one can calculate an estimate of $\psens$ using the sample covariance
version of \prettyref{eq:covariance_sensitivity_general}:
\begin{align}
\psenshat & := \frac{1}{N_{s}}\sum_{n=1}^{N_{s}}
    g\left(\theta_{n}\right)\left.
    \frac{\partial\covdens[\theta_{n}]}{\partial\alpha}
    \right|_{\alpha_{0}}-\left(\frac{1}{N_{s}}
    \sum_{n=1}^{N_{s}}g\left(\theta_{n}\right)\right)
    \left(\frac{1}{N_{s}}\sum_{n=1}^{N_{s}}
    \left.\frac{\partial\covdens[\theta_{n}]}{\partial\alpha}
    \right|_{\alpha_{0}}\right)
\label{eq:mcmc_sample_cov}\\
\textrm{for }\theta_{n} &
    \sim\pzeropost\left(\theta\right),\textrm{ where }n=1,...,N_{s}.\nonumber
\end{align}
 
\section{Variational Bayesian Covariances and Sensitivity
\label{sec:vb_theory}}

\subsection{Variational Bayes\label{subsec:variational_Bayes}}

We briefly review variational Bayes and state our key assumptions about its
accuracy. We wish to find an approximate distribution, in some class
$\mathcal{Q}$ of tractable distributions, selected to minimize the
Kullback-Leibler divergence (KL divergence) between $q\in\mathcal{Q}$ and the
exact log-perturbed posterior $\pthetapost$. We assume that distributions in
$\mathcal{Q}$ are parameterized by a finite-dimensional parameter $\eta$ in some
feasible set $\Omega_{\eta}\subseteq\mathbb{R}^{K_{\eta}}$.
\begin{lrvb_definition}
The approximating variational family is given by
\begin{eqnarray}
\mathcal{Q} & := & \left\{ q:q=q\left(\theta;\eta\right)\textrm{ for }
    \eta\in\Omega_{\eta}\right\} .\label{eq:q_approximating_family}
\end{eqnarray}
\end{lrvb_definition}
Given $\mathcal{Q}$, we define the optimal $q\in\mathcal{Q}$, which
we call $\qthetapost\left(\theta\right)$, as the distribution that minimizes the KL
divergence $KL\left(q\left(\theta;\eta\right)||\pthetapost\left(\theta\right)\right)$
from $\pthetapost\left(\theta\right)$. We denote the corresponding optimal variational
parameters as $\etaopt$.
\begin{lrvb_definition}
\label{def:kl_divergence}The variational approximation $\qthetapost\left(\theta\right)$
to $\pthetapost\left(\theta\right)$ is defined by

\begin{eqnarray}
\qthetapost\left(\theta\right) := q\left(\theta;\etaopt\right) & := &
    \textrm{argmin}_{q\in\mathcal{Q}}\left\{
        KL\left(q\left(\theta;\eta\right)||\pthetapost\left(\theta\right)\right)\right\},
\label{eq:kl_divergence}
\end{eqnarray}
where
\begin{align*}
KL\left(q\left(\theta;\eta\right)||\pthetapost\left(\theta\right)\right) &
    = \mbe_{q\left(\theta;\eta\right)}
        \left[\log q\left(\theta;\eta\right) - \log \pthetapost\left(\theta\right)
        \right].
\end{align*}
\end{lrvb_definition}
In the KL divergence, the (generally intractable) normalizing constant for
$\pthetapost\left(\theta\right)$ does not depend on $q\left(\theta\right)$
and so can be neglected when optimizing. In order for
the KL divergence to be well defined, we assume that both
$\pzeropost\left(\theta\right)$
and $q\left(\theta\right)$ are given with respect to the same base
measure, $\lambda$, and that the support of $q\left(\theta\right)$
is contained in the support of $\pthetapost\left(\theta\right)$.
We will require some additional mild regularity conditions in
\prettyref{subsec:vb_sensitivity} below.

A common choice for the approximating family
$\mathcal{Q}$ in \prettyref{eq:q_approximating_family}
is the ``mean field family'' \citep{wainwright2008graphical,blei:2016:variational},
\begin{align}
\qmfvb & :=\left\{ q\left(\theta\right):q\left(\theta\right) =
    \prod_{k}q\left(\theta_{k};\eta_{k}\right)\right\} ,
    \label{eq:q_mean_field_family}
\end{align}
where $k$ indexes a partition of the full vector $\theta$ and of the parameter
vector $\eta$. That is, $\qmfvb$ approximates the posterior
$\pthetapost\left(\theta\right)$ as a distribution that factorizes across
sub-components of $\theta$. This approximation is commonly referred to as
``MFVB,'' for ``mean field variational Bayes.'' Note that, in general, each
function $q\left(\theta_{k};\eta_{k}\right)$ in the product is different. For
notational convenience we write $q\left(\theta_{k};\eta_{k}\right)$ instead of
$q_{k}\left(\theta_{k};\eta_{k}\right)$ when the arguments make it clear which
function we are referring to, much as the same symbol $p$ is used to refer to
many different probability distributions without additional indexing.

One may additionally assume that the components
$q\left(\theta_{k};\eta_{k}\right)$ are in a convenient exponential family.
Although the exponential family assumption does not in general follow from a
factorizing assumption, for compactness we will refer to both the factorization
and the exponential family assumption as MFVB.

In an MFVB approximation, $\Omega_{\eta}$ could be a stacked vector
of the natural parameters of the exponential families, or the moment
parameterization, or perhaps a transformation of these parameters
into an unconstrained space (e.g., the entries of the log-Cholesky decomposition
of a positive definite information matrix). For more concrete examples,
see \prettyref{sec:experiments}. Although all of our experiments
and much of our motivating intuition will use MFVB, our results extend
to other choices of $\mathcal{Q}$ that satisfy the necessary assumptions.

\subsection{Variational Bayes sensitivity\label{subsec:vb_sensitivity} }

Just as MCMC approximations lend themselves to moment calculations,
the variational form of VB approximations lends itself to sensitivity
calculations. In this section we derive the sensitivity of VB posterior
means to generic perturbations---a VB analogue of \prettyref{thm:sens_cov}.
In \prettyref{sec:sensitivity_in_action} we will choose particular
perturbations to calculate VB prior sensitivity and, through \prettyref{thm:sens_cov},
posterior covariances.

In \prettyref{def:kl_divergence}, the variational approximation is
a function of $\alpha$ through the optimal parameters $\etatopt$,
i.e., $\qthetapost\left(\theta\right)=q\left(\theta,\etatopt\right)$.
In turn, the posterior expectation $\mbe_{\qthetapost}\left[\gtheta\right]$
is also a function of $\alpha$, and its derivative at
$\alpha_{0}$---the local sensitivity of the variational approximation
to $\alpha$---has a closed form under the following mild technical
conditions. As with $\pzeropost$, define $\qzeropost:=\qthetapost[\alpha_{0}]$,
and define $\etaoptzero:=\etaopt\left(\alpha_{0}\right)$.

All the following assumptions are intended to hold for a given $\pthetapost\left(\theta\right)$,
approximating class $\mathcal{Q}$, $\lambda$-measurable function
$\gtheta$, and to hold for all $\alpha\in\alphazeroball$ and all
$\eta$ in an open neighborhood of $\etaoptzero$.
\begin{assumption}
\label{assu:kl_differentiable}The KL divergence at $KL\left(q\left(\theta;\eta\right)||\pzeropost\left(\theta\right)\right)$
and expected log perturbation $\mbe_{q\left(\theta;\eta\right)}\left[\covdens\right]$
are twice continuously differentiable in $\eta$ and $\alpha$.
\end{assumption}
\begin{assumption}
\label{assu:opt_interior}There exists a strict local minimum, $\etaopt\left(\alpha\right)$,
of $KL\left(q\left(\theta;\eta\right)||\pthetapost\left(\theta\right)\right)$
in \prettyref{eq:kl_divergence} such that $\etatopt$ is interior
to $\Omega_{\eta}$.
\end{assumption}
\begin{assumption}
\label{assu:e_q_g_smooth}The expectation $\mbe_{q\left(\theta;\eta\right)}\left[\gtheta\right]$
is a continuously differentiable function of $\eta$.
\end{assumption}
We define the following quantities for notational convenience.
\begin{lrvb_definition}
\label{def:vb_derivatives}Define the following derivatives of variational
expectations evaluated at the optimal parameters:

\begin{tabular}{ccc}
$\klhess :=
    \left.\frac{\partial^{2}KL\left(q\left(\theta;\eta\right)||
        \pzeropost\left(\theta\right)\right)}{\partial\eta\partial\eta^{\trans}}
            \right|_{\eta=\etaoptzero}$ &
$\efhess:=
    \left.\frac{\partial^2\mbe_{q\left(\theta;\eta\right)}\left[\covdens\right]}
    {\partial\alpha\partial\eta^{\trans}}
    \right|_{\eta=\etaoptzero,\alpha=\alpha_{0}}$ &
$\eggrad:=
    \left.\frac{\partial\mbe_{q\left(\theta;\eta\right)}
    \left[g\left(\theta\right)\right]}{\partial\eta^{\trans}}
    \right|_{\eta=\etaoptzero}.$\tabularnewline
\end{tabular}
\end{lrvb_definition}
Since $\gtheta$, $\alpha$, and $\eta$ are all vectors, the quantities
$\klhess$, $\efhess$, and $\eggrad$ are matrices. We are now ready
to state a VB analogue of \prettyref{thm:sens_cov}.
\begin{theorem}
\label{thm:lrvb_formula}Consider a variational approximation
$\qthetapost\left(\theta\right)$
to $\pthetapost\left(\theta\right)$ as given in \prettyref{def:kl_divergence}
and a $\lambda$-measurable function $\gtheta$. Then, under \vbassum,
using the definitions given in \prettyref{def:vb_derivatives}, we
have
\begin{eqnarray}
\left.\frac{d\mbe_{\qthetapost}
    \left[\gtheta\right]}{d\alpha^{\trans}}\right|_{\alpha_{0}} & = &
    \eggrad\klhess^{-1}\efhess^{\trans}.\label{eq:lrvb_formula}
\end{eqnarray}
\end{theorem}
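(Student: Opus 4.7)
The plan is to apply the implicit function theorem (IFT) to the first-order optimality condition defining $\etatopt$, and then propagate the resulting $\eta$-sensitivity through $\mbe_{\qthetapost}\left[\gtheta\right]$ via the chain rule. The key preliminary simplification, obtained from \prettyref{def:pthetapost}, is the decomposition
\begin{align*}
KL\!\left(q(\theta;\eta)||\pthetapost(\theta)\right) = KL\!\left(q(\theta;\eta)||\pzeropost(\theta)\right) - \mbe_{q(\theta;\eta)}\!\left[\covdens\right] + \log Z(\alpha),
\end{align*}
where $Z(\alpha)$ is the normalizer in \prettyref{eq:tilting_definition}. Because $\log Z(\alpha)$ is $\eta$-free, it drops out of every $\eta$-derivative, so the stationarity condition for $\etatopt$ reduces to
\begin{align*}
F(\eta,\alpha) := \frac{\partial}{\partial \eta}\!\left[KL\!\left(q(\theta;\eta)||\pzeropost(\theta)\right) - \mbe_{q(\theta;\eta)}\!\left[\covdens\right]\right] = 0,
\end{align*}
which by \prettyref{assu:opt_interior} is satisfied at $\eta = \etatopt$ for all $\alpha\in\alphazeroball$.

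Next I would identify the two Jacobian blocks of $F$ at $(\etaoptzero,\alpha_{0})$; \prettyref{assu:kl_differentiable} supplies the smoothness and justifies the exchange of $\eta$-differentiation with the $q$-expectation. Using the WLOG normalization $\covdens[][\alpha_{0}] \equiv 0$, the function $\eta \mapsto \mbe_{q(\theta;\eta)}[\covdens[][\alpha_{0}]]$ is identically zero, so its $\eta$-Hessian vanishes at $\alpha_{0}$ and hence $\partial F/\partial \eta^{\trans}\big|_{(\etaoptzero,\alpha_{0})} = \klhess$. For the mixed block, a direct calculation gives $\partial F/\partial \alpha^{\trans}\big|_{(\etaoptzero,\alpha_{0})} = -\efhess^{\trans}$, simply by relabeling the second-order partial in \prettyref{def:vb_derivatives}. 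Provided $\klhess$ is invertible, the IFT then yields
\begin{align*}
\left.\frac{d\etatopt}{d\alpha^{\trans}}\right|_{\alpha_{0}} = -\klhess^{-1}(-\efhess^{\trans}) = \klhess^{-1}\efhess^{\trans}.
\end{align*}

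To finish, I would appeal to the chain rule: by \prettyref{assu:e_q_g_smooth}, the map $\eta \mapsto \mbe_{q(\theta;\eta)}[\gtheta]$ is continuously differentiable with Jacobian $\eggrad$ at $\etaoptzero$, and since $\mbe_{\qthetapost}[\gtheta] = \mbe_{q(\theta;\etatopt)}[\gtheta]$, composing with $d\etatopt/d\alpha^{\trans}$ from the previous step produces $\eggrad\klhess^{-1}\efhess^{\trans}$, which is exactly \prettyref{eq:lrvb_formula}.

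I expect the main technical obstacle to be the invertibility of $\klhess$: \prettyref{assu:opt_interior} provides only a strict local minimum, which in principle permits a singular positive-semidefinite Hessian (as in $f(x)=x^{4}$ at zero), so the IFT step really requires the tacit additional assumption that $\klhess$ is nonsingular. This is the standard operating hypothesis in sensitivity arguments of this kind and is typically immediate for the exponential-family MFVB applications the paper has in mind. A secondary but benign issue is the interchange of $\eta$-differentiation with the $q$-expectation in defining $\klhess$ and $\efhess$, which is exactly what \prettyref{assu:kl_differentiable} is designed to license.
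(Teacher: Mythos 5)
Your proposal is correct and follows essentially the same route as the paper's proof in \prettyref{app:lrvb}: the implicit function theorem applied to the first-order stationarity condition, followed by the chain rule through $\mbe_{q\left(\theta;\eta\right)}\left[\gtheta\right]$, with the decomposition of the KL divergence into the base KL minus $\mbe_{q\left(\theta;\eta\right)}\left[\covdens\right]$ plus an $\eta$-free normalizer identifying the mixed partial as $-\efhess^{\trans}$. Your side remark about invertibility is fair---the paper asserts that strict optimality in \prettyref{assu:opt_interior} yields a positive definite (hence invertible) $\klhess$, which is the same tacit strengthening you identify---but this does not change the substance of the argument.
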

A proof of \prettyref{thm:lrvb_formula} is given in \prettyref{app:lrvb}.
As with \prettyref{thm:sens_cov}, by choosing the appropriate $\covdens$
and evaluating $\efhess$, we can use \prettyref{thm:lrvb_formula}
to calculate the exact sensitivity of VB solutions to any arbitrary
local perturbations that satisfy the regularity conditions. \vbassum
are typically not hard to verify. For an example, see \prettyref{app:mvn_exact},
where we establish \vbassum for a multivariate normal target
distribution and a mean-field approximation.

\prettyref{eq:lrvb_formula} is formally similar to frequentist sensitivity
estimates. For example, the pioneering paper of \citet{cook:1986:assessment}
contains a formula for assessing the curvature of a marginal likelihood surface
\citep[Equation 15]{cook:1986:assessment} that, like our
\prettyref{thm:lrvb_formula}, represents the sensitivity as a linear system
involving the Hessian of an objective function at its optimum. The geometric
interpretation of local robustness suggested by \citet{cook:1986:assessment} has
been extended to Bayesian settings (see, for example,
\citet{zhu:2007:perturbation,zhu:2011:bayesian}). In addition to generality, one
attractive aspect of their geometric approach is its invariance to
parameterization. Investigating geometric interpretations of the present work
may be an interesting avenue for future research.

\subsection{Approximating with Variational Bayes}

Recall that we are ultimately interested in $\epgtheta$. Variational
approximations and their sensitivity measures will be useful to the
extent that both the variational means and sensitivities are close
to the exact means and sensitivities. We formalize these desiderata as follows.
\begin{condition}
\label{cond:vb_accurate}
Under \vbassum and the quantities defined therein, we additionally have,
for all $\alpha \in \mathcal{A}$,
\begin{align}
\mbe_{\qthetapost}\left[\gtheta\right] &
    \approx\mbe_{\pthetapost}\left[\gtheta\right] \quad \textrm{and}
    \label{eq:vb_accurate}\\
    \left.\frac{d\mbe_{\qthetapost[\alpha]}\left[\gtheta\right]}
        {d\alpha^{\trans}}\right|_{\alpha_{0}} &
        \approx\left.\frac{d\mbe_{\pthetapost}\left[\gtheta\right]}
            {d\alpha^{\trans}}
        \right|_{\alpha_{0}}
    \label{eq:vb_slope_accurate}
\end{align}
\end{condition}
We will not attempt to be precise about what we mean by the ``approximately
equal'' sign, since we are not aware of any practical tools for evaluating
quantitatively whether \prettyref{cond:vb_accurate} holds other than
running both VB and MCMC (or some other slow but accurate posterior
approximation) and comparing the results. However, VB has been useful
in practice to the extent that \prettyref{cond:vb_accurate} holds
true for at least some parameters of interest. We provide some intuition
for when \prettyref{cond:vb_accurate} might hold in
\prettyref{subsec:laplace_experiments},
and will evaluate \prettyref{cond:vb_accurate} in each of our experiments
below by comparing the VB and MCMC posterior approximate means and
sensitivities.

Since \prettyref{cond:vb_accurate} holds only for a particular choice
of $\gtheta$, it is weaker than the assumption that $\qthetapost$
is close to $\pthetapost$ in KL divergence, or even that all the
posterior means are accurately estimated. For example, as discussed
in Appendix B of \citet{giordano:2015:lrvb} and in Section 10.1.2
of \citet{bishop:2006:pattern}, a mean-field approximation to a multivariate
normal posterior produces inaccurate covariances and may have an arbitrarily
bad KL divergence from $\pthetapost$, but \prettyref{cond:vb_accurate}
holds exactly for the location parameters. We discuss the multivariate
normal example further in \prettyref{subsec:lrvb_cov} and
\prettyref{subsec:laplace_experiments} below.
 
\section{Calculation and Uses of Sensitivity
\label{sec:sensitivity_in_action}}
In this section, we discuss two applications of \prettyref{thm:sens_cov}
and \prettyref{thm:lrvb_formula}: calculating improved
covariance estimates and prior sensitivity measures for MFVB. Throughout
this section, we will assume that we can apply \prettyref{thm:sens_cov}
and \prettyref{thm:lrvb_formula} unless stated otherwise.

\subsection{Covariances for Variational Bayes\label{subsec:lrvb_cov}}

Consider the mean field approximating family, $\mathcal{Q}_{mf}$,
from \prettyref{subsec:variational_Bayes} and a fixed exact posterior
$\pzeropost\left(\theta\right)$. It is well known that the resulting
marginal variances also tend to be under-estimated even when
parameters means are well-estimated
(see, e.g., \citep[Chapter 10]{mackay:2003:information,wang:2005:inadequacy,turner:2011:two,bishop:2006:pattern}).
Even more obviously, any $q\in\mathcal{Q}_{mf}$ yields zero as its estimate of
the covariance between sub-components of $\theta$ that are in different
factors of the mean field approximating family. It is therefore unreasonable
to expect that $\qcov\left(\gtheta\right)\approx\pcov\left(\gtheta\right)$.
However, if \prettyref{cond:vb_accurate} holds, we may expect the
sensitivity of MFVB means to certain perturbations to be accurate by
\prettyref{cond:vb_accurate}, and, by \prettyref{thm:sens_cov},
we expect the corresponding covariances to be accurately estimated
by the MFVB sensitivity. In particular, by taking
$\covdens=\alpha^{\trans}g\left(\theta\right)$
and $\alpha_{0}=0$, we
have by \prettyref{cond:vb_accurate} that
\begin{align}
\left.\frac{d\mbe_{\qthetapost}\left[\gtheta\right]}{d\alpha^{\trans}}\right|_{\alpha=0} &
    \approx\left.\frac{d\mbe_{\pthetapost}\left[\gtheta\right]}
    {d\alpha^{\trans}}\right|_{\alpha=0}=\pcov\left(g\left(\theta\right)\right).
    \label{eq:sensitivity_is_cov}
\end{align}
We can consequently use \prettyref{thm:lrvb_formula} to provide an
estimate of $\pcov\left(g\left(\theta\right)\right)$ that may be
superior to $\qcov\left(\gtheta\right)$. With this motivation in
mind, we make the following definition.
\begin{lrvb_definition}
\label{def:lrvb_covariance}The \textit{linear response variational
Bayes (LRVB) approximation}, $\lrvbcov\left(\gtheta\right)$, is given
by
\begin{align}
\lrvbcov\left(\gtheta\right) & :=\eggrad\klhess^{-1}\eggrad^{\trans}.
\label{eq:lrvb_for_covariance}
\end{align}
\end{lrvb_definition}
\begin{lrvb_corollary}
\label{cor:lrvb_accurate}For a given $\pzeropost\left(\theta\right)$,
class $\mathcal{Q}$, and function $\gtheta$, when \vbassum and
\prettyref{cond:vb_accurate} hold for $\covdens=\alpha^{\trans}g\left(\theta\right)$
and $\alpha_{0}=0$, then
\begin{align*}
\lrvbcov\left(\gtheta\right) & \approx\pcov\left(\gtheta\right).
\end{align*}
\end{lrvb_corollary}
The strict optimality of $\etaoptzero$ in \prettyref{assu:opt_interior}
guarantees that $\klhess$ will be positive definite and symmetric,
and, as desired, the covariance estimate $\lrvbcov\left(\gtheta\right)$
will be positive semidefinite and symmetric. Since the optimal value
of every component of $\mbe_{\qthetapost}\left[\gtheta\right]$ may
be affected by the log perturbation $\alpha^{\trans}\gtheta$, $\lrvbcov\left(\gtheta\right)$
can estimate non-zero covariances between elements of $\gtheta$ even
when they have been partitioned into separate factors of the mean
field approximation.

Note that $\lrvbcov\left(\gtheta\right)$ and $\qcov\left(\gtheta\right)$
differ only when there are at least some moments of $\pthetapost[0]$
that $\qthetapost[0]$ fails to accurately estimate. In particular,
if $\qthetapost$ provided a good approximation to $\pthetapost$
for both the first and second moments of $\gtheta$, then we would
have $\lrvbcov\left(\gtheta\right)\approx\qcov\left(g\left(\theta\right)\right)$
since, for $\qzeropost$ and $\pzeropost$,
\begin{align*}
\mbe_{\qzeropost}\left[\gtheta\right] &\approx \mbe_{\pzeropost}\left[\gtheta\right]\textrm{ and }\\
\mbe_{\qzeropost}\left[\gtheta\gtheta^{\trans}\right] &\approx \mbe_{\pzeropost}\left[\gtheta\gtheta^{\trans}\right]\Rightarrow\\
\qcov\left(g\left(\theta\right)\right) &\approx \pcov\left(\gtheta\right),
\end{align*}
and, for $\qthetapost$ and $\pthetapost$,
\begin{align*}
\mbeq\left[\gtheta\right]& \approx\epgtheta \Rightarrow\\
\lrvbcov\left(\gtheta\right) & \approx\pcov\left(\gtheta\right).
\end{align*}
Putting these two approximate equalities together, we see that, when the first and second moments
of $\qthetapost$ approximately match those of $\pthetapost$,
\begin{align*}
\qcov\left(g\left(\theta\right)\right) & \approx\lrvbcov\left(\gtheta\right).
\end{align*}
However, in general, $\lrvbcov\left(\gtheta\right)\ne\qcov\left(g\left(\theta\right)\right)$.
In this sense, any discrepancy between $\lrvbcov\left(\gtheta\right)$
and $\qcov\left(g\left(\theta\right)\right)$ indicates an inadequacy
of the variational approximation for at least the second moments of
$\gtheta$.

Let us consider a simple concrete illustrative example which will
demonstrate both how $\qcov\left(\gtheta\right)$ can be a poor approximation
to $\pcov\left(\gtheta\right)$ and how $\lrvbcov\left(\gtheta\right)$
can improve the approximation for some moments but not others. Suppose
that the exact posterior is a bivariate normal,
\begin{align}
\pzeropost\left(\theta\right) & =\mathcal{N}\left(\theta \vert \mu,\covmat\right),\label{eq:normal_example}
\end{align}
where $\theta=\left(\theta_{1},\theta_{2}\right)^{\trans}$, $\mu=\left(\mu_{1},\mu_{2}\right)^{\trans}$,
$\covmat$ is invertible, and $\infomat:=\covmat^{-1}$. One may think
of $\mu$ and $\covmat$ as known functions of $x$
via Bayes' theorem, for example, as given by a normal-normal conjugate
model. Suppose we use the MFVB approximating family
\begin{align*}
\mathcal{Q}_{mf} & =\left\{ q\left(\theta\right):q\left(\theta\right)=q\left(\theta_{1}\right)q\left(\theta_{2}\right)\right\} .
\end{align*}
One can show (see \prettyref{app:mvn_exact}) that the optimal
MFVB approximation to $\pthetapost$ in the family $\mathcal{Q}_{mf}$
is given by
\begin{align*}
\qzeropost\left(\theta_{1}\right) & =\mathcal{N}\left(\theta_{1} \vert \mu_{1},\infomat_{11}^{-1}\right)\\
\qzeropost\left(\theta_{2}\right) & =\mathcal{N}\left(\theta_{2} \vert \mu_{2},\infomat_{22}^{-1}\right).
\end{align*}
Note that the posterior mean of $\theta_{1}$ is exactly estimated
by the MFVB procedure:
\begin{align*}
\mbe_{\qzeropost}\left[\theta_{1}\right] & =\mu_{1}=\mbe_{\pzeropost}\left[\theta_{1}\right].
\end{align*}
However, if $\covmat_{12}\ne0$, then $\infomat_{11}^{-1}<\covmat_{11}$,
and the variance of $\theta_{1}$ is underestimated. It follows that
the expectation of $\theta_{1}^{2}$ is \textit{not} correctly estimated
by the MFVB procedure:
\begin{align*}
\mbeq\left[\theta_{1}^{2}\right] & =\mu_{1}^{2}+\infomat_{11}^{-1}<\mu_{1}^{2}+\covmat_{11}=\mbep\left[\theta_{1}^{2}\right].
\end{align*}
An analogous statement holds for $\theta_{2}$. Of course, the covariance
is also mis-estimated if $\covmat_{12}\ne0$ since, by construction
of the MFVB approximation,
\begin{align*}
\qcov\left(\theta_{1},\theta_{2}\right) & =0\ne\covmat_{12}=\pcov\left(\theta_{1},\theta_{2}\right).
\end{align*}

Now let us take the log perturbation
$\covdens=\theta_{1}\alpha_{1}+\theta_{2}\alpha_{2}$. For all $\alpha$ in a
neighborhood of zero, the log-perturbed posterior given by
\prettyref{eq:tilting_definition} remains multivariate normal, so it remains the
case that, as a function of $\alpha$,
$\mbe_{\qthetapost}\left[\theta_{1}\right]=\mbe_{\pthetapost}\left[\theta_{1}\right]$
and
$\mbe_{\qthetapost}\left[\theta_{2}\right]=\mbe_{\pthetapost}\left[\theta_{2}\right]$.
Again, see \prettyref{app:mvn_exact} for a detailed proof. Consequently,
\prettyref{cond:vb_accurate} holds with equality (not approximate equality) when
$\gtheta=\theta$. However, since the second moments are not accurate
(irrespective of $\alpha$), \prettyref{cond:vb_accurate} does not hold exactly when
$\gtheta=\left(\theta_{1}^{2},\theta_{2}^{2}\right)^{\trans}$, nor when
$\gtheta=\theta_{1}\theta_{2}$. (\prettyref{cond:vb_accurate} may still hold
approximately for second moments when $\covmat_{12}$ is small.) The fact that
\prettyref{cond:vb_accurate} holds with equality for $\gtheta=\theta$ allows us
to use \prettyref{thm:sens_cov} and \prettyref{thm:lrvb_formula} to calculate
$\lrvbcov\left(\gtheta\right)=\pcov\left(g\left(\theta\right)\right)$, even
though $\mbe_{\pzeropost}\left[\theta_{1}\theta_{2}\right]$ and
$\mbe_{\pzeropost}\left[\left(\theta_{1}^{2},\theta_{2}^{2}\right)^{\trans}\right]$
are mis-estimated.

In fact, when \prettyref{cond:vb_accurate} holds with equality for some
$\theta_{i}$, then the estimated covariance in
\prettyref{eq:lrvb_for_covariance} for all terms involving $\theta_{i}$ will be
exact as well. \prettyref{cond:vb_accurate} holds with equality for the means of
$\theta_{i}$ in the bivariate normal model above, and in fact holds for the
general multivariate normal case, as described in \prettyref{app:mvn_exact}.
Below, in \prettyref{sec:experiments}, in addition to robustness measures, we
will also report the accuracy of \prettyref{eq:lrvb_for_covariance} for
estimating posterior covariances. We find that, for most parameters of interest,
particularly location parameters, $\lrvbcov\left(g\left(\theta\right)\right)$
provides a good approximation to $\pcov\left(g\left(\theta\right)\right)$.

\subsection{Linear Response Covariances in Previous Literature}

The application of sensitivity measures to VB problems for the purpose
of improving covariance estimates has a long history under the name
``linear response methods.'' These methods originated in the statistical
physics literature
\citep{tanaka:2000:information,opper:2001:advancedmeanfield}
and have been applied to various statistical and machine learning
problems
\citep{kappen:1998:efficient,tanaka:1998:mean,welling:2004:linear,opper:2003:variational}.
The current paper, which builds on this line of work and on our earlier work
\citep{giordano:2015:lrvb}, represents a simplification and generalization
of classical linear response methods and serves to elucidate the
relationship between these methods and the local robustness literature.
In particular, while \citet{giordano:2015:lrvb} focused on moment-parameterized
exponential families, we derive linear-response covariances for generic
variational approximations and connect the linear-response methodology
to the Bayesian robustness literature.

A very reasonable approach to address the inadequacy of MFVB covariances is
simply to increase the expressiveness of the model class $\mathcal{Q}$---although, as noted by \citet{turner:2011:two}, increased expressiveness
does not necessarily lead to better posterior moment estimates. This approach is
taken by much of the recent VB literature
\citep[e.g.,][]{tran:2015:copula,tran:2015:gp,ranganath:2015:hierarchical,rezende:2015:flows,liu:2016:stein}.
Though this research direction remains lively and promising, the
use of a more complex class $\mathcal{Q}$ sometimes sacrifices the
speed and simplicity that made VB attractive in the first place, and
often without the relatively well-understood convergence guarantees
of MCMC. We also stress that the current work is not necessarily
at odds with the approach of increasing expressiveness. Sensitivity
methods can be a supplement to any VB approximation for which our
estimators, which require solving a linear system involving the Hessian
of the KL divergence, are tractable.

\subsection{The Laplace Approximation and Linear Response Covariances
\label{subsec:laplace_approx}}

In this section, we briefly compare linear response covariances to
the Laplace approximation \citep[Chapter 13]{gelman:2014:bayesian}.
The Laplace approximation to $\pzeropost\left(\theta\right)$ is formed
by first finding the ``maximum \textit{a posteriori}'' (MAP) estimate,
\begin{align}
\laptheta & :=\argmax_{\theta}\pzeropost\left(\theta\right),\label{eq:laplace_opt}
\end{align}
and then forming the multivariate normal posterior approximation
\begin{align}
\laphess & :=-\left.\frac{\partial^{2}\pzeropost\left(\theta\right)}
    {\partial\theta\partial\theta^{\trans}}\right|_{\laptheta}
    \label{eq:laplace_hess}\\
\covlap\left(\theta\right) & :=\laphess^{-1}\nonumber \\
\lap\left(\theta\right) &
    :=\mathcal{N}\left(\theta \vert \laptheta,\covlap\left(\theta\right)\right).
    \label{eq:laplace_approx}
\end{align}
Since both LRVB and the Laplace approximation require the solution
of an optimization problem (\prettyref{eq:kl_divergence} and
\prettyref{eq:laplace_opt}
respectively) and the estimation of covariances via an inverse Hessian
of the optimization objective (\prettyref{eq:lrvb_for_covariance}
and \prettyref{eq:laplace_hess} respectively), it will be instructive
to compare the two approaches.

Following \citet{neal:1998:variationalEM}, we can, in fact, view the
MAP estimator as a special variational approximation, where
we define
\begin{align*}
\qlap:= & \Big\{ q\left(\theta;\theta_{0}\right):
    \int q\left(\theta;\theta_{0}\right)\log\pzeropost\left(\theta\right)
    \lambda\left(d\theta\right)=\log\pzeropost\left(\theta_{0}\right)\textrm{ and }\\
 & \quad\quad\int q\left(\theta;\theta_{0}\right)\log q\left(\theta;\theta_{0}\right)
 \lambda\left(d\theta\right)=\constant\Big\},
\end{align*}
where the $\constant$ term is constant in $\theta_{0}$. That is,
$\qlap$ consists of ``point masses'' at $\theta_{0}$ with constant
entropy. Generally such point masses may not be defined as densities
with respect to $\lambda$, and the $KL$ divergence in \prettyref{eq:kl_divergence}
may not be formally defined for $q\in\qlap$. However, if $\qlap$
can be approximated arbitrarily well by well-defined densities (e.g.,
normal distributions with variance fixed at an arbitrarily small number),
then we can use $\qlap$ as a heuristic tool for understanding the
MAP estimator.

Since $\qlap$ contains only point masses, the covariance of the variational
approximation is the zero matrix: $\cov_{\lap}\left(\theta\right)=0$.
Thus, as when one uses the mean field assumption, $\cov_{\lap}\left(\theta\right)$
underestimates the marginal variances and magnitudes of the covariances
of $\pcov\left(\theta\right)$. Of course, the standard Laplace approximation
uses $\covlap\left(\theta\right)$, not $\cov_{\lap}\left(\theta\right)$,
to approximate $\pcov\left(\theta\right)$. In fact,
$\covlap\left(\theta\right)$
is equivalent to a linear response covariance matrix calculated for
the approximating family $\qlap$:
\begin{align*}
KL\left(q\left(\theta;\theta_{0}\right)||\pzeropost\left(\theta\right)\right) &
    =-\log\pzeropost\left(\theta_{0}\right)-\constant\Rightarrow\\
\laptheta & =\argmax_{\theta}\pzeropost\left(\theta\right) =
    \argmin_{\theta_{0}}KL\left(q\left(\theta;\theta_{0}\right)||\pzeropost
    \left(\theta\right)\right)=\theta_{0}^{*}\\
\laphess & =-\left.\frac{\partial^{2}\pzeropost\left(\theta\right)}
    {\partial\theta\partial\theta^{\trans}}\right|_{\laptheta} =
    -\left.\frac{\partial^{2}
        KL\left(q\left(\theta;\theta_{0}\right)||
            \pzeropost\left(\theta\right)\right)}{\partial\theta_{0}
            \partial\theta_{0}^{\trans}}\right|_{\theta_{0}^{*}}=\klhess.
\end{align*}
So $\laptheta=\theta_{0}^{*}$, $\laphess=\klhess$,
and $\covlap\left(\theta\right)=\lrvbcov\left(\theta\right)$
for the approximating family $\qlap$.

From this perspective, the accuracy of the Laplace approximation depends
precisely on the extent to which \prettyref{cond:vb_accurate} holds
for the family of point masses $\qlap$. Typically, VB approximations
use a $\mathcal{Q}$ that is more expressive than $\qlap$,
and we might expect \prettyref{cond:vb_accurate} to be more likely
to apply for a more expressive family. It follows that we might expect
the LRVB covariance estimate $\lrvbcov$ for general $\mathcal{Q}$
to be more accurate than the Laplace covariance approximation $\covlap$.
We demonstrate the validity of this intuition in the experiments of
\prettyref{sec:experiments}.

\subsection{Local Prior Sensitivity for MFVB \label{subsec:lrvb_robustness} }

We now turn to estimating prior sensitivities for MFVB estimates---the
variational analogues of $\psens$ in \prettyref{def:exact_sensitivity}.
First, we define the variational local sensitivity.
\begin{lrvb_definition}
The \textit{local sensitivity} of $\eqgtheta$ to prior parameter
$\alpha$ at $\alpha_{0}$ is given by
\begin{align*}
\qsens & :=\left.\frac{d\eqgtheta}{d\alpha}\right|_{\alpha_{0}}.
\end{align*}
\end{lrvb_definition}
\begin{lrvb_corollary}
\label{cor:sens_vb_prior}Suppose that \vbassum and \prettyref{cond:vb_accurate}
hold for some $\alpha_{0}\in\mathcal{A}$ and for
\begin{align*}
\covdens[][\alpha] & =
    \log p\left(x\vert\theta,\alpha\right) +
    \log p\left(\theta\vert\alpha\right)-
    \log p\left(x\vert\theta,\alpha_{0}\right)-
    \log p\left(\theta\vert\alpha\right).
\end{align*}
Then $\qsens\approx\psens$.
\end{lrvb_corollary}
\prettyref{cor:sens_vb_prior} states that, as with the covariance approximations
in \prettyref{subsec:lrvb_cov}, $\qsens$ is a useful approximation to $\psens$
to the extent that \prettyref{cond:vb_accurate} holds---that is, to
the extent that the MFVB means are good approximations to the exact means for
the prior perturbations $\alpha\in\alphazeroball$.

Under the $\covdens$ given in \prettyref{cor:sens_vb_prior},
\prettyref{thm:lrvb_formula}
gives the following formula for the variational local sensitivity:
\begin{align}
\qsens & =\eggrad\klhess^{-1}\left.\frac{\partial}{\partial\eta^{\trans}}
    \mbe_{q\left(\theta;\eta\right)}
    \left[\left.\frac{\partial\covdens[][\alpha]}{\partial\alpha}
    \right|_{\alpha_{0}}\right]\right|_{\etaoptzero}.
    \label{eq:variational_local_robustness}
\end{align}

We now use \prettyref{eq:variational_local_robustness} to reproduce
MFVB versions of some standard robustness measures found in the existing
literature. A simple case is when the prior $p\left(\theta\vert\alpha\right)$
is believed to be in a given parametric family, and we are simply
interested in the effect of varying the parametric family's parameters
\citep{basu:1996:local,giordano:2016:robust}. For illustration, we
first consider a simple example where $p\left(\theta\vert\alpha\right)$
is in the exponential family, with natural sufficient statistic $\theta$
and log normalizer $A\left(\alpha\right)$, and we take $g\left(\theta\right)=\theta$.
In this case,
\begin{eqnarray*}
\log p\left(\theta\vert\alpha\right) & = &
    \alpha^{\trans}\theta-A\left(\alpha\right)\\
\efhess & = & \left.\frac{\partial}{\partial\eta^{\trans}}
    \mbe_{q\left(\theta;\eta\right)}\left[\left.\frac{\partial}
    {\partial\alpha}\left(\alpha^{\trans}\theta -
    A\left(\alpha\right)\right)\right|_{\alpha_{0}}\right]\right|_{\etaoptzero}\\
 & = & \left.\left(\frac{\partial}{\partial\eta^{\trans}}
    \mbe_{q\left(\theta;\eta\right)}\left[\theta\right]-\left.
    \frac{\partial}{\partial\eta^{\trans}}
    \frac{\partial A\left(\alpha\right)}
        {\partial\alpha}\right|_{\alpha_{0}}\right)\right|_{\etaoptzero}\\
 & = & \left.\frac{\partial}{\partial\eta^{\trans}}
    \mbe_{q\left(\theta;\eta\right)}\left[\theta\right]\right|_{\etaoptzero}\\
 & = & \eggrad.
\end{eqnarray*}
Note that when $\efhess=\eggrad$,
\prettyref{eq:variational_local_robustness}
is equivalent to \prettyref{eq:lrvb_for_covariance}. So we see that
\begin{align*}
\qsens & =\lrvbcov\left(\theta\right).
\end{align*}
In this case, the sensitivity is simply the linear response covariance
estimate of the covariance, $\lrvbcov\left(\theta\right)$. By
the same reasoning, the exact posterior sensitivity is given by
\begin{align*}
\psens & =\pcov\left(\theta\right).
\end{align*}
Thus, $\qsens\approx\psens$ to the extent that
$\lrvbcov\left(\theta\right)\approx\pcov\left(\theta\right)$, which again holds
to the extent that \prettyref{cond:vb_accurate} holds. Note that if we had used
a mean field assumption and had tried to use the direct, uncorrected response
covariance $\qcov\left(\theta\right)$ to try to evaluate $\qsens$, we would have
erroneously concluded that the prior on one component, $\theta_{k_{1}}$, would
not affect the posterior mean of some other component, $\theta_{k_{2}}$, for
$k_{2}\ne k_{1}$.

Sometimes it is easy to evaluate the derivative of the log prior even when it is
not easy to normalize it. As an example, we will show how to calculate the local
sensitivity to the concentration parameter of an LKJ prior
\citep{lewandowski:2009:lkj} under an inverse Wishart variational approximation.
The LKJ prior is defined as follows. Let $\covmat$ (as part of $\theta$) be an unknown $K\times K$
covariance matrix. Define the $K\times K$ scale matrix
$\lkjdiagmat$ such that
\begin{align*}
\lkjdiagmat_{ij} & =\begin{cases}
\sqrt{\covmat_{ij}} & \textrm{if }i=j\\
0 & \textrm{otherwise}.
\end{cases}
\end{align*}
Using $\lkjdiagmat$, define the correlation matrix $\lkjcorrmat$ as
\begin{align*}
\lkjcorrmat & =\lkjdiagmat^{-1}\covmat\lkjdiagmat^{-1}.
\end{align*}
The LKJ prior on the covariance matrix $\lkjcorrmat$ with concentration
parameter $\alpha>0$ is given by:
\begin{align*}
p_{\mathrm{LKJ}}\left(\lkjcorrmat\vert\alpha\right) & \propto\left|\lkjcorrmat\right|^{\alpha-1}.
\end{align*}
The Stan manual recommends the use of $p_{\mathrm{LKJ}}$, together
with an independent prior on the diagonal entries of the scaling matrix
$\lkjdiagmat$, for the prior on a covariance matrix that appears in a hierarchical
model \citep[Chapter 9.13]{stan-manual:2015}.

Suppose that we have chosen the variational approximation
\begin{align*}
q\left(\covmat\right) & :=\textrm{InverseWishart}\left(\covmat\vert\lkjparam,\nu\right),
\end{align*}
where $\lkjparam$ is a positive definite scale matrix and $\nu$
is the number of degrees of freedom. In this case, the variational parameters
are $\eta=\left(\lkjparam,\nu\right)$. We write $\eta$ with the understanding that
we have stacked only the upper-diagonal elements of $\lkjparam$
since $\lkjparam$ is constrained to be symmetric and $\etaopt$ must
be interior. As we show in \prettyref{app:lkj},
\begin{align*}
\mbe_{q}\left[\log p_{\mathrm{LKJ}}\left(\lkjcorrmat\vert\alpha\right)\right] &
    =\left(\alpha-1\right)\left(\log\left|\lkjparam\right| -
    \psi_{K}\left(\frac{\nu}{2}\right) -
    \sum_{k=1}^{K}\log\left(\frac{1}{2}\lkjparam_{kk}\right)+
    K\psi\left(\frac{\nu-K+1}{2}\right)\right)+\constant,
\end{align*}
where $\constant$ contains terms that do not depend on $\alpha$,
and where $\psi_{K}$ denotes the multivariate digamma function. Consequently,
we can evaluate
\begin{align}
\efhess & =\left.\frac{\partial}{\partial\eta^{\trans}}
\mbe_{q\left(\theta;\eta\right)}\left[\frac{\partial}{\partial\alpha}
\log p\left(\covmat\vert\alpha\right)\right]\right|_{\eta=\etaoptzero,\alpha =
    \alpha_{0}}\nonumber \\
 & =\left.\frac{\partial}{\partial\eta^{\trans}}
    \left(\log\left|\lkjparam\right|-\psi_{K}\left(
        \frac{n}{2}\right)-\sum_{k=1}^{K}
        \log\left(\frac{1}{2}\lkjparam_{kk}\right)+
        K\psi\left(\frac{n-K+1}{2}\right)\right)\right|_{\etaoptzero}.
        \label{eq:lkj_prior_vb}
\end{align}
This derivative has a closed form, but the bookkeeping required
to represent an unconstrained parameterization of the matrix $\lkjparam$
within $\eta$ would be tedious. In practice, we evaluate terms like
$\efhess$ using automatic differentiation tools \citep{baydin:2015:autodiff}.

Finally, in cases where we cannot evaluate $\mbe_{q\left(\theta;\eta\right)}\left[\log p\left(\theta\vert\alpha\right)\right]$
in closed form as a function of $\eta$, we can use numerical techniques
as described in \prettyref{subsec:lrvb_implementation}. We thus view
$\qsens$ as the exact sensitivity to an approximate KL divergence.

\subsection{Practical Considerations when Computing the Sensitivity
of Variational Approximations\label{subsec:lrvb_implementation}}

We briefly discuss practical issues in the computation of
\prettyref{eq:lrvb_formula}, which requires calculating the product
$\eggrad\klhess^{-1}$ (or, equivalently, $\klhess^{-1}\eggrad^{\trans}$
since $\klhess$ is symmetric). Calculating $\klhess$ and solving
this linear system can be the most computationally intensive part
of computing \prettyref{eq:lrvb_formula}.

We first note that it can be difficult and time consuming in practice
to manually derive and implement second-order derivatives. Even a
small programming error can lead to large errors in \prettyref{thm:lrvb_formula}.
To ensure accuracy and save analyst time, we evaluated all the requisite
derivatives using the Python \texttt{autograd} automatic differentiation
library \citep{maclaurin:2015:autograd} and the Stan math automatic
differentiation library \citep{carpenter:2015:stan}.

Note that the dimension of $\klhess$ is as large as that of $\eta$,
the parameters that specify the variational distribution $q\left(\theta;\eta\right)$.
Many applications of MFVB employ many latent variables, the number of
which may even scale with the amount of data---including several of the
cases that we examine in \prettyref{sec:experiments}. However, these applications
typically have special structure that render $\klhess$ sparse, allowing the
practitioner to calculate $\eggrad\klhess^{-1}$ quickly. Consider,
for example, a model with ``global'' parameters, $\theta_{glob}$,
that are shared by all the individual datapoint likelihoods, and ``local''
parameters, $\theta_{loc,n}$, associated with likelihood of a single
datapoint indexed by $n$. By ``global'' and ``local'' we mean
the likelihood and assumed variational distribution factorize as
\begin{align}
p\left(x,\theta_{glob},\theta_{loc,1},...,\theta_{loc,N}\right) &
    =p\left(\theta_{glob}\right)\prod_{n=1}^{N}p\left(x\vert\theta_{loc,n},
    \theta_{glob}\right)p\left(\theta_{loc,n}\vert\theta_{glob}\right)
    \label{eq:global_local}\\
q\left(\theta;\eta\right) &
    =q\left(\theta_{glob};\eta_{glob}\right)\prod_{n=1}^{N}
    q\left(\theta_{loc,n};\eta_{n}\right)
    \textrm{ for all }q\left(\theta;\eta\right)\in\mathcal{Q}.\nonumber
\end{align}
In this case, the second derivatives of the variational objective
between the parameters for local variables vanish:
\begin{align*}
\textrm{ for all }n\ne m\textrm{, }
\frac{\partial^{2}KL\left(q\left(\theta;\eta\right)||
    \pzeropost\left(\theta\right)\right)}{\partial\eta_{loc,n}
    \partial\eta_{loc,m}^{\trans}} & =0.
\end{align*}
The model in \prettyref{subsec:glmm_model} has such a global / local structure;
see \prettyref{subsec:glmm_inference} for more details. Additional discussion,
including the use of Schur complements to take advantage of sparsity in the log
likelihood, can be found in \citet{giordano:2015:lrvb}.

When even calculating or instantiating $\klhess$ is prohibitively
time-consuming, one can use conjugate gradient algorithms to approximately
compute $\klhess^{-1}\eggrad^{\trans}$ \citep[Chapter
5]{nocedalwright:1999:numerical}. The advantage of conjugate gradient algorithms
is that they approximate $\klhess^{-1}\eggrad^{\trans}$ using only the
Hessian-vector product $\klhess\eggrad^{\trans}$,
which can be computed efficiently using automatic differentiation without ever
forming the full Hessian $\klhess$. See, for example, the
\texttt{hessian\_vector\_product} method of the Python \texttt{autograd} package
\citep{maclaurin:2015:autograd}. Note that a separate conjugate gradient problem
must be solved for each column of $\eggrad^{\trans}$, so if the parameter of
interest $\gtheta$ is high-dimensional it may be faster to pay the price for
computing and inverting the entire matrix $\klhess$. See
\ref{subsec:glmm_inference} for more discussion of a specific example.

In \prettyref{thm:lrvb_formula}, we require $\etaoptzero$ to be at a true local
optimum. Otherwise the estimated sensitivities may not be reliable (e.g., the
covariance implied by \prettyref{eq:lrvb_for_covariance} may not be positive
definite). We find that the classical MFVB coordinate ascent algorithms
(\citet[Section 2.4]{blei:2016:variational}) and even quasi-second order
methods, such as BFGS \citep[e.g.,][]{regier:2015:celeste}, may not actually find
a local optimum unless run for a long time with very stringent convergence
criteria. Consequently, we recommend fitting models using second-order Newton
trust region methods. When the Hessian is slow to compute directly, as in
\prettyref{sec:experiments}, one can use the conjugate gradient trust region
method of \citet[ Chapter 7]{nocedalwright:1999:numerical}, which takes
advantage of fast automatic differentiation Hessian-vector products without
forming or inverting the full Hessian.

\section{Experiments
\label{sec:experiments} }

We now demonstrate the speed and effectiveness of linear response methods on a
number of simulated and real data sets.  We begin with simple simulated data to
provide intuition for how linear response methods can improve estimates of
covariance relative to MFVB and the Laplace approximation.  We then develop
linear response covariance estimates for ADVI and apply them to four real-world
models and data sets taken from the Stan examples library \citep{stan-examples:2017}. Finally, we calculate both
linear response covariances and prior sensitivity measures for a large-scale
industry data set.  In each case, we compare linear response methods with
ordinary MFVB, the Laplace approximation, and MCMC. We show that linear response
methods provide the best approximation to MCMC while still retaining the
speed of approximate methods. Code and instructions to reproduce the results of
this section can be found in the git repository
\href{https://github.com/rgiordan/CovariancesRobustnessVBPaper}{\texttt{rgiordan/CovariancesRobustnessVBPaper}}.
 
\subsection{Simple Expository Examples
\label{subsec:laplace_experiments}}

\newcommand{\simpleNumVBSamples}{10000}

In this section we provide a sequence of simple examples comparing MFVB and LRVB
with Laplace approximations. These examples provide intuition for the
covariance estimate $\lrvbcov\left(\gtheta\right)$ and illustrate how the
sensitivity analysis motivating $\lrvbcov\left(\gtheta\right)$ differs from the
local posterior approximation motivating $\covlap\left(\gtheta\right)$.

For each example, we will explicitly specify the target posterior
$\pzeropost\left(\theta\right)$ using a mixture of normals. This
will allow us to define known target distributions with varying degrees
of skewness, over-dispersion, or correlation and compare the truth
with a variational approximation. Formally, for some fixed $K_{z}$,
component indicators $z_{k}$, $k=1,...,K_{z}$, component probabilities
$\pi_{k}$, locations $\mu_{k}$, and covariances $\Sigma_{k}$, we
set
\begin{align*}
p\left(z\right) & =\prod_{k=1}^{K_{z}}\pi_{k}^{z_{k}}\\
\pzeropost\left(\theta\right) & =\sum_{z}p\left(z\right)
    p\left(\theta\vert z\right)=\sum_{z}p\left(z\right)
    \prod_{k=1}^{K_{z}}\mathcal{N}\left(\theta;m_{k},\covmat_{k}\right)^{z_{k}}.
\end{align*}
The values $\pi$, $m$ and $\Sigma$ will be chosen to achieve the desired shape
for each example using up to $K_{z}=3$ components. There will be no need to
state the precise values of $\pi$, $m$, and $\Sigma$; rather, we will
show plots of the target density and report the marginal means and variances,
calculated by Monte Carlo.\footnote{MFVB is often
used to approximate the posterior when the Bayesian generative model for data $x$ is a mixture model
(e.g., \citet{blei:2003:lda}). By contrast, we note for clarity that we
are \emph{not} using the mixture model as a generative model for $x$ here.
E.g., $z$ is not one of the parameters composing $\theta$, and we are not approximating the
distribution of $z$ in the variational distribution $q\left(\theta\right)$.
Rather, we are using mixtures as a way of flexibly defining skewed and
over-dispersed targets, $p\left(\theta\right)$.}

We will be interested in estimating the mean and variance of the first
component, so we take $\gtheta=\theta_{1}$. Consequently, in order
to calculate $\lrvbcov$$\left(\theta_{1}\right)$, we will be considering
the perturbation $\covdens=\alpha\theta_{1}$ with scalar $\alpha$
and $\alpha_{0}=0$.

For the variational approximations, we will use a factorizing normal
approximation:
\begin{align*}
\qsimple & =\left\{ q\left(\theta\right):q\left(\theta\right) =
    \prod_{k=1}^{K}\mathcal{N}\left(\theta_{k};
        \mu_{k},\sigma_{k}^{2}\right)\right\} .
\end{align*}
In terms of \prettyref{eq:q_approximating_family}, we take
$\eta=\left(\mu_{1},...,\mu_{K},\log\sigma_{1},...,\log\sigma_{K}\right)^{\trans}$.
Thus
$\mbe_{q\left(\theta;\eta\right)}\left[\gtheta\right]=\mbe_{q\left(\theta;\eta\right)}\left[\theta_{1}\right]=\mu_{1}$.
In the examples below, we will use multiple distinct components in the
definition of $\pzeropost\left(\theta\right)$, so that
$\pzeropost\left(\theta\right)$ is non-normal and
$\pzeropost\left(\theta\right)\notin\qsimple$.

Since the expectation $\mbe_{q\left(\theta;\eta\right)}\left[\log
p\left(\theta\right)\right]$ is intractable, we replace the exact KL divergence
with a Monte Carlo approximation using the ``re-parameterization trick''
\citep{kingma:2013:auto, rezende:2014:stochastic, titsias:2014:doubly}.
Let $\circ$ denote the Hadamard
(component-wise) product. Let $\xi_{m}\iid\mathcal{N}\left(0,I_{K}\right)$ for
$m=1,...,M$. We define
\begin{align*}
\theta_{m} & :=\sigma\circ\xi_{m}+\mu\\
KL_{approx}\left(q\left(\theta;\eta\right)||\pzeropost\left(\theta\right)\right) & :=-\frac{1}{M}\sum_{m=1}^{M}\log\pzeropost\left(\theta_{m}\right)-\sum_{k=1}^{K}\log\sigma_{k},
\end{align*}
which is a Monte Carlo estimate of
$KL\left(q\left(\theta;\eta\right)||\pzeropost\left(\theta\right)\right)$.
We found $M=\simpleNumVBSamples$ to be more than adequate for our present
purposes of illustration. Note that we used the same draws $\xi_{m}$ for both
optimization and for the calculation of $\klhess$ in order to ensure that the
$\etaoptzero$ at which $\klhess$ was evaluated was in fact an optimum. This
approach is similar to our treatment of ADVI;
see \prettyref{subsec:advi} for a more detailed discussion.

\subsubsection{Multivariate Normal Targets}

If we take only a single component in the definition of
$\pzeropost\left(\theta\right)$ ($K_{z}=1$), then
$\pthetapost\left(\theta\right)$ is a multivariate normal distribution for all
$\alpha$, and the Laplace approximation $\lap\left(\theta\right)$ is equal to
$\pthetapost\left(\theta\right)$ for all $\alpha$. Furthermore, as discussed in
\prettyref{subsec:lrvb_cov} and \prettyref{app:mvn_exact}, the variational means
$\mbe_{\qthetapost}\left[\theta\right]=\mu$ are exactly equal to the exact
posterior mean $\mbe_{\pthetapost}\left[\theta\right]=m_{1}$ for all $\alpha$
(even though in general $\qcov\left(\theta\right)\ne\Sigma_{1}$). Consequently,
for all $\alpha$, the variational approximation, the Laplace approximation, and
the exact $\pzeropost\left(\theta\right)$ all coincide in their estimates of
$\mbe\left[\theta\right]$, and by, \prettyref{cor:lrvb_accurate},
$\Sigma=\pcov\left(\theta\right)=\lrvbcov\left(\theta\right)=\covlap\left(\theta\right)$.
Of course, if $\Sigma$ is not diagonal, $\qcov\left(\theta\right)\ne\Sigma$
because of the mean field assumption. Since this argument holds for the whole
vector $\theta$, it holds \textit{a fortiori} for our quantity of interest, the
first component $g\left(\theta\right)=\theta_{1}$.

In other words, the Laplace approximation will differ only from the
LRVB approximation when $\pzeropost\left(\theta\right)$ is not multivariate
normal, a situation that we will now bring about by adding new components
to the mixture; i.e., by increasing $K_{z}$.

\subsubsection{A Univariate Skewed Distribution}

\begin{knitrout}
\definecolor{shadecolor}{rgb}{0.969, 0.969, 0.969}\color{fgcolor}\begin{figure}[t]

{\centering \includegraphics[width=0.98\linewidth,height=0.343\linewidth]{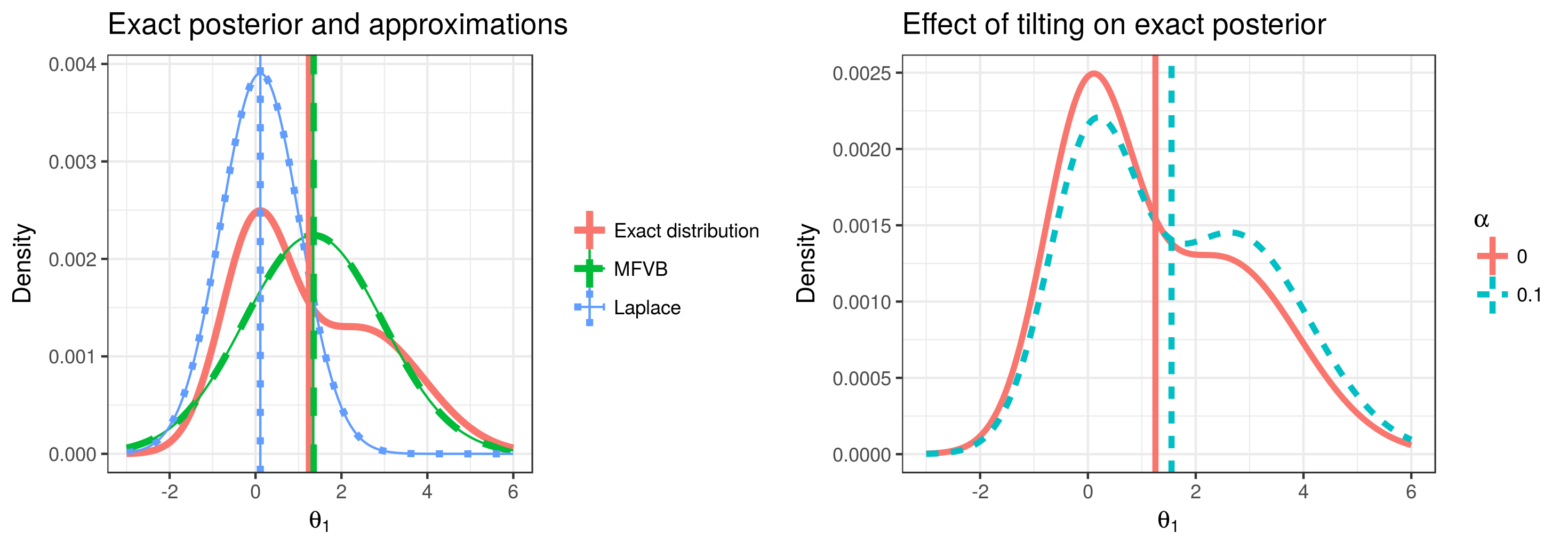} 

}

\caption[A univariate skewed distribution]{A univariate skewed distribution. Vertical lines show the location of the means.}\label{fig:SimpleExampleSkew1d}
\end{figure}

\end{knitrout}

\begin{figure}
\noindent\begin{minipage}[b]{1\columnwidth}
\begin{minipage}[c][1\totalheight][b]{0.49\columnwidth}\begin{knitrout}
\definecolor{shadecolor}{rgb}{0.969, 0.969, 0.969}\color{fgcolor}

{\centering \includegraphics[width=0.98\linewidth,height=0.686\linewidth]{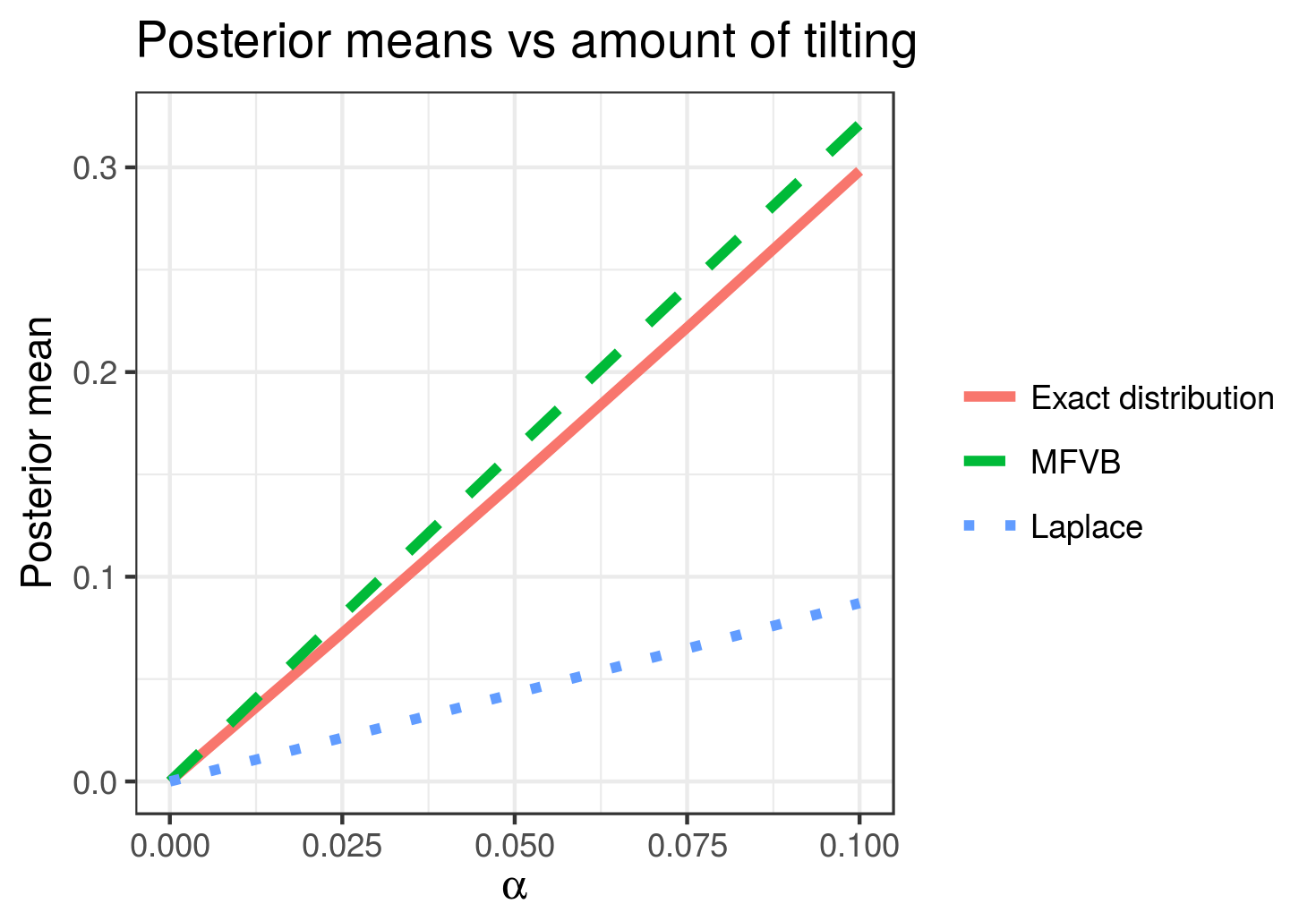} 

}

\end{knitrout}
\end{minipage}
\begin{minipage}[b]{0.49\columnwidth}\begin{center}\begin{tabular}{lrrrr}
  \hline
Metric & Exact & LRVB & MFVB & Laplace \\ 
  \hline
mean & 1.253 &  & 1.345 & 0.111 \\ 
  var & 2.872 & 3.245 & 2.599 & 0.849 \\ 
   \hline
\end{tabular}
\end{center}
\end{minipage}

\end{minipage}
\caption{Effect of tilting on a univariate skew distribution. \label{fig:SimpleExampleSkew1dResult}}
\end{figure}

If we add a second component ($K_{z}=2$), then we can make
$\pzeropost\left(\theta\right)$
skewed, as shown (with the approximations) in \fig{SimpleExampleSkew1d}.
In this case, we expect $\mbe_{\qthetapost}\left[\theta_{1}\right]$
to be more accurate than the Laplace approximation
$\mbe_{\lap}\left[\theta_{1}\right]$
because $\qsimple$ is more expressive than $\qlap$. This intuition
is born out in the left panel of \fig{SimpleExampleSkew1d}. Since
$\laptheta$ uses only information at the mode, it fails to take into
account the mass to the right of the mode, and the Laplace approximation's
mean is too far to the left. The MFVB approximation, in contrast, is
quite accurate for the posterior mean of $\theta_{1}$, even though
it gets the overall shape of the distribution wrong.

This example also shows why, in general, one cannot naively form a
``Laplace approximation'' to the posterior centered at the variational
mean rather than at the MAP. As shown in the left panel of \fig{SimpleExampleSkew1d},
in this case the posterior distribution is actually convex at the
MFVB mean. Consequently, a naive second-order approximation to the log
posterior centered at the MFVB mean would imply a negative variance.

The perturbation $\covdens=\alpha\theta_{1}$ is sometimes also described
as a ``tilting,'' and the right panel of \fig{SimpleExampleSkew1d}
shows the effect of tilting on this posterior approximation. Tilting
increases skew, but the MFVB approximation remains accurate, as shown
in \fig{SimpleExampleSkew1dResult}. Since local sensitivity of
the expectation of $\theta_{1}$ to $\alpha$ is the variance of $\theta_{1}$
(see \prettyref{eq:sensitivity_is_cov}), we have in
\fig{SimpleExampleSkew1dResult} that:
\begin{itemize}
\item The slope of the exact distribution's line is
    $\pcov\left(\theta_{1}\right)$;
\item The slope of the MFVB line is the LRVB variance
    $\lrvbcov\left(\theta_{1}\right)$; and
\item The slope of the Laplace line is $\covlap$$\left(\theta_{1}\right)$.
\end{itemize}
Since the MFVB and exact lines nearly coincide, we expect the LRVB
variance estimate to be quite accurate for this example. Similarly, since the
slope of the Laplace approximation line is lower, we expect the Laplace variance
to underestimate the exact variance. This outcome, which can be seen visually
in the left-hand panel of \fig{SimpleExampleSkew1dResult}, is shown
quantitatively in the corresponding table in the right-hand panel. The
columns of the table contain information for the exact distribution and the
three approximations. The first row, labeled ``mean,'' shows
$\mbe\left[\theta_1\right]$ and the second row, labeled ``var,'' shows
$\cov\left(\theta_1\right)$. (The ``LRVB'' entry for the mean is blank because
LRVB differs from MFVB only in covariance estimates.) We conclude that, in this
case, \prettyref{cond:vb_accurate} holds for $\qsimple$ but not for $\qlap$.

\subsubsection{A Univariate Over-dispersed Distribution}

\begin{knitrout}
\definecolor{shadecolor}{rgb}{0.969, 0.969, 0.969}\color{fgcolor}\begin{figure}[t]

{\centering \includegraphics[width=0.98\linewidth,height=0.343\linewidth]{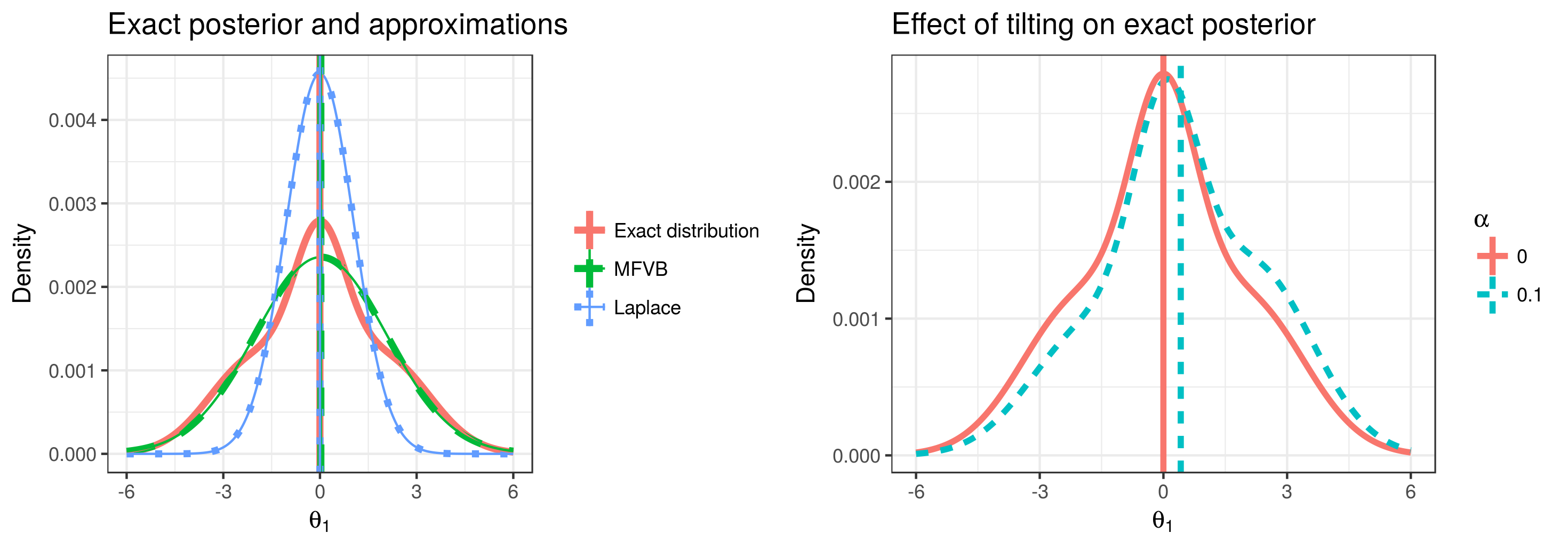} 

}

\caption[A univariate over-dispersed distribution]{A univariate over-dispersed distribution. Vertical lines show the location of the means.}\label{fig:SimpleExampleNoSkew1d}
\end{figure}

\end{knitrout}

\begin{figure}
\noindent\begin{minipage}[b]{1\columnwidth}
\begin{minipage}[c][1\totalheight][b]{0.49\columnwidth}\begin{knitrout}
\definecolor{shadecolor}{rgb}{0.969, 0.969, 0.969}\color{fgcolor}

{\centering \includegraphics[width=0.98\linewidth,height=0.686\linewidth]{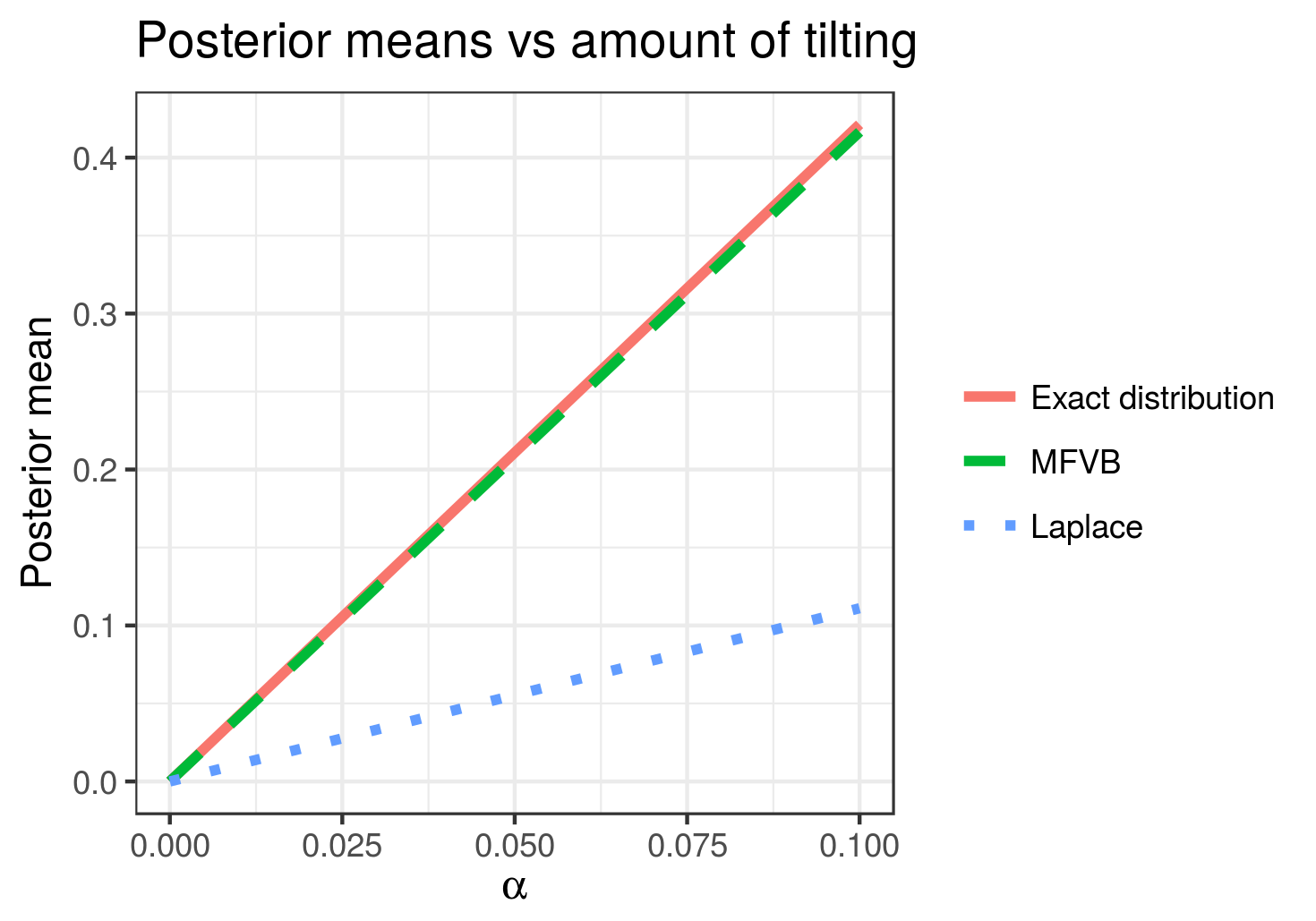} 

}

\end{knitrout}
\end{minipage}
\begin{minipage}[b]{0.49\columnwidth}\begin{center}\begin{tabular}{lrrrr}
  \hline
Metric & Exact & LRVB & MFVB & Laplace \\ 
  \hline
mean & -0.001 &  & 0.027 & -0.000 \\ 
  var & 4.218 & 4.153 & 4.161 & 1.107 \\ 
   \hline
\end{tabular}
\end{center}
\end{minipage}

\end{minipage}
\caption{Effect of tilting on a univariate over-dispersed distribution. \label{fig:SimpleExampleNoSkew1dResult}}
\end{figure}

Having seen how MFVB can outperform the Laplace approximation for a univariate
skewed distribution, we now apply that intuition to see why the linear response
covariance can be superior to the Laplace approximation covariance for
over-dispersed but symmetric distributions. Such a symmetric but over-dispersed
distribution, formed with $K_{z}=3$ components, is shown in
\fig{SimpleExampleNoSkew1d} together with its approximations. By symmetry, both
the MFVB and Laplace means are exactly correct (up to Monte Carlo error), as can
be seen in the left panel of \fig{SimpleExampleNoSkew1d}.

However, the right panel of \fig{SimpleExampleNoSkew1d} shows that
symmetry is not maintained as the distribution is tilted. For $\alpha>0$,
the distribution becomes skewed to the right. Thus, by the intuition
from the previous section, we expect the MFVB mean to be more accurate
as the distribution is tilted and $\alpha$ increases from zero. In
particular, we expect that the Laplace approximation's mean will not
shift enough as $\alpha$ varies, i.e., that the Laplace approximation
variance will be underestimated. \fig{SimpleExampleNoSkew1dResult}
shows that this is indeed the case. The slopes in the left panel once
again correspond to the estimated variances shown in the table, and,
as expected the LRVB variance estimate is superior to the Laplace
approximation variance.

In this case, \prettyref{cond:vb_accurate} holds for $\qsimple$.
For the Laplace approximation, $\mbe_{\lap}\left[\gtheta\right] =
\mbe_{\pzeropost}\left[\gtheta\right]$
for $\alpha=0$, so $\qlap$ satisfies \prettyref{eq:vb_accurate} of \prettyref{cond:vb_accurate} for
$\alpha$ near zero,
the derivatives of the two expectations with respect to
$\alpha$ are quite different,
so \prettyref{eq:vb_slope_accurate}
of \prettyref{cond:vb_accurate} does not hold for $\qlap$.

\subsubsection{A Bivariate Over-dispersed Distribution}

\begin{knitrout}
\definecolor{shadecolor}{rgb}{0.969, 0.969, 0.969}\color{fgcolor}\begin{figure}[t]

{\centering \includegraphics[width=0.98\linewidth,height=0.343\linewidth]{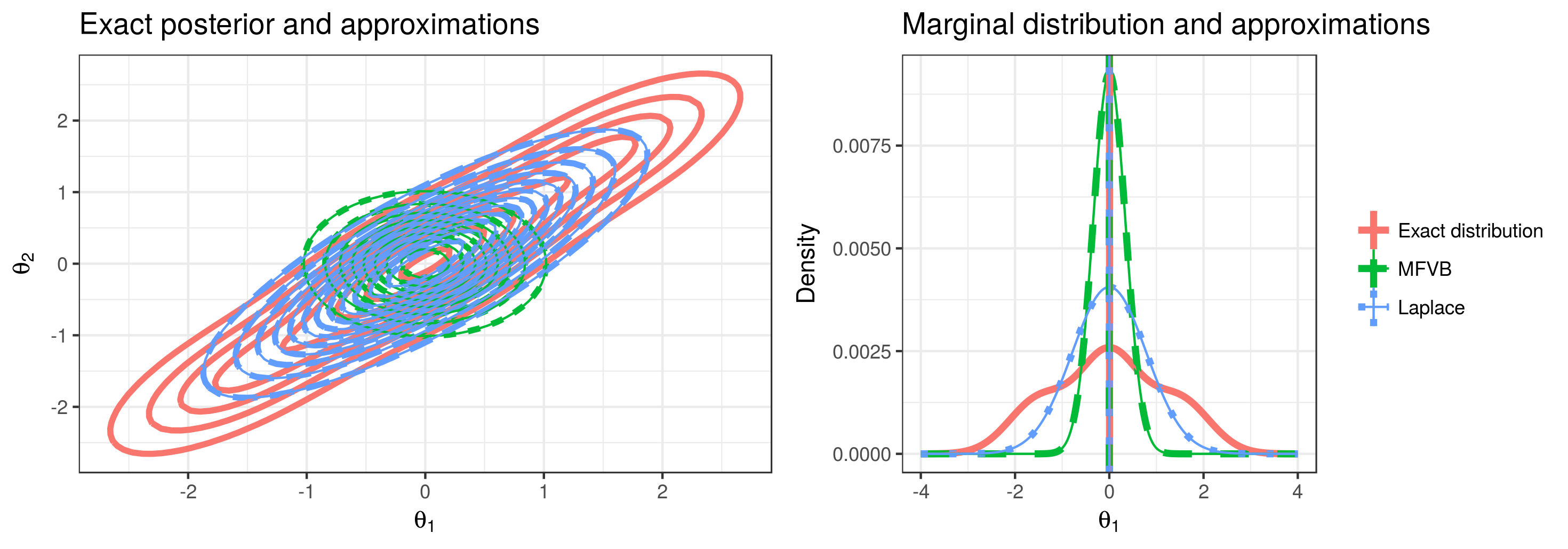} 

}

\caption[A bivariate over-dispersed distribution]{A bivariate over-dispersed distribution.}\label{fig:SimpleExampleNoSkew2d}
\end{figure}

\end{knitrout}

\begin{figure}
\noindent\begin{minipage}[b]{1\columnwidth}
\begin{minipage}[c][1\totalheight][b]{0.49\columnwidth}\begin{knitrout}
\definecolor{shadecolor}{rgb}{0.969, 0.969, 0.969}\color{fgcolor}

{\centering \includegraphics[width=0.98\linewidth,height=0.686\linewidth]{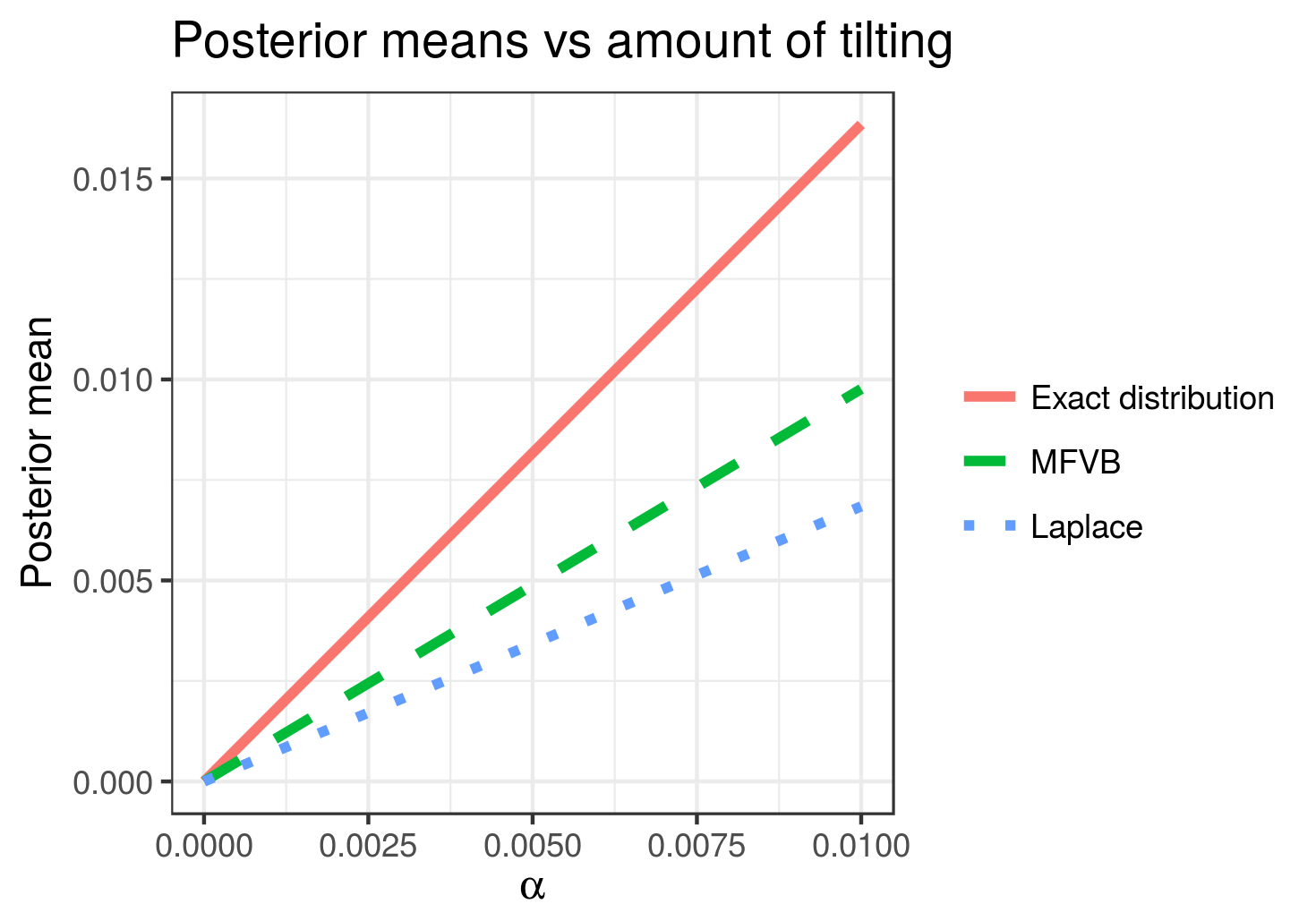} 

}

\end{knitrout}
\end{minipage} \begin{minipage}[b]{0.49\columnwidth}\begin{center}\begin{tabular}{lrrrr}
  \hline
Metric & Exact & LRVB & MFVB & Laplace \\ 
  \hline
mean & 0.005 &  & -0.002 & -0.000 \\ 
  var & 1.635 & 0.976 & 0.241 & 0.684 \\ 
   \hline
\end{tabular}
\end{center}
\end{minipage}

\end{minipage}

\caption{Effect of tilting on a bivariate over-dispersed distribution. \label{fig:SimpleExampleNoSkew2dResult}}
\end{figure}

In the previous two examples the mean field approximation in $\mathcal{Q}$ did
not matter, since the examples were one-dimensional. The only reason that the
variational approximation was different from the exact
$\pzeropost\left(\theta\right)$ was the normal assumption in $\qsimple$. Indeed,
the tables in \fig{SimpleExampleSkew1dResult} and
\fig{SimpleExampleNoSkew1dResult} show that the MFVB variance estimate is also
reasonably close to the exact variance. In order to demonstrate why the LRVB
variance can be better than both the Laplace approximation and the MFVB
approximation, we turn to a bivariate, correlated, over-dispersed
$\pzeropost\left(\theta\right)$. For this we use $K_{z}=3$ correlated normal
distributions, shown in the left panel of \fig{SimpleExampleNoSkew2d}. The right
panel of \fig{SimpleExampleNoSkew2d} shows the marginal distribution of
$\theta_{1}$, in which the over-dispersion can be seen clearly.
As \fig{SimpleExampleNoSkew2d} shows, unlike in the previous two
examples, the mean field approximation causes $\qzeropost\left(\theta\right)$
to dramatically underestimate the marginal variance of $\theta_{1}$.
Consequently, the MFVB means will also be under-responsive to the
skew introduced by tilting with $\alpha$. Though the Laplace approximation
has a larger marginal variance, it remains unable to take skewness
into account. Consequently, as seen in \fig{SimpleExampleNoSkew2dResult},
the LRVB variance, while not exactly equal to the correct variance,
is still an improvement over the Laplace covariance, and a marked
improvement on the badly under-estimated MFVB variance.

One might say, in this case, that \prettyref{cond:vb_accurate} does
not hold for either $\qsimple$ or $\qlap$, or, if it does, it is
with a liberal interpretation of the ``approximately equals'' sign.
However, the expressiveness of $\qsimple$ allows LRVB to improve
on the Laplace approximation, and the linear response allows it to
improve over the MFVB approximation, and so LRVB gives the best of
both worlds.

Thinking about problems in terms of these three simple models can
provide intuition about when and whether \prettyref{cond:vb_accurate}
might be expected to hold in a sense that is practically useful.
 
\subsection{Automatic Differentiation Variational Inference (ADVI)
\label{subsec:advi}}

\newcommand{\adviNumVBSamples}{10}

In this section we apply our methods to automatic differentiation
variational inference (ADVI) \citep{kucukelbir:2017:advi}. ADVI is
a ``black-box'' variational approximation and optimization procedure
that requires only that the user provide the log posterior,
$\log \pzeropost\left(\theta\right)$, up to a constant that does
not depend on $\theta$. To achieve this generality,
ADVI employs:
\begin{itemize}
\item A factorizing normal variational approximation,\footnote{\citet{kucukelbir:2017:advi}
    describe a non-factorizing version of ADVI, which is called
    ``fullrank'' ADVI in Stan.  The factorizing version
    that we describe here is called ``meanfield'' ADVI in Stan.
    On the examples we describe, in the current Stan implementation,
    we found that fullrank ADVI provided much worse approximations to
    the MCMC posterior
    means than the meanfield version, and so we do not consider it further.}
\item An unconstraining parameterization,
\item The ``re-parameterization trick,'' and
\item Stochastic gradient descent.
\end{itemize}
ADVI uses a family employing the factorizing normal approximation
\begin{align*}
\qadvi & :=\left\{ q\left(\theta\right):q\left(\theta\right) =
    \prod_{k=1}^{K}\mathcal{N}\left(\theta_{k}\vert\mu_{k},
        \exp\left(2\zeta_{k}\right)\right)\right\} .
\end{align*}
That is, $\qadvi$ is a fully factorizing normal family with means
$\mu_{k}$ and log standard deviations $\zeta_{k}$. Because we are
including exponential family assumptions in the definition of MFVB
(as described in \prettyref{subsec:variational_Bayes}), $\qadvi$
is an instance of a mean-field family $\qmfvb$. In the notation of
\prettyref{eq:q_approximating_family},
\begin{align}
\eta & =\left(\mu_{1},...,\mu_{K},\zeta_{1},...,\zeta_{K}\right)^{\trans},
\label{eq:advi_parameter_def}
\end{align}
 $\Omega_{\eta}=\mathbb{R}^{2K}$, $\lambda$ is the Lebesgue measure,
and the objective function \prettyref{eq:kl_divergence} is
\begin{align*}
KL\left(q\left(\theta;\eta\right)||\pzeropost\left(\theta\right)\right) &
    =-\int\mathcal{N}\left(\theta_{k}\vert\mu_{k},
        \exp\left(2\zeta_{k}\right)\right)\log\pzeropost\left(\theta\right)
        \lambda\left(d\theta\right)-\sum_{k=1}^{K}\zeta_{k},
\end{align*}
where we have used the form of the univariate normal entropy up to a
constant.

The unconstraining parameterization is required because the use of
a normal variational approximation dictates that the base measure
on the parameters $\theta\in\mathbb{R}^{K}$ be supported on all of
$\mathbb{R}^{K}$. Although many parameters of interest, such as covariance
matrices, are not supported on $\mathbb{R}^{K}$, there typically
exist differentiable maps from an unconstrained parameterization supported
on $\mathbb{R}^{K}$ to the parameter of interest. Software packages
such as Stan automatically provide such transforms for a broad set
of parameter types.
In our notation, we will take these constraining maps to be the function of
interest, $\gtheta$, and take $\theta$ to be unconstrained.  Note that, under
this convention, the prior $p\left(\theta\vert\alpha\right)$ must be a density
in the unconstrained space.  In practice (e.g., in the Stan software package),
one usually specifies the prior density in the constrained space and converts it
to a density $p\left(\theta\vert\alpha\right)$ in the unconstrained space using
the determinant of the Jacobian of the constraining transform
$g\left(\cdot\right)$.

The re-parameterization trick allows easy
approximation of derivatives of the (generally intractable) objective
$KL\left(q\left(\theta;\eta\right)||\pzeropost\left(\theta\right)\right)$.
By defining $z_{k}$ using the change of variable
\begin{align}
z_k &:= (\theta_k - \mu_k) / \exp\left(\zeta_{k}\right), \label{eq:advi_theta}
\end{align}
$KL\left(q\left(\theta;\eta\right)||\pzeropost\left(\theta\right)\right)$
can be re-written as an expectation with respect to a standard normal
distribution. We write $\theta=\exp\left(\zeta\right)\circ z+\mu$
by using the component-wise Hadamard product $\circ$. Then
\begin{align*}
KL\left(q\left(\theta;\eta\right)||\pzeropost\left(\theta\right)\right) & =
    -\mbe_{z}\left[\log\pzeropost\left(\exp\left(\zeta\right)\circ z +
    \mu\right)\right]-\sum_{k=1}^{K}\zeta_{k}+\constant.
\end{align*}
The expectation is still typically intractable, but it can be approximated
using Monte Carlo and draws from a $K$-dimensional standard normal
distribution. For a fixed number $M$ of draws $z_{1},...,z_{M}$
from a standard $K$-dimensional normal, we can define the approximate
KL divergence
\begin{align}
\widehat{KL}\left(\eta\right) &
    :=-\frac{1}{M}\sum_{m=1}^{M}\log\pzeropost\left(
        \exp\left(\zeta\right)\circ z_{m}+\mu\right) -
        \sum_{k=1}^{K}\zeta_{k} +\constant.\label{eq:monte_carlo_approx_kl}
\end{align}
For any fixed $M$,
\begin{align*}
\mbe\left[\frac{\partial}{\partial\eta}\widehat{KL}\left(\eta\right)\right] &
    =\frac{\partial}{\partial\eta}KL\left(q\left(\theta;\eta\right)||
        \pzeropost\left(\theta\right)\right),
\end{align*}
so gradients of $\widehat{KL}\left(\eta\right)$ are unbiased for
gradients of the exact KL divergence. Furthermore, for fixed draws
$z_{1},...,z_{M}$, $\widehat{KL}\left(\eta\right)$ can be easily
differentiated (using, again, the re-parameterization trick). Standard
ADVI uses this fact to optimize
$KL\left(q\left(\theta;\eta\right)||\pzeropost\left(\theta\right)\right)$
using the unbiased gradient draws
$\frac{\partial}{\partial\eta}\widehat{KL}\left(\eta\right)$
and a stochastic gradient optimization method, where the stochasticity
comes from draws of the standard normal random variable $z$. Note
that stochastic gradient methods typically use a new draw of $z$
at every gradient step.

\subsubsection{Linear Response for ADVI (LR-ADVI)}

Since ADVI uses a factorizing normal approximation, the intuition
from \prettyref{subsec:laplace_experiments} may be expected to apply.
In particular, we might expect that the ADVI means $\hat{\mu}$ might
be a good approximation to $\mbe_{\pzeropost}\left[\theta\right]$,
that the ADVI variances $\exp\left(2\hat{\zeta}\right)$ would be
under-estimates of the posterior variance $\pcov\left(\theta\right)$,
so that using $\lrvbcov\left(\theta\right)$ could improve the approximations
to the posterior variance.  We refer to LRVB covariances calculated using
an ADVI approximation as LR-ADVI.

To apply linear response to an ADVI approximation, we need to be able to
approximate the Hessian of
$KL\left(q\left(\theta;\eta\right)||\pzeropost\left(\theta\right)\right)$ and to
be assured that we have found an optimal $\etaoptzero$. But, by using a
stochastic gradient method, ADVI avoids ever actually calculating the
 expectation in
$KL\left(q\left(\theta;\eta\right)||\pzeropost\left(\theta\right)\right)$.
Furthermore even if a stochastic gradient method finds an point that is close
to the optimal value of
$KL\left(q\left(\theta;\eta\right)||\pzeropost\left(\theta\right)\right)$ it may
not be close to an optimum of $\widehat{KL}\left(\eta\right)$ for a particular
finite $M$. Indeed, we found that, even for very large $M$, the
optimum found by ADVI's stochastic gradient method is typically not close enough
to an optimum of the approximate $\widehat{KL}\left(\eta\right)$ for sensitivity
calculations to be useful. Sensitivity calculations are based on differentiating
the fixed point equation given by the gradient being zero (see the proof in
\prettyref{app:lrvb}), and do not apply at points for which the gradient is not
zero either in theory nor in practice.

Consequently, in order to calculate the local sensitivity, we simply eschew the
stochastic gradient method and directly optimize $\widehat{KL}\left(\eta\right)$
for a particular choice of $M$. (We will discuss shortly how to choose $M$.) We
can then use $\widehat{KL}\left(\eta\right)$ in \prettyref{eq:lrvb_formula}
rather than the exact KL divergence. Directly optimizing
$\widehat{KL}\left(\eta\right)$ both frees us to use
second-order optimization methods, which we found to converge more quickly to a
high-quality optimum than first-order methods, and guarantees that we are
evaluating the Hessian $\klhess$ at an optimum of the objective function used to
calculate \prettyref{eq:lrvb_formula}.

As $M$ approaches infinity, we expect the optimum of
$\widehat{KL}\left(\eta\right)$ to approach the optimum of
$KL\left(q\left(\theta;\eta\right)||\pzeropost\left(\theta\right)\right)$ by the
standard frequentist theory of estimating equations \citep[Chapter
9]{keener:2011:theoretical}.  In practice we must fix a particular finite $M$,
with larger $M$ providing better approximations of the true KL divergence but at
increased computational cost.  We can inform this tradeoff between accuracy and
computation by considering the frequentist variability of $\etaoptzero$ when
randomly sampling $M$ draws of the random variable $z$ used to approximate the
intractable integral in $\widehat{KL}\left(\eta\right)$. Denoting this
frequentist variability by $\cov_{z}\left(\etaoptzero\right)$, standard results
\citep[Chapter 9]{keener:2011:theoretical} give that
\begin{align}
\cov_{z}\left(\etaoptzero\right) &
    \approx\klhess^{-1}\cov_{z}\left(\left.\frac{\partial}{\partial\eta}
    \widehat{KL}\left(\eta\right)\right|_{\etaoptzero}\right)\klhess^{-1}.
    \label{eq:approx_kl_variance_of_eta}
\end{align}
A sufficiently large $M$ will be one for which
$\cov_{z}\left(\etaoptzero\right)$ is adequately small. One notion of
``adequately small'' might be that the ADVI means found with
$\widehat{KL}\left(\eta\right)$ are within some fraction of a posterior standard
deviation of the optimum of
$KL\left(q\left(\theta;\eta\right)||\pzeropost\left(\theta\right)\right)$.
Having chosen a particular $M$, we can calculate the frequentist variability of
$\mu^{*}$ using $\lrvbcov\left(\gtheta\right)$ and estimate the posterior
standard deviation using \prettyref{eq:lrvb_for_covariance}. If we find that
each $\mu^{*}$ is probably within 0.5 standard deviations of the optimum of
$KL\left(q\left(\theta;\eta\right)||\pzeropost\left(\theta\right)\right)$, we
can keep the results; otherwise, we increase $M$ and try again. In the examples
we consider here, we found that the relatively modest $M=\adviNumVBSamples$
satisfies this condition and provides sufficiently accurate results.

Finally, we note a minor departure from
\prettyref{eq:lrvb_for_covariance} when calculating $\lrvbcov\left(\gtheta\right)$
from $\klhess$.  Recall that, in this case, we are taking $g\left(\cdot\right)$
to be ADVI's constraining transform, and that \prettyref{eq:lrvb_for_covariance}
requires the Jacobian, $\eggrad$, of this transform.  At the time of writing,
the design of the Stan software package did not readily support automatic
calculation of $\eggrad$, though it did support rapid evaluation of $\gtheta$ at
particular values of $\theta$.  Consequently, we used linear response to
estimate $\lrvbcov\left(\theta\right)$, drew a large number $N_{s}$ of
Monte Carlo draws from
$\theta_{n}\sim\mathcal{N}\left(\mu,\lrvbcov\left(\theta\right)\right)$ for
$n=1,...,N_{s}$, and then used these draws to form a Monte Carlo estimate of the
sample covariance of $\gtheta$.
Noting that $\mbeq\left[\theta\right]=\mu$, and recalling the definition
of $\eta$ for ADVI in \prettyref{eq:advi_parameter_def},
by \prettyref{eq:lrvb_for_covariance} we have
\begin{align*}
\lrvbcov\left(\theta\right) & =
    \frac{\partial\mbeq\left[\theta\right]}{\partial\eta^{\trans}}
    \klhess^{-1}\frac{\partial\mbeq\left[\theta^{\trans}\right]}{\partial\eta} =
    \left(\begin{array}{cc}
I_{K} & 0\\
0 & 0
\end{array}\right)\klhess^{-1}\left(\begin{array}{cc}
I_{K} & 0\\
0 & 0
\end{array}\right),
\end{align*}
which is the upper-left quarter of the matrix $\klhess^{-1}$.
In addition to obviating the need for $\eggrad$,
this approach also allowed us to take into account possible nonlinearities in
$g\left(\cdot\right)$ at little additional computational cost.

\subsubsection{Results}

We present results from four models taken from the Stan example set, namely the models
\texttt{election88} (``Election model''), \texttt{sesame\_street1} (``Sesame
Street model''), \texttt{radon\_vary\_intercept\_floor}
(``Radon model''), and
\texttt{cjs\_cov\_randeff} (``Ecology model''). We experimented with many models
from the Stan examples and selected these four as representative of the type
of model where LR-ADVI can be expected to provide a benefit---specifically,
they are models of a moderate size. For very small models, MCMC
runs quickly enough in Stan that fast approximations are not necessary, and for
very large models (with thousands of parameters) the relative advantages of
LR-ADVI and the Laplace approximation diminish due to the need to calculate
$\klhess$ or $\laphess$ using automatic differentiation.\footnote{We calculated
$\klhess$ using a custom branch of Stan's automatic differentiation software
\citep{carpenter:2015:stan} that exposes Hessians and Hessian-vector products in
the \texttt{Rstan} \texttt{modelfit} class. When this custom branch is merged
with the main branch of Stan, it will be possible to implement LR-ADVI for
generic Stan models.} The size of the data and size of the parameter space for our four
chosen models are shown in \fig{ADVITimingGraph}. We also eliminated from
consideration models where Stan's MCMC algorithm reported divergent transitions
or where Stan's ADVI algorithm returned wildly inaccurate posterior mean
estimates.

For brevity, we do not attempt to describe the models or data in any
detail here; rather, we point to the relevant literature in their
respective sections. The data and Stan implementations themselves
can be found on the Stan website \citep{stan-examples:2017} as well as in
\prettyref{app:advi}.

To assess the accuracy of each model, we report means and standard deviations
for each of Stan's model parameters as calculated by Stan's MCMC and ADVI
algorithms and a Laplace approximation, and we report the standard deviations as
calculated by $\lrvbcov\left(\gtheta\right)$. Recall that, in our notation,
$g\left(\cdot\right)$ is the (generally nonlinear) map from the unconstrained
latent ADVI parameters to the constrained space of the parameters of interest.
The performance of ADVI and Laplace vary, and only LR-ADVI provides a
consistently good approximation to the MCMC standard deviations. LR-ADVI was
somewhat slower than a Laplace approximation or ADVI alone, but it was typically
about five times faster than MCMC; see \prettyref{subsec:advi_timing} for
detailed timing results.

\subsubsection{Election Model Accuracy}
\begin{knitrout}
\definecolor{shadecolor}{rgb}{0.969, 0.969, 0.969}\color{fgcolor}\begin{figure}[t]

{\centering \includegraphics[width=0.98\linewidth,height=0.514\linewidth]{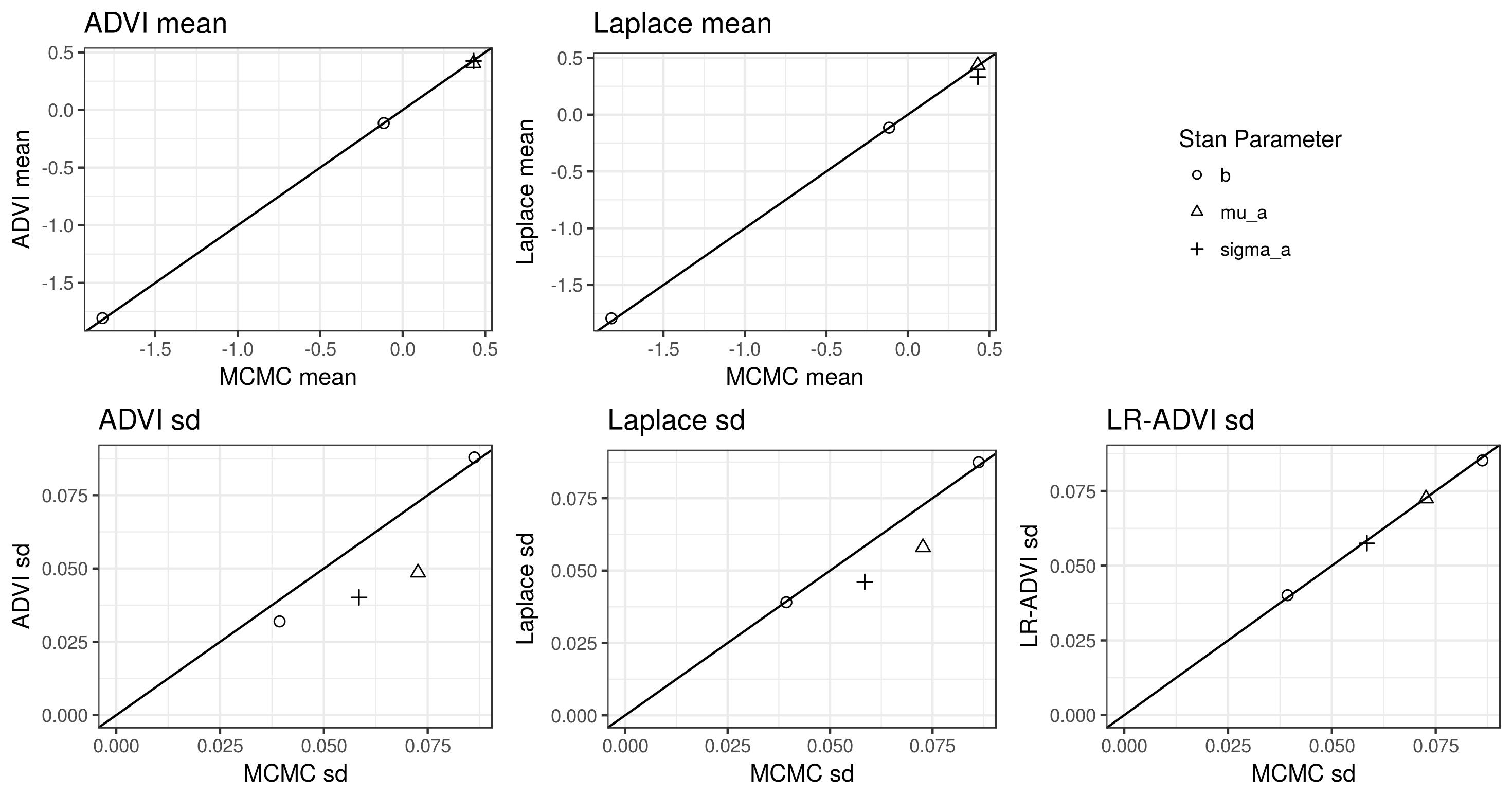} 

}

\caption[Election model]{Election model}\label{fig:ADVIElectionModel}
\end{figure}

\end{knitrout}

We begin with \texttt{election88}, which models binary responses in
a 1988 poll using a Bernoulli hierarchical model with normally distributed
random effects for state, ethnicity, and gender and a logit link.
The model and data are described in detail in \citet[Chapter 14]{gelman:2006:arm}.
\fig{ADVIElectionModel} shows that both the Laplace approximation
and ADVI do a reasonable job of matching to MCMC, though LR-ADVI is
slightly more accurate for standard deviations.

\subsubsection{Sesame Street Model Accuracy}

\begin{knitrout}
\definecolor{shadecolor}{rgb}{0.969, 0.969, 0.969}\color{fgcolor}\begin{figure}[t]

{\centering \includegraphics[width=0.98\linewidth,height=0.514\linewidth]{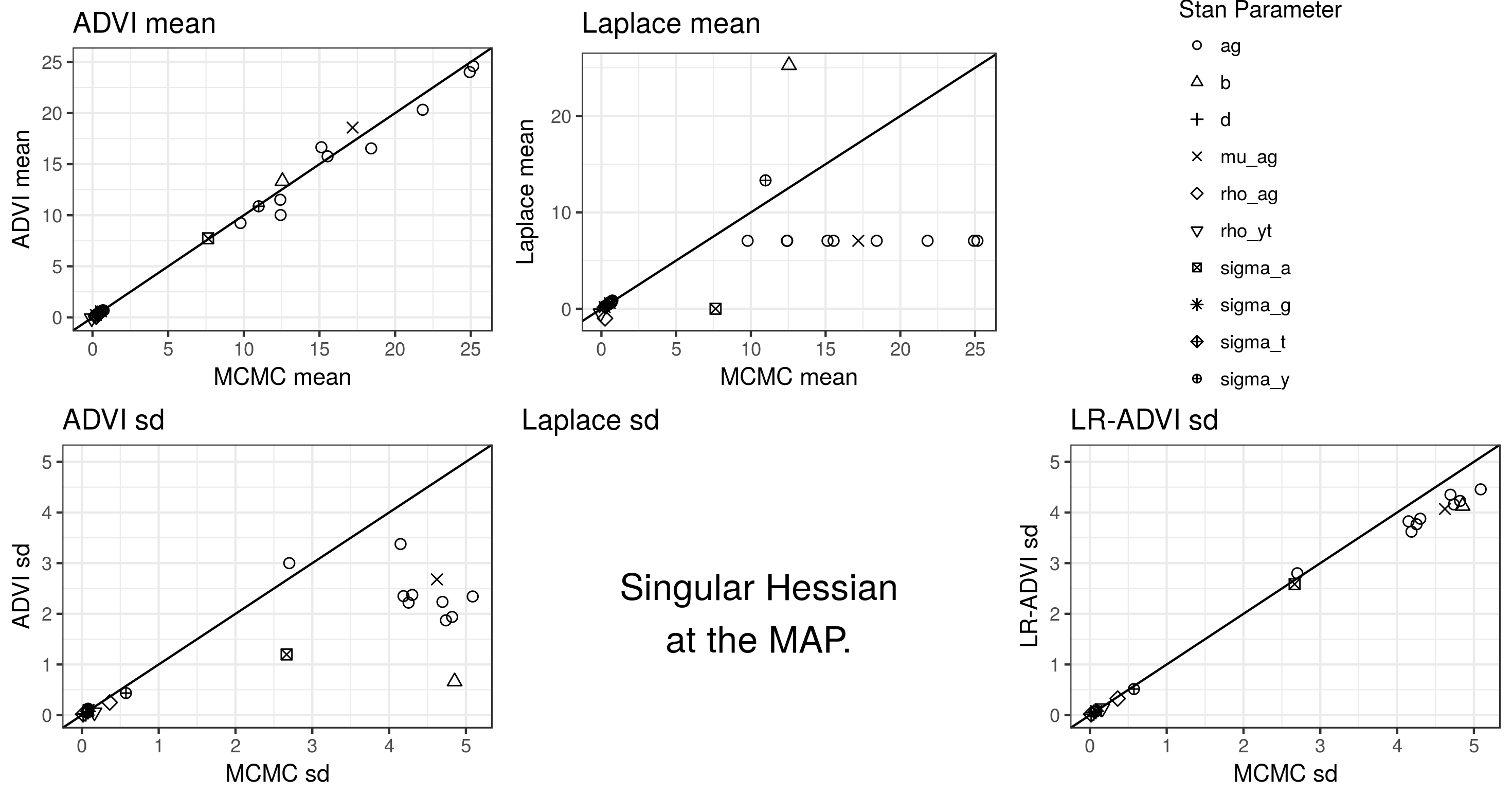} 

}

\caption[Sesame Street model]{Sesame Street model}\label{fig:ADVISesameStreetModel}
\end{figure}

\end{knitrout}

Next, we show results for \texttt{sesame\_street1}, an analysis of
a randomized controlled trial designed to estimate the causal effect
of watching the television show Sesame Street on a letter-recognition
test. To control for different conditions in the trials, a hierarchical
model is used with correlated multivariate outcomes and unknown covariance
structure. The model and data are described in detail in
\citet[Chapter 23]{gelman:2006:arm}.

As can be seen in \fig{ADVISesameStreetModel}, the MAP under-estimates the
variability of the random effects $ag$, and, in turn, under-estimates the
variance parameter $sigma\_a$. Because the MAP estimate of $sigma\_a$ is close
to zero, the log posterior has a very high curvature with respect to the
parameter $ag$ at the MAP, and the Hessian used for the Laplace approximation is
numerically singular. ADVI, which integrates out the uncertainty in the random
effects, provides reasonably good estimates of the posterior means but
underestimates the posterior standard deviations due to the mean-field
assumption.  Only LR-ADVI provides accurate estimates of posterior uncertainty.

\subsubsection{Radon Model Accuracy}

\begin{knitrout}
\definecolor{shadecolor}{rgb}{0.969, 0.969, 0.969}\color{fgcolor}\begin{figure}[t]

{\centering \includegraphics[width=0.98\linewidth,height=0.514\linewidth]{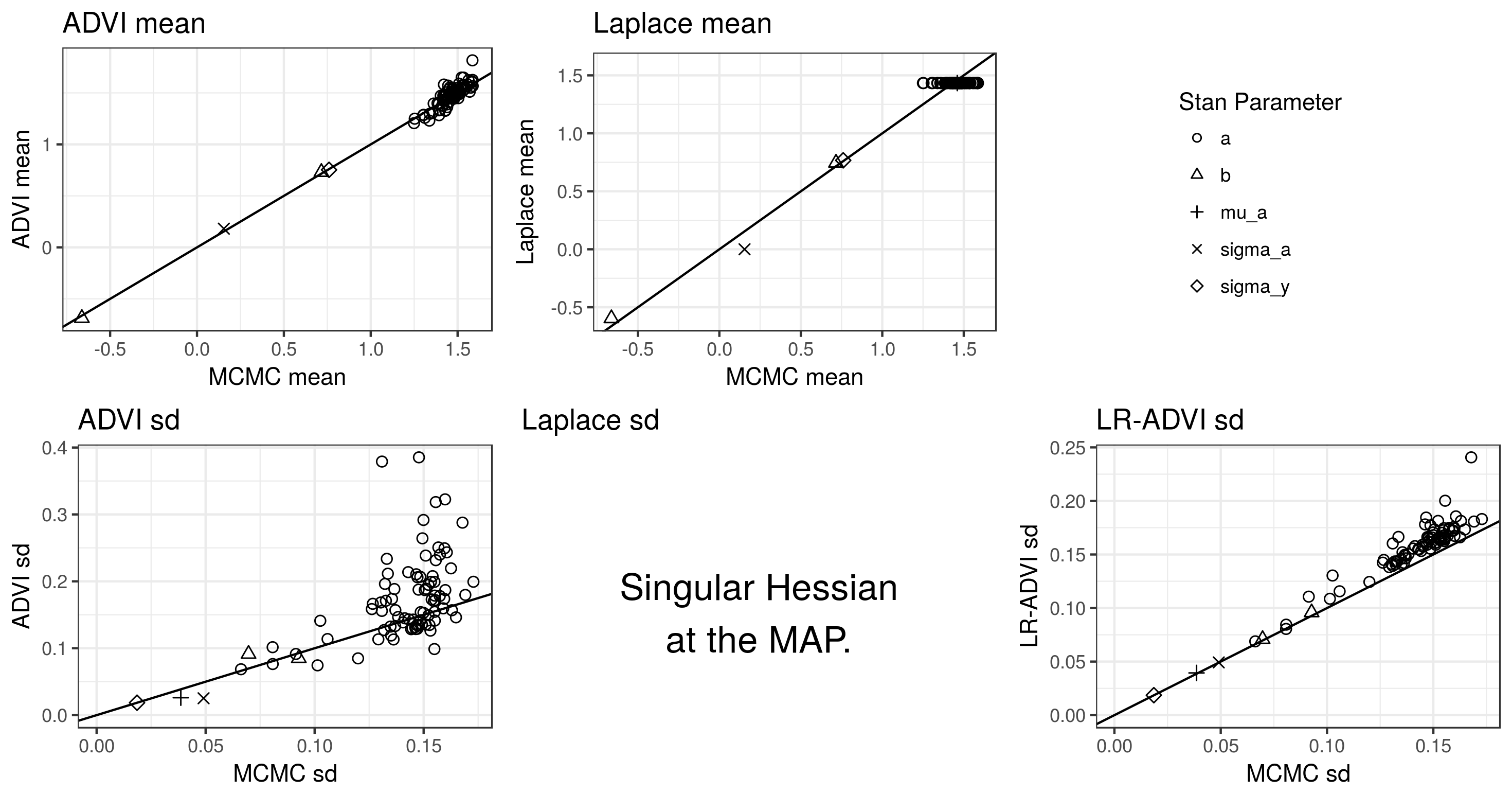} 

}

\caption[Radon model]{Radon model}\label{fig:ADVIRadonModel}
\end{figure}

\end{knitrout}

We now turn to \texttt{radon\_vary\_intercept\_floor}, a hierarchical model of
radon levels in Minnesota homes described in
\citet[Chapters 16 and 21]{gelman:2006:arm}.
This model is relatively simple, with univariate normal
observations and unknown variances. Nevertheless, the Laplace approximation
again produces a numerically singular covariance matrix.
The ADVI means are reasonably
accurate, but the standard deviations are not. Only LR-ADVI produces an
accurate approximation to the MCMC posterior standard deviations.

\subsubsection{Ecology Model Accuracy}

\begin{knitrout}
\definecolor{shadecolor}{rgb}{0.969, 0.969, 0.969}\color{fgcolor}\begin{figure}[t]

{\centering \includegraphics[width=0.98\linewidth,height=0.514\linewidth]{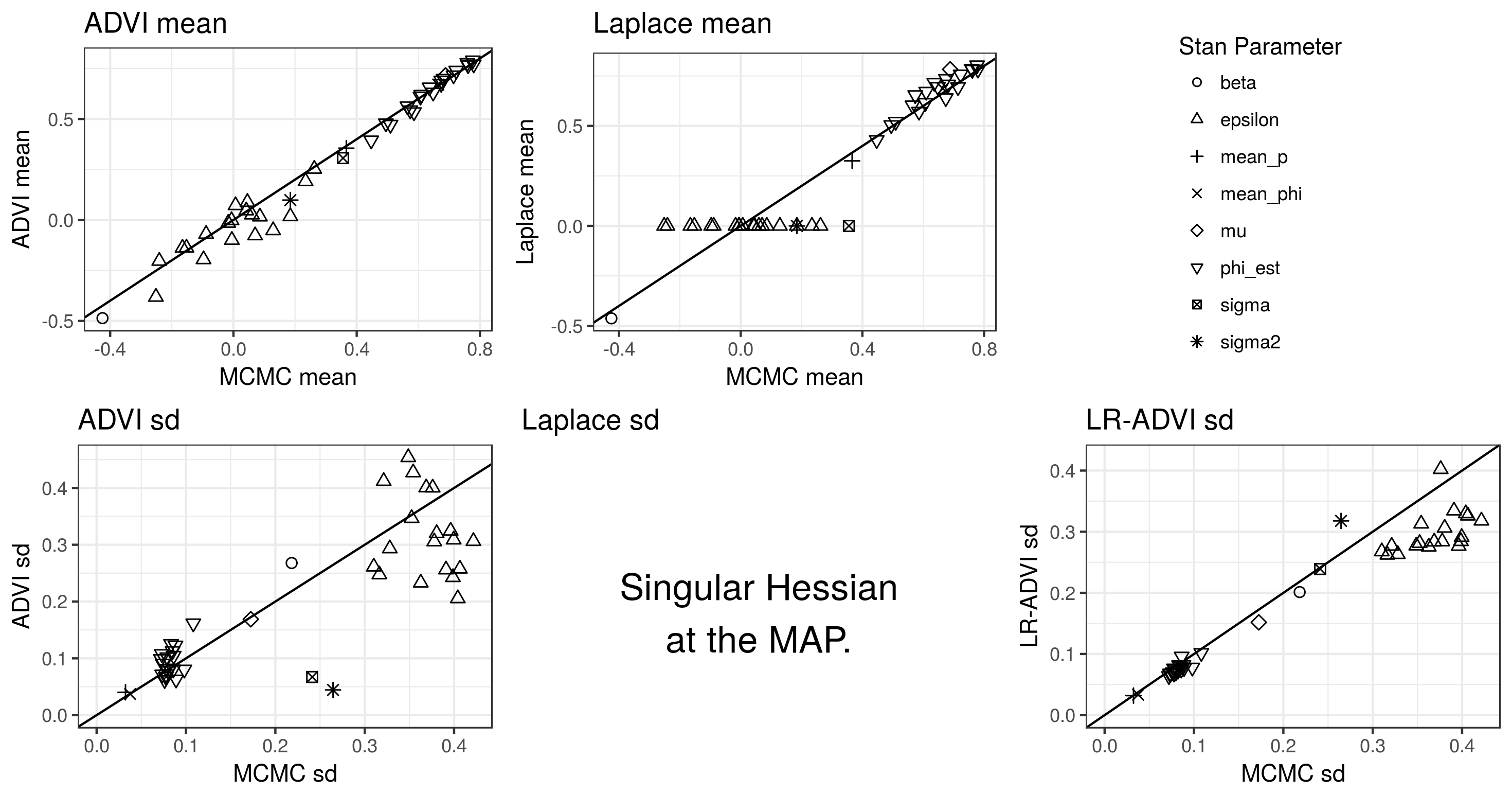} 

}

\caption[Ecology model]{Ecology model}\label{fig:ADVIEcologyModel}
\end{figure}

\end{knitrout}

Finally, we consider a more complicated mark-recapture model from
ecology known as the Cormack-Jolly-Seber (CJS) model. This model is
described in detail in \citet[Chapter 7]{kery:2011:bpa}, and discussion
of the Stan implementation can be found in \citet[Section 15.3]{stan-manual:2015}.

The Laplace approximation is again degenerate, and the ADVI standard
deviations again deviate considerably from MCMC. In this case, the
ADVI means are also somewhat inaccurate, and some of the LR-ADVI standard
deviations are mis-estimated in turn. However, LR-ADVI remains by
far the most accurate method for approximating the MCMC standard errors.

\subsubsection{Timing Results\label{subsec:advi_timing}}

\begin{knitrout}
\definecolor{shadecolor}{rgb}{0.969, 0.969, 0.969}\color{fgcolor}\begin{figure}[t]

{\centering \includegraphics[width=0.98\linewidth,height=0.514\linewidth]{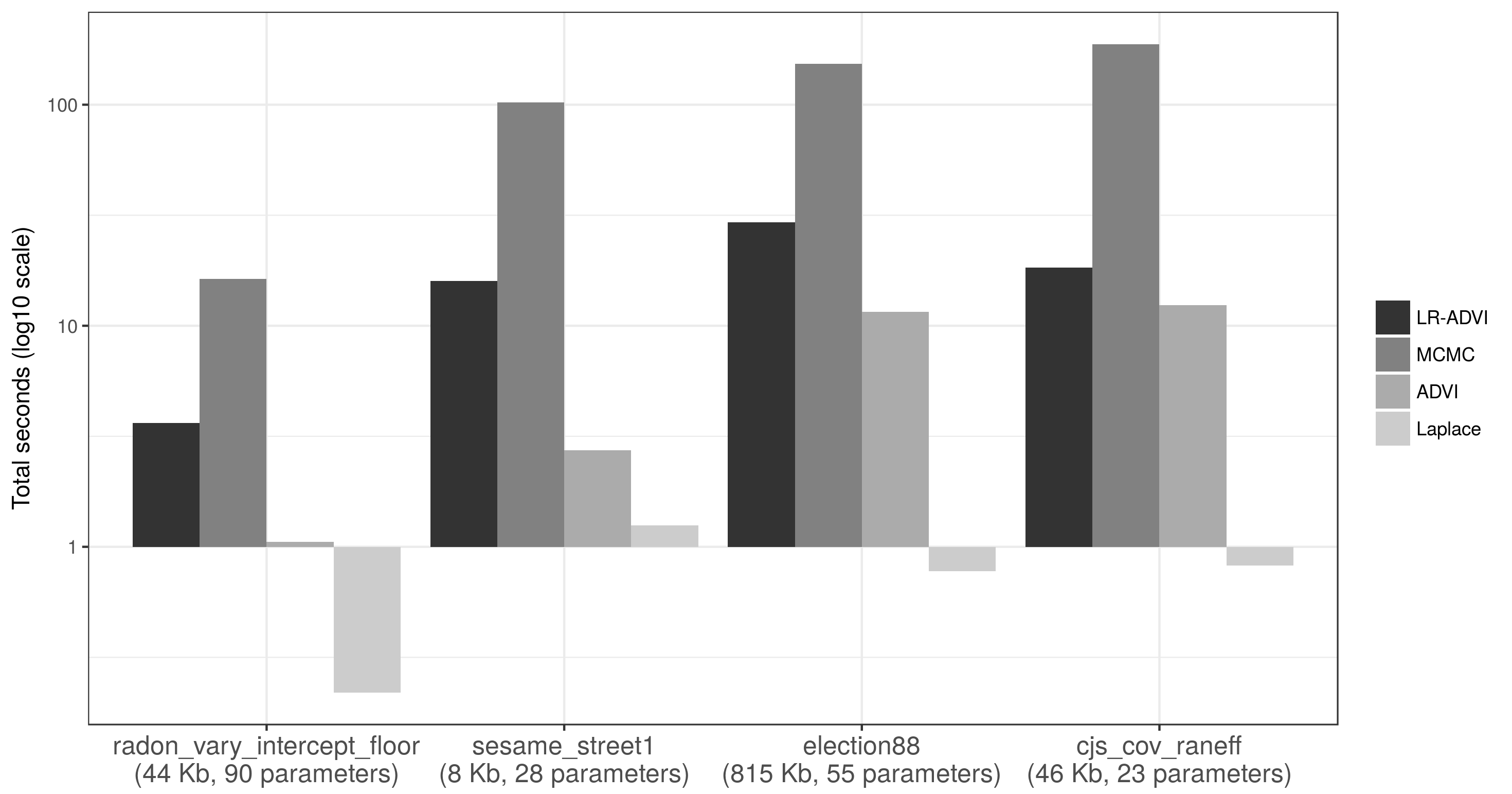} 

}

\caption[Comparision of timing in ADVI experiments]{Comparision of timing in ADVI experiments}\label{fig:ADVITimingGraph}
\end{figure}

\end{knitrout}

Detailed timing results for the ADVI experiments are shown in \fig{ADVITimingGraph}.
Both the Laplace approximation and ADVI alone are faster than LR-ADVI,
which in turn is about five times faster than MCMC. We achieved the best
results optimizing $\widehat{KL}\left(\eta\right)$ by using the conjugate
gradient Newton's trust region method (\texttt{trust-ncg} of \texttt{scipy.optimize}),
but the optimization procedure still accounted for an appreciable
proportion of the time needed for LR-ADVI.
 
\subsection{Criteo Dataset
\label{subsec:glmm_model}}

\newcommand{\glmmDimension}{5}
\newcommand{\glmmNumGroups}{5000}
\newcommand{\glmmNumObs}{61895}
\newcommand{\glmmHessDim}{10014}
\newcommand{\glmmInverseTime}{173}
\newcommand{\glmmHessianTime}{323}
\newcommand{\glmmVBTime}{57}
\newcommand{\glmmMCMCTime}{21066}
\newcommand{\glmmMCMCTimeMinutes}{351}
\newcommand{\glmmCGRowTime}{9.4}
\newcommand{\glmmCGRowIters}{81}
\newcommand{\glmmCGBetaTime}{46.9}
\newcommand{\glmmMAPTime}{12}
\newcommand{\glmmGLMERTime}{104}
\newcommand{\glmmVBRefitTime}{27.2}
\newcommand{\glmmSpeedup}{370}
\newcommand{\glmmNumMCMCDraws}{5000}
\newcommand{\glmmNumGHPoints}{4}
\newcommand{\glmmBetaInfoDiag}{0.100}
\newcommand{\glmmBetaLoc}{0.000}
\newcommand{\glmmMuLoc}{0.000}
\newcommand{\glmmMuInfo}{0.010}
\newcommand{\glmmTauAlpha}{3.000}
\newcommand{\glmmTauBeta}{3.000}

We now apply our methods to a real-world data set using a logistic
regression with random effects, which is an example of a generalized
linear mixed model (GLMM) \citep[Chapter 13]{agresti:2011:categorical}.
This data and model have several advantages as an illustration of
our methods: the data set is large, the model contains a large number
of imprecisely-estimated latent variables (the unknown random effects),
the model exhibits the sparsity of $\klhess$ that is typical in many
MFVB applications, and the results exhibit the same shortcomings of
the Laplace approximation seen above. For this model, we will evaluate
both posterior covariances and prior sensitivities.

\subsubsection{Data and Model}

We investigated a custom subsample of the 2014 Criteo Labs conversion
logs data set \citep{criteo:2014:dataset}, which contains an obfuscated
sample of advertising data collected by Criteo over a period of two
months. Each row of the data set corresponds to a single user click
on an online advertisement. For each click, the data set records a
binary outcome variable representing whether or not the user subsequently
``converted'' (i.e., performed a desired task, such as purchasing
a product or signing up for a mailing list). Each row contains two timestamps
(which we ignore), eight numerical covariates, and nine factor-valued
covariates. Of the eight numerical covariates, three contain 30\%
or more missing data, so we discarded them. We then applied a per-covariate
normalizing transform to the distinct values of those remaining. Among
the factor-valued covariates, we retained only the one with the largest
number of unique values and discarded the others. These data-cleaning
decisions were made for convenience. The goal of the present paper
is to demonstrate our inference methods, not to draw conclusions about
online advertising.

Although the meaning of the covariates has been obfuscated, for the
purpose of discussion we will imagine that the single retained factor-valued
covariate represents the identity of the advertiser, and the numeric
covariates represent salient features of the user and/or the advertiser
(e.g., how often the user has clicked or converted in the past, a
machine learning rating for the advertisement quality, etc.). As such,
it makes sense to model the probability of each row's binary outcome
(whether or not the user converted) as a function of the five numeric
covariates and the advertiser identity using a logistic GLMM. Specifically,
we observe binary conversion outcomes, $y_{it}$, for click $i$ on
advertiser $t$, with probabilities given by observed numerical explanatory
variables, $x_{it}$, each of which are vectors of length $K_{x}=\glmmDimension$.
Additionally, the outcomes within a given value of $t$ are correlated
through an unobserved random effect, $u_{t}$, which represents the
``quality'' of advertiser $t$, where the value of $t$ for each
observation is given by the factor-valued covariate. The random effects
$u_{t}$ are assumed to follow a normal distribution with unknown
mean and variance. Formally,
\begin{eqnarray*}
y_{it}\vert p_{it} & \sim &
    \textrm{Bernoulli}\left(p_{it}\right),\textrm{ for }t=1,...,T
    \textrm{ and }i=1,...,N_t\\
p_{it} & := & \frac{e^{\rho_{it}}}{1+e^{\rho_{it}}}
    \quad\textrm{where}\quad\rho_{it}:=x_{it}^{T}\beta+u_{t}\\
u_{t}\vert\mu,\tau & \sim & \mathcal{N}\left(\mu,\tau^{-1}\right).
\end{eqnarray*}
Consequently, the unknown parameters are
$\theta=\left(\beta^{\trans},\mu,\tau,u_{1},...,u_{T}\right)^{\trans}$.
We use the following priors:

\begin{eqnarray*}
\mu\vert\mu_{0},\tau_{\mu} & \sim & \mathcal{N}\left(\mu_{0},\tau_{\mu}^{-1}\right)\\
\tau\vert\alpha_{\tau},\beta_{\tau} & \sim & \textrm{Gamma}\left(\alpha_{\tau},\beta_{\tau}\right)\\
\beta\vert\beta_{0},\tau_{\beta},\gamma_{\beta} & \sim & \mathcal{N}\left(\left(\begin{array}{c}
\beta_{0}\\
\vdots\\
\beta_{0}
\end{array}\right),\left(\begin{array}{ccc}
\tau_{\beta} & \gamma_{\beta} & \gamma_{\beta}\\
\gamma_{\beta} & \ddots & \gamma_{\beta}\\
\gamma_{\beta} & \gamma_{\beta} & \tau_{\beta}
\end{array}\right)^{-1}\right).
\end{eqnarray*}
Note that we initially take $\gamma_{\beta}=0$ so that the prior
information matrix on $\beta$ is diagonal. Nevertheless, by retaining $\gamma_{\beta}$
as a hyperparameter we will be able to assess the sensitivity to the
assumption of a diagonal prior in \prettyref{subsec:glmm_sensitivity}.
The remaining prior values are given in \prettyref{app:glmm_details}.
It is reasonable to expect that a modeler would be interested both
in the effect of the numerical covariates and in the quality of individual
advertisers themselves, so we take the parameter of interest to be
$\gtheta=\left(\beta^{\trans},u_{1},...,u_{T}\right)^{\trans}$.

To produce a data set small enough to be amenable to MCMC but large
and sparse enough to demonstrate our methods, we subsampled the data
still further. We randomly chose $\glmmNumGroups$ distinct advertisers
to analyze, and then subsampled each selected advertiser to contain
no more than 20 rows each. The resulting data set had $N=\glmmNumObs$
total rows. If we had more observations per advertiser, the ``random
effects'' $u_{t}$ would have been estimated quite precisely, and
the nonlinear nature of the problem would not have been important; these changes would thus have
obscured the benefits of using MFVB versus the Laplace approximation.
In typical internet data sets a large amount of data comes from advertisers
with few observations each, so our subsample is representative of
practically interesting problems.

\subsubsection{Inference and Timing\label{subsec:glmm_inference}}

\begin{table}[h]
\begin{center}\begin{tabular}{lr}
  \hline
Method & Seconds \\ 
  \hline
MAP (optimum only) & 12 \\ 
  VB (optimum only) & 57 \\ 
  VB (including sensitivity for $\beta$) & 104 \\ 
  VB (including sensitivity for $\beta$ and $u$) & 553 \\ 
  MCMC (Stan) & 21066 \\ 
   \hline
\end{tabular}
\end{center}
\caption{Timing results\label{tab:timing}}
\end{table}

We estimated the expectation and covariance of $\gtheta$ using four
techniques: MCMC, the Laplace approximation, MFVB, and linear response
(LRVB) methods. For MCMC, we used Stan \citep{stan-manual:2015},
and to calculate the MFVB, Laplace, and LRVB estimates we used our
own Python code using \texttt{numpy}, \texttt{scipy}, and \texttt{autograd}
\citep{scipy,maclaurin:2015:autograd}. As described in \prettyref{subsec:glmm_Means-and-variances},
the MAP estimator did not estimate $\mbe_{\pzeropost}\left[\gtheta\right]$
very well, so we do not report standard deviations or sensitivity
measures for the Laplace approximations. The summary of the computation
time for all these methods is shown in \prettyref{tab:timing}, with
details below.

For the MCMC estimates, we used Stan to draw $\glmmNumMCMCDraws$
MCMC draws (not including warm-up), which took $\glmmMCMCTimeMinutes$
minutes. We estimated all the prior sensitivities of \prettyref{subsec:glmm_sensitivity}
using the Monte Carlo version of the covariance in \prettyref{eq:covariance_sensitivity}.

For the MFVB approximation, we use the following mean field exponential
family approximations:
\begin{align*}
q\left(\beta_{k}\right) & =\normal\left(\beta_{k};\eta_{\beta_{k}}\right),\textrm{ for }k=1,...,K_{x}\\
q\left(u_{t}\right) & =\normal\left(u_{t};\eta_{u_{t}}\right),\textrm{ for }t=1,...,T\\
q\left(\tau\right) & =\textrm{Gamma}\left(\tau;\eta_{\tau}\right)\\
q\left(\mu\right) & =\normal\left(\mu;\eta_{\mu}\right)\\
q\left(\theta\right) & =
    q\left(\tau\right)q\left(\mu\right)
    \prod_{k=1}^{K_{x}}q\left(\beta_{k}\right)\prod_{t=1}^{T}q\left(u_{t}\right).
\end{align*}
With these choices, evaluating the variational objective requires
the following intractable univariate variational expectation:
\begin{eqnarray*}
\mbe_{q\left(\theta;\eta\right)}\left[\log\left(1-p_{it}\right)\right] & = &
    \mbe_{q\left(\theta;\eta\right)}\left[\log\left(1
        -\frac{e^{\rho_{it}}}{1+e^{\rho_{it}}}\right)\right].
\end{eqnarray*}
We used the re-parameterization trick and four points of Gauss-Hermite
quadrature to estimate this integral for each observation. See
\prettyref{app:glmm_details} for more details.

We optimized the variational objective using the conjugate gradient
Newton's trust region method, \texttt{trust-ncg}, of \texttt{scipy.optimize}.
One advantage of \texttt{trust-ncg} is that it performs second-order
optimization but requires only Hessian-vector products, which can
be computed quickly by \texttt{autograd} without constructing the
full Hessian. The MFVB fit took $\glmmVBTime$ seconds, roughly $\glmmSpeedup$
times faster than MCMC with Stan.

With variational parameters for each random effect $u_{t}$, $\klhess$
is a $\glmmHessDim\times\glmmHessDim$ dimensional matrix. Consequently,
evaluating $\klhess$ directly as a dense matrix using \texttt{autograd}
would have been prohibitively time-consuming. Fortunately, our model
can be decomposed into global and local parameters, and the Hessian
term $\klhess$ in \prettyref{thm:lrvb_formula} is extremely sparse.
In the notation of \prettyref{subsec:lrvb_implementation}, take
$\theta_{glob}=\left(\beta^{\trans},\mu,\tau\right)^{\trans},$
take $\theta_{loc,t}=u_{t}$, and stack the variational parameters
as $\eta=\left(\eta_{glob}^{\trans},\eta_{loc,1},...,\eta_{loc,T}\right)^{\trans}$.
The cross terms in $\klhess$ between the local variables vanish:
\begin{align*}
\frac{\partial^{2}KL\left(q\left(\theta;\eta\right)||
    \pthetapost\left(\theta\right)\right)}{\partial\eta_{loc,t_{1}}
    \partial\eta_{loc,t_{2}}} & =0\textrm{ for all }t_{1}\ne t_{2}.
\end{align*}
Equivalently, note that the full likelihood in \prettyref{app:glmm_details},
\prettyref{eq:glmm_log_lik}, has no cross terms between $u_{t_{1}}$
and $u_{t_{2}}$ for $t_{1}\ne t_{2}$. As the dimension $T$ of the
data grows, so does the length of $\eta$. However, the dimension
of $\eta_{glob}$ remains constant, and $\klhess$ remains easy to
invert. We show an example of the sparsity pattern of the first few
rows and columns of $\klhess$ in \fig{LogitGLMMHessianSparsity}
.

\begin{figure}[h]
\centering{}\includegraphics[width=2in]{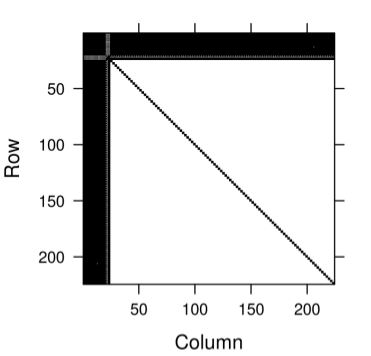}
\caption{Sparsity pattern of top-left sub-matrix of $\protect\klhess$
for the logit GLMM model.
The axis numbers represent indices within $\eta$,
and black indicates non-zero entries of $\protect\klhess$.}
\label{fig:LogitGLMMHessianSparsity}
\end{figure}

Taking advantage of this sparsity pattern, we used \texttt{autograd}
to calculate the Hessian of the KL divergence one group at a time
and assembled the results in a sparse matrix using the \texttt{scipy.sparse}
Python package. Even so, calculating the entire sparse Hessian took
$\glmmHessianTime$ seconds, and solving the system $\klhess^{-1}\eggrad^{\trans}$
using \texttt{scipy.sparse.linalg.spsolve} took an additional $\glmmInverseTime$
seconds. These results show that the evaluation and inversion of $\klhess$
was several times more costly than optimizing the variational objective
itself. (Of course, the whole procedure remains much faster than running
MCMC with Stan.)

We note, however, that instead of the direct approach to calculating
$\klhess^{-1}\eggrad^{\trans}$ one can use the conjugate gradient algorithm of
\texttt{sp.sparse.linalg.cg} \citep[Chapter 5]{nocedalwright:1999:numerical}
together with the fast Hessian-vector products of \texttt{autograd} to query one
column at a time of $\klhess^{-1}\eggrad^{\trans}$. On a typical column of
$\klhess^{-1}\eggrad^{\trans}$ in our experiment, calculating the conjugate
gradient took only $\glmmCGRowTime$ seconds (corresponding to $\glmmCGRowIters$
Hessian-vector products in the conjugate gradient algorithm). Thus, for example,
one could calculate the columns of $\klhess^{-1}\eggrad^{\trans}$ corresponding
to the expectations of the global variables $\beta$ in only
$\glmmCGRowTime\times K_{x}=\glmmCGBetaTime$ seconds, which is much less time
than it would take to compute the entire $\klhess^{-1}\eggrad^{\trans}$ for both
$\beta$ and every random effect in $u$.

For the Laplace approximation, we calculated the MAP estimator and
$\laphess$ using Python code similar to that used for the MFVB estimates.
We observe that the MFVB approximation to posterior means would
be expected to improve on the MAP estimator only in cases when there is
both substantial uncertainty in some parameters and when this uncertainty,
through nonlinear dependence between parameters, affects the values
of posterior means. These circumstances obtain in the logistic GLMM
model with sparse per-advertiser data since the random effects $u_{t}$
will be quite uncertain and the other posterior means depend on them
through the nonlinear logistic function.

\subsubsection{Posterior Approximation Results\label{subsec:glmm_Means-and-variances}}

In this section, we assess the accuracy of the MFVB, Laplace, and LRVB methods
as approximations to $\mbe_{\pzeropost}\left[\gtheta\right]$ and
$\pcov\left(\gtheta\right)$. We take the MCMC estimates as ground truth.
Although, as discussed in \prettyref{subsec:glmm_model}, we are principally
interested in the parameters
$\gtheta=\left(\beta^{\trans},u_{1},...,u_{T}\right)^{\trans}$, we will report
the results for all parameters for completeness. For readability, the tables and
graphs show results for a random selection of the components of the random
effects $u$.

\subsubsection{Posterior Means}

\begin{table}
\begin{center}\begin{tabular}{lrrrrr}
  \hline
Parameter & MCMC & MFVB & MAP & MCMC std. err. & Eff. \# of MCMC draws \\ 
  \hline
$\beta_{1}$ & 1.454 & 1.447 & 1.899 & 0.02067 & 33 \\ 
  $\beta_{2}$ & 0.031 & 0.033 & 0.198 & 0.00025 & 5000 \\ 
  $\beta_{3}$ & 0.110 & 0.110 & 0.103 & 0.00028 & 5000 \\ 
  $\beta_{4}$ & -0.172 & -0.173 & -0.173 & 0.00016 & 5000 \\ 
  $\beta_{5}$ & 0.273 & 0.273 & 0.280 & 0.00042 & 5000 \\ 
  $\mu$ & 2.041 & 2.041 & 3.701 & 0.04208 & 28 \\ 
  $\tau$ & 0.892 & 0.823 & 827.724 & 0.00051 & 1232 \\ 
  $u_{1431}$ & 1.752 & 1.757 & 3.700 & 0.00937 & 5000 \\ 
  $u_{4150}$ & 1.217 & 1.240 & 3.699 & 0.01022 & 5000 \\ 
  $u_{4575}$ & 2.427 & 2.413 & 3.702 & 0.00936 & 5000 \\ 
  $u_{4685}$ & 3.650 & 3.633 & 3.706 & 0.00862 & 5000 \\ 
   \hline
\end{tabular}
\end{center}
\caption{Results for the estimation of the posterior means\label{tab:mean_results}}
\end{table}

\begin{knitrout}
\definecolor{shadecolor}{rgb}{0.969, 0.969, 0.969}\color{fgcolor}\begin{figure}[t]

{\centering \includegraphics[width=0.98\linewidth,height=0.343\linewidth]{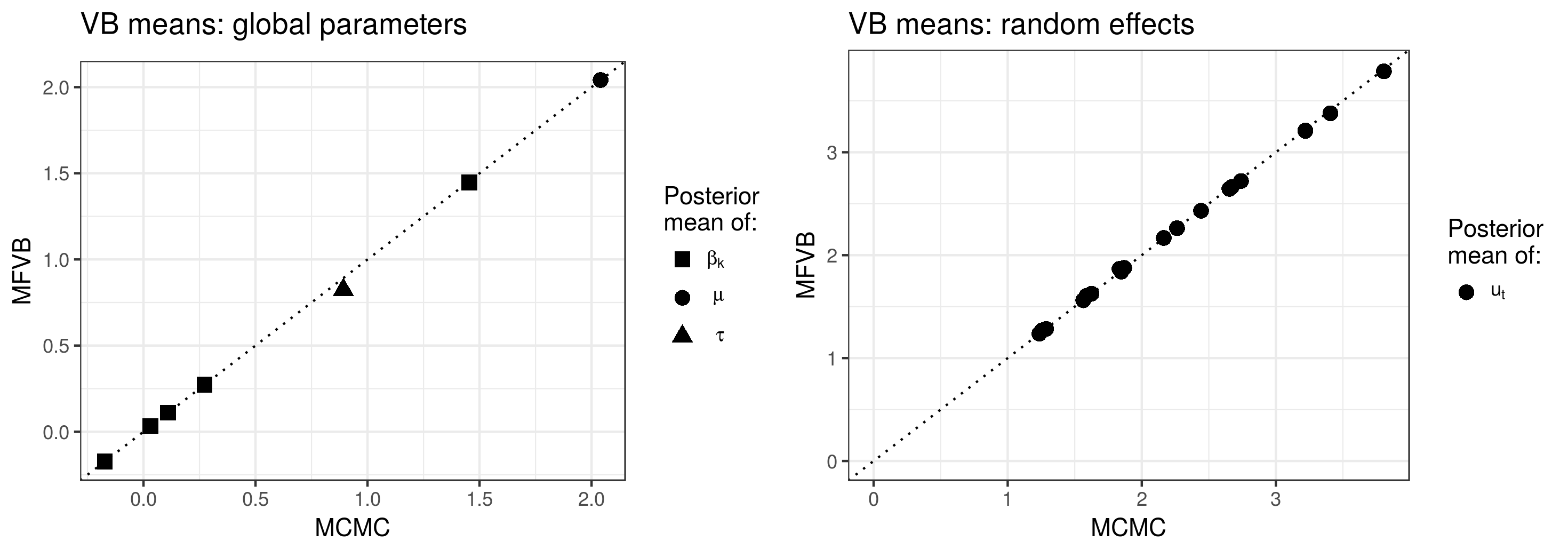} 

}

\caption[Comparison of MCMC and MFVB means]{Comparison of MCMC and MFVB means}\label{fig:LogitGLMMMCMCComparisonMeans}
\end{figure}

\end{knitrout}

\begin{knitrout}
\definecolor{shadecolor}{rgb}{0.969, 0.969, 0.969}\color{fgcolor}\begin{figure}[t]

{\centering \includegraphics[width=0.98\linewidth,height=0.343\linewidth]{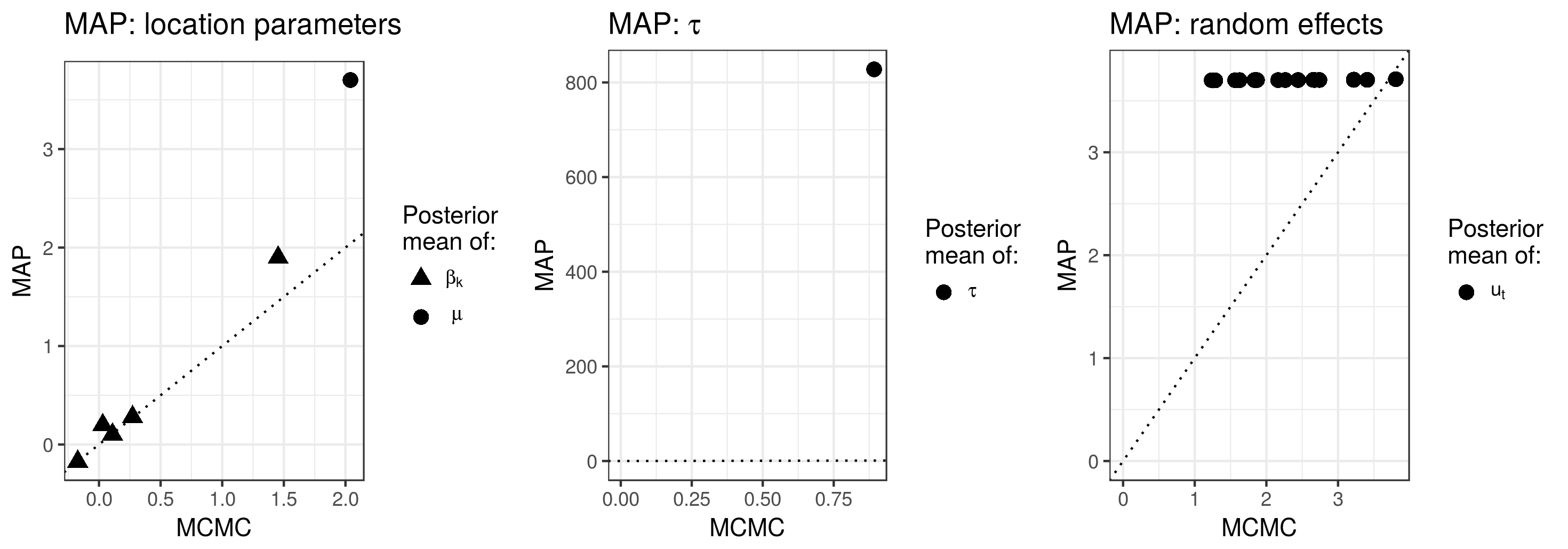} 

}

\caption[Comparison of MCMC and Laplace means]{Comparison of MCMC and Laplace means}\label{fig:LogitGLMMMapComparisonMeans}
\end{figure}

\end{knitrout}

We begin by comparing the posterior means in \prettyref{tab:mean_results},
\fig{LogitGLMMMCMCComparisonMeans}, and \fig{LogitGLMMMapComparisonMeans}.
We first note that, despite the long running time for MCMC, the $\beta_{1}$
and $\mu$ parameters did not mix well in the MCMC sample, as is reflected
in the MCMC standard error and effective number of draws columns of
\prettyref{tab:mean_results}. The $x_{it}$ data corresponding to
$\beta_{1}$ contained fewer distinct values than the other columns
of $x$, which perhaps led to some co-linearity between $\beta_{1}$
and $\mu$ in the posterior. This co-linearity could have caused both poor MCMC
mixing and, perhaps, excessive measured prior sensitivity, as discussed
below in \prettyref{subsec:glmm_sensitivity}. Although we will report
the results for both $\beta_{1}$ and $\mu$ without further comment,
the reader should bear in mind that the MCMC ``ground truth'' for
these two parameters is somewhat suspect.

The results in \prettyref{tab:mean_results} and \fig{LogitGLMMMCMCComparisonMeans}
show that MFVB does an excellent job of approximating the posterior
means in this particular case, even for the random effects $u$ and
the related parameters $\mu$ and $\tau$. In contrast, the MAP estimator
does reasonably well only for certain components of $\beta$ and does
extremely poorly for the random effects parameters. As can be seen
in \fig{LogitGLMMMapComparisonMeans}, the MAP estimate dramatically
overestimates the information $\tau$ of the random effect distribution
(that is, it underestimates the variance). As a consequence, it estimates
all the random effects to have essentially the same value, leading
to mis-estimation of some location parameters, including both $\mu$
and some components of $\beta$. Since the MAP estimator performed
so poorly at estimating the random effect means, we will not consider
it any further.

\subsubsection{Posterior Covariances}

\begin{table}
\begin{center}\begin{tabular}{lrrr}
  \hline
Parameter & MCMC & LRVB & Uncorrected MFVB \\ 
  \hline
$\beta_{1}$ & 0.118 & 0.103 & 0.005 \\ 
  $\beta_{2}$ & 0.018 & 0.018 & 0.004 \\ 
  $\beta_{3}$ & 0.020 & 0.020 & 0.004 \\ 
  $\beta_{4}$ & 0.012 & 0.012 & 0.004 \\ 
  $\beta_{5}$ & 0.029 & 0.030 & 0.004 \\ 
  $\mu$ & 0.223 & 0.192 & 0.016 \\ 
  $\tau$ & 0.018 & 0.033 & 0.016 \\ 
  $u_{1431}$ & 0.663 & 0.649 & 0.605 \\ 
  $u_{4150}$ & 0.723 & 0.707 & 0.662 \\ 
  $u_{4575}$ & 0.662 & 0.649 & 0.615 \\ 
  $u_{4685}$ & 0.610 & 0.607 & 0.579 \\ 
   \hline
\end{tabular}
\end{center}
\caption{Standard deviation results\label{tab:sd_results}}
\end{table}

\begin{knitrout}
\definecolor{shadecolor}{rgb}{0.969, 0.969, 0.969}\color{fgcolor}\begin{figure}[t]

{\centering \includegraphics[width=0.98\linewidth,height=0.514\linewidth]{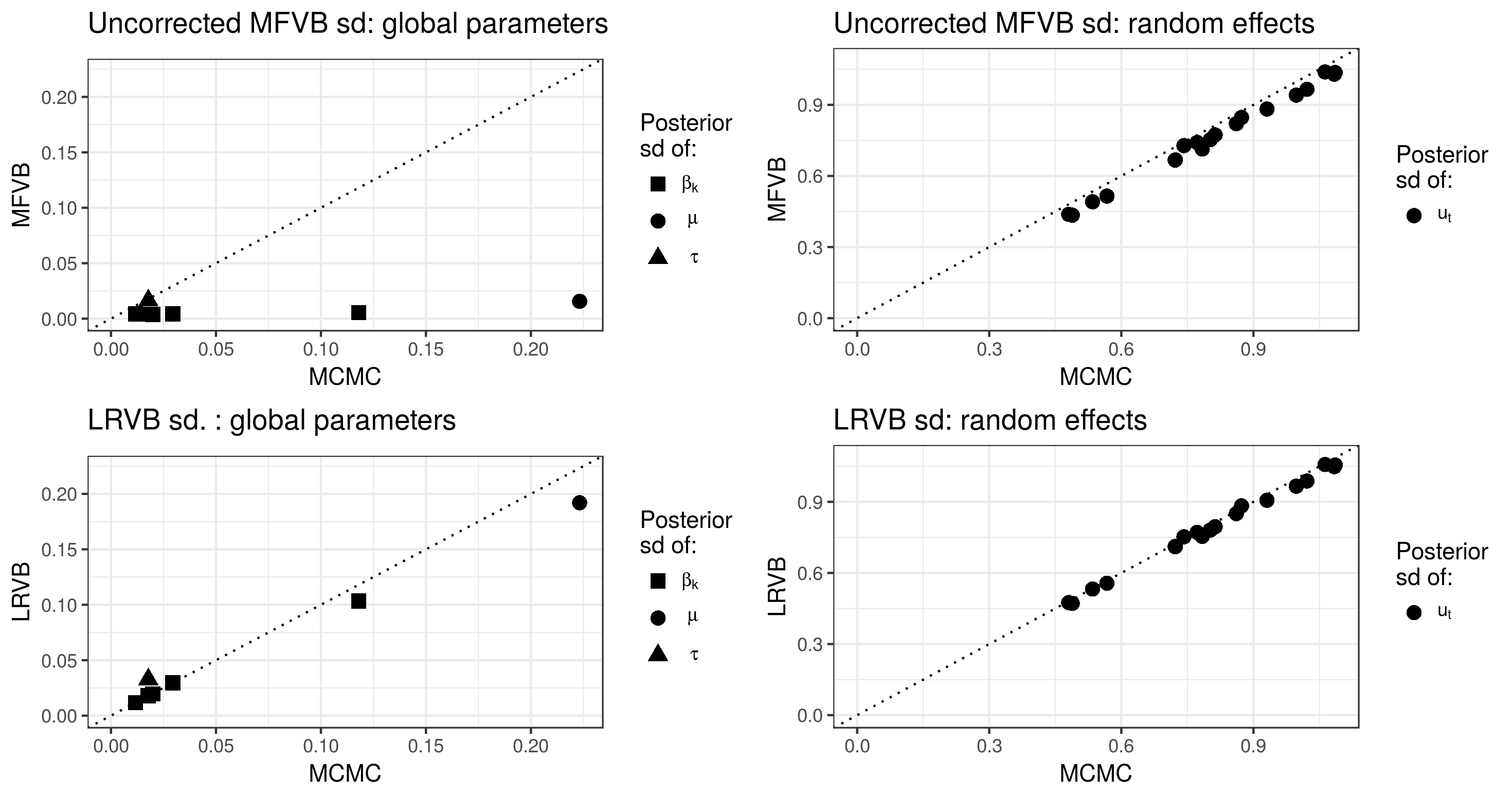} 

}

\caption[Comparison of MCMC, MFVB, and LRVB standard deviations]{Comparison of MCMC, MFVB, and LRVB standard deviations}\label{fig:LogitGLMMMCMCComparisonSds}
\end{figure}

\end{knitrout}

\begin{knitrout}
\definecolor{shadecolor}{rgb}{0.969, 0.969, 0.969}\color{fgcolor}\begin{figure}[t]

{\centering \includegraphics[width=0.98\linewidth,height=0.343\linewidth]{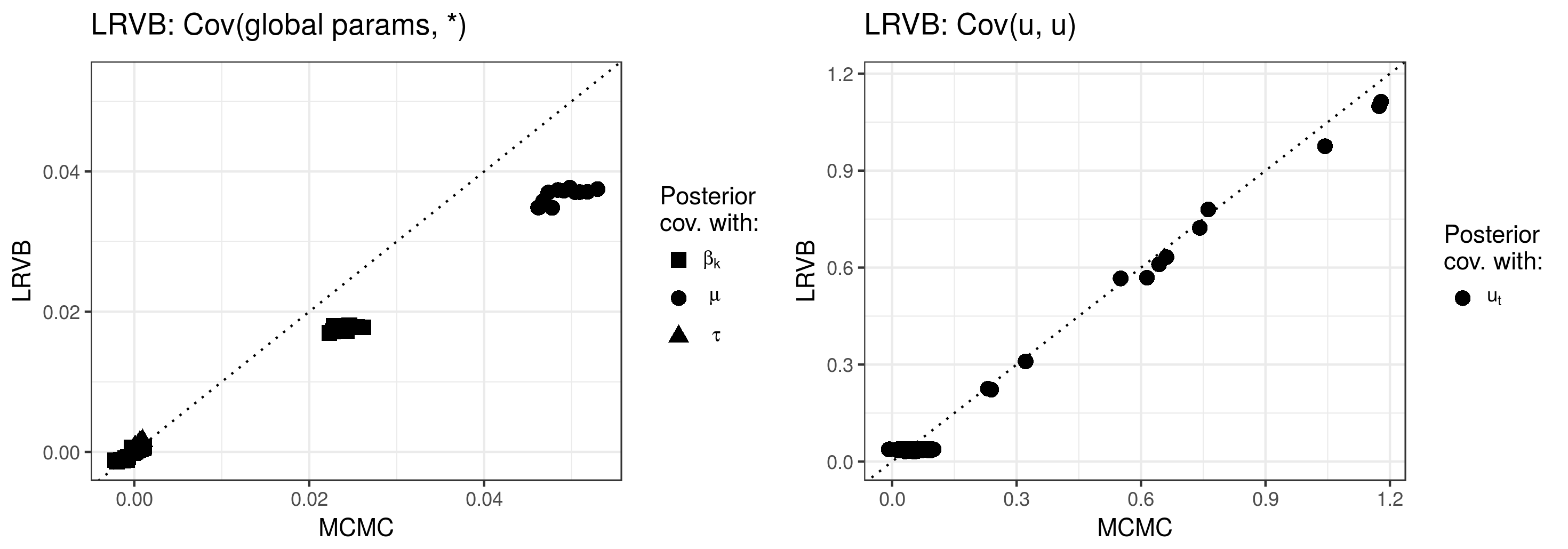} 

}

\caption[Comparison of MCMC and LRVB off-diagonal covariances]{Comparison of MCMC and LRVB off-diagonal covariances}\label{fig:LogitGLMMMCMCComparisonCovariances}
\end{figure}

\end{knitrout}
We now assess the accuracy of our estimates of $\pcov\left(\gtheta\right)$.
The results for the marginal standard deviations are shown in \prettyref{tab:sd_results}
and \fig{LogitGLMMMCMCComparisonSds}. We refer to the standard deviations
of $\qcov\left(\gtheta\right)$ as the ``uncorrected MFVB'' estimate,
and of $\lrvbcov\left(\gtheta\right)$ as the ``LRVB'' estimate.
The uncorrected MFVB variance estimates of $\beta$ are particularly
inaccurate, but the LRVB variances match the exact posterior closely.

In \fig{LogitGLMMMCMCComparisonCovariances}, we compare the off-diagonal
elements of $\lrvbcov\left(\gtheta\right)$ and $\pcov\left(\gtheta\right)$.
These covariances are zero, by definition, in the uncorrected MFVB
estimates $\qcov\left(\gtheta\right)$. The left panel of \fig{LogitGLMMMCMCComparisonCovariances}
shows the estimated covariances between the global parameters and
all other parameters, including the random effects, and the right
panel shows only the covariances amongst the random effects. The LRVB
covariances are quite accurate, particularly when we recall that the MCMC
draws of $\mu$ may be inaccurate due to poor mixing.

\subsubsection{Parametric Sensitivity Results\label{subsec:glmm_sensitivity}}

\begin{table}
\begin{center}\begin{tabular}{lrrrrrrr}
  \hline
  & $\beta_{0}$ & $\tau_{\beta}$ & $\gamma_{\beta}$ & $\mu_0$ & $\tau_{\mu}$ & $\alpha_{\tau}$ & $\beta{\tau}$ \\ 
  \hline
$\mu$ & 0.0094 & -0.1333 & -0.0510 & 0.0019 & -0.3920 & 0.0058 & -0.0048 \\ 
  $\tau$ & 0.0009 & -0.0086 & -0.0142 & 0.0003 & -0.0575 & 0.0398 & -0.0328 \\ 
  $\beta_{1}$ & 0.0089 & -0.1464 & -0.0095 & 0.0017 & -0.3503 & 0.0022 & -0.0018 \\ 
  $\beta_{2}$ & 0.0012 & -0.0143 & -0.0113 & 0.0003 & -0.0516 & 0.0062 & -0.0051 \\ 
  $\beta_{3}$ & -0.0035 & 0.0627 & -0.0081 & -0.0006 & 0.1218 & -0.0003 & 0.0002 \\ 
  $\beta_{4}$ & 0.0018 & -0.0037 & -0.0540 & 0.0004 & -0.0835 & 0.0002 & -0.0002 \\ 
  $\beta_{5}$ & 0.0002 & 0.0308 & -0.0695 & 0.0002 & -0.0383 & 0.0011 & -0.0009 \\ 
  $u_{1431}$ & 0.0028 & -0.0397 & -0.0159 & 0.0006 & -0.1169 & 0.0018 & -0.0015 \\ 
  $u_{4150}$ & 0.0026 & -0.0368 & -0.0146 & 0.0005 & -0.1083 & 0.0022 & -0.0018 \\ 
  $u_{4575}$ & 0.0028 & -0.0406 & -0.0138 & 0.0006 & -0.1153 & 0.0011 & -0.0009 \\ 
  $u_{4685}$ & 0.0028 & -0.0409 & -0.0142 & 0.0006 & -0.1163 & 0.0003 & -0.0002 \\ 
   \hline
\end{tabular}
\end{center}
\caption{MFVB normalized prior sensitivity results\label{tab:prior_sens}}
\end{table}

\begin{knitrout}
\definecolor{shadecolor}{rgb}{0.969, 0.969, 0.969}\color{fgcolor}\begin{figure}[t]

{\centering \includegraphics[width=0.98\linewidth,height=0.343\linewidth]{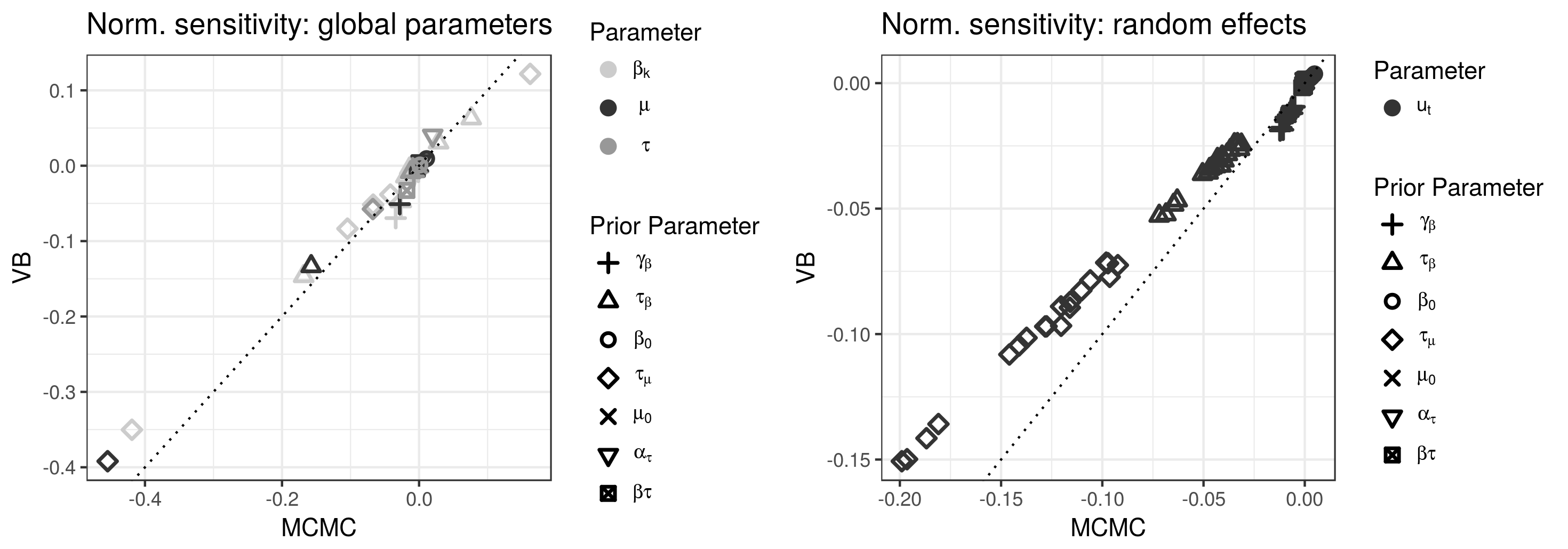} 

}

\caption[Comparison of MCMC and MFVB normalized parametric sensitivity results]{Comparison of MCMC and MFVB normalized parametric sensitivity results}\label{fig:LogitGLMMParametricRobustness}
\end{figure}

\end{knitrout}

\begin{knitrout}
\definecolor{shadecolor}{rgb}{0.969, 0.969, 0.969}\color{fgcolor}\begin{figure}[t]

{\centering \includegraphics[width=0.98\linewidth,height=0.343\linewidth]{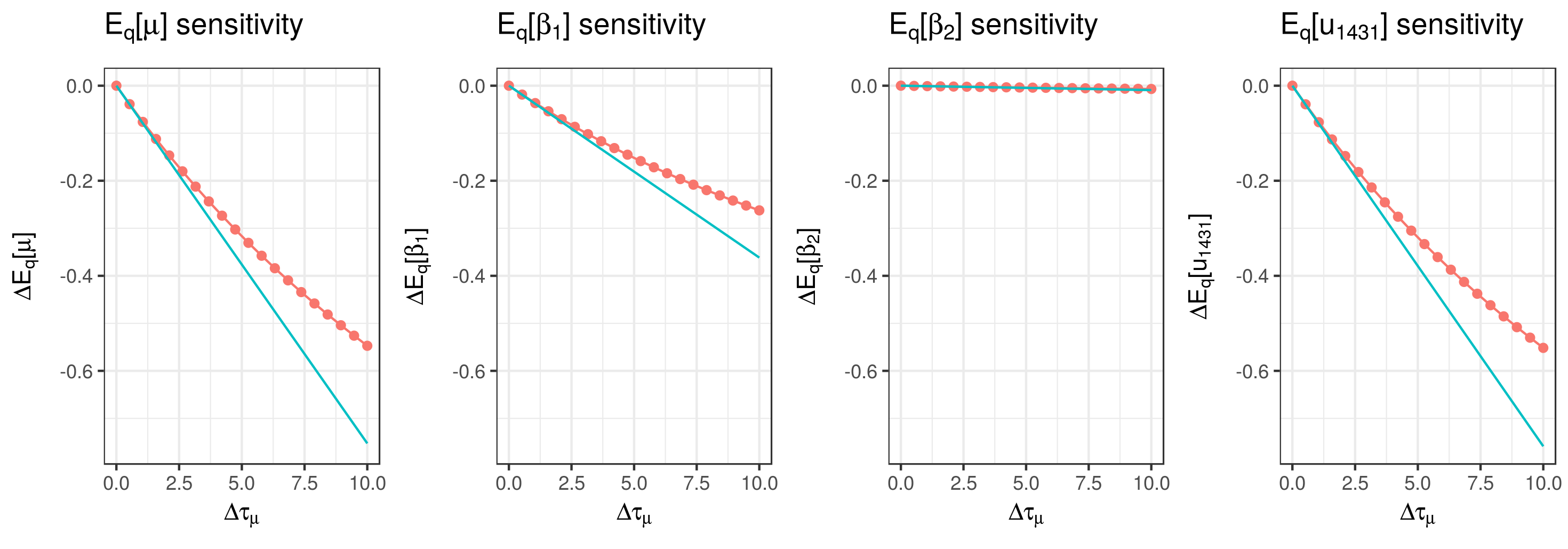} 

}

\caption[MFVB sensitivity as measured both by linear approximation (blue) and re-fitting (red)]{MFVB sensitivity as measured both by linear approximation (blue) and re-fitting (red)}\label{fig:LogitGLMMRefit}
\end{figure}

\end{knitrout}
Finally, we compare the MFVB prior sensitivity measures of \prettyref{subsec:lrvb_robustness}
to the covariance-based MCMC sensitivity measures of \prettyref{subsec:local_sensitivity}.
Since sensitivity is of practical interest only when it is of comparable
order to the posterior uncertainty, we report sensitivities normalized
by the appropriate standard deviation. That is, we report
$\psenshat/\sqrt{\textrm{diag}\left(\hat{\cov}_{\pzeropost}\left(\gtheta\right)\right)}$,
and $\qsens/\sqrt{\textrm{diag}\left(\lrvbcov\left(\gtheta\right)\right)}$,
etc., where $\textrm{diag}\left(\cdot\right)$ denotes the diagonal
vector of a matrix, and the division is element-wise. Note that we
use the sensitivity-based variance estimates $\lrvbcov$, not the
uncorrected MFVB estimates $\qcov$, to normalize the variational
sensitivities. We refer to a sensitivity divided by a standard deviation
as a ``normalized'' sensitivity.

The comparison between the MCMC and MFVB sensitivity measures is shown in
\fig{LogitGLMMParametricRobustness}. The MFVB and MCMC sensitivities correspond
very closely, though the MFVB means appear to be slightly more sensitive to the
prior parameters than the MCMC means. This close correspondence should not be
surprising. As shown in \prettyref{subsec:glmm_Means-and-variances}, the MFVB
and MCMC posterior means match quite closely. If we assume, reasonably, that
they continue to match to first order
in a neighborhood of our original prior parameters, then
\prettyref{cond:vb_accurate} will hold and we would expect
$\psenshat\approx\qsens$.

\prettyref{tab:prior_sens} shows the detailed MFVB normalized sensitivity
results. Each entry is the sensitivity of the MFVB mean of the row's
parameter to the column's prior parameter. One can see that several
parameters are quite sensitive to the information parameter prior
$\tau_{\mu}.$ In particular, $\mbe_{\pthetapost}\left[\mu\right]$
and $\mbe_{\pthetapost}\left[\beta_{1}\right]$ are expected to change
approximately $-0.39$ and $-0.35$ standard deviations, respectively,
for every unit change in $\tau_{\mu}$. This size of change could
be practically significant (assuming that such a change in $\tau_{\mu}$
is subjectively plausible). To investigate this sensitivity further,
we re-fit the MFVB model at a range of values of the prior parameter
$\tau_{\mu}$, assessing the accuracy of the linear approximation
to the sensitivity. The results are shown in \fig{LogitGLMMRefit}.
Even for very large changes in $\tau_{\mu}$---resulting
in changes to $\mbe_{\pthetapost}\left[\mu\right]$ and $\mbe_{\pthetapost}\left[\beta_{1}\right]$
far in excess of two standard deviations---the linear approximation
holds up reasonably well. \fig{LogitGLMMRefit} also shows a (randomly
selected) random effect to be quite sensitive, though not to a practically
important degree relative to its posterior standard deviation. The
insensitivity of $\mbe_{\pthetapost}\left[\beta_{2}\right]$ is also
confirmed. Of course, the accuracy of the linear approximation cannot
be guaranteed to hold as well in general as it does in this particular
case, and the quick and reliable evaluation of the linearity assumption
without re-fitting the model remains interesting future work.

Since we started the MFVB optimization close to the new, perturbed
optimum, each new MFVB fit took only $\glmmVBRefitTime$ seconds
on average. Re-estimating the MCMC posterior so many times would have
been extremely time-consuming. (Note that importance sampling would
be useless for prior parameter changes that moved the posterior so
far from the original draws.) The considerable sensitivity of this
model to a particular prior parameter, which is perhaps surprising
on such a large data set, illustrates the value of having fast, general
tools for discovering and evaluating prior sensitivity. Our framework
provides just such a set of tools.

\section{Conclusion
\label{sec:Conclusion}}

By calculating the sensitivity of MFVB posterior means to model perturbations,
we are able to provide two important practical tools for MFVB posterior
approximations: improved variance estimates and measures of prior
robustness. When MFVB models are implemented in software that supports
automatic differentiation, our methods are fast, scalable, and require
little additional coding beyond the MFVB objective itself. In our experiments,
we were able to calculate accurate posterior means, covariances, and
prior sensitivity measures orders of magnitude more quickly than MCMC.

\iftoggle{arxivformat}{
\section*{Acknowledgements}
We are grateful to the anonymous reviewers for their insightful comments and
suggestions.
Ryan Giordano's research was funded in part by the National Energy Research
Scientific Computing Center, a DOE Office of Science User Facility supported by
the Office of Science of the U.S. Department of Energy under Contract number
DE-AC02- 05CH11231, and in part by the Gordon and Betty Moore Foundation through
Grant GBMF3834 and by the Alfred P. Sloan Foundation through Grant 2013-10-27 to
the University of California, Berkeley.
Tamara Broderick's research was supported in part by an NSF CAREER Award, an ARO
YIP Award, and a Google Faculty Research Award.
This work was also supported by the DARPA program on Lifelong Learning Machines,
the Office of Naval Research under contract/grant number N00014-17-1-2072, and
the Army Research Office under grant number W911NF-17-1-0304.
}
{
\acks{
We are grateful to the anonymous reviewers for their insightful comments and
suggestions.
Ryan Giordano's research was funded in part by the National Energy Research
Scientific Computing Center, a DOE Office of Science User Facility supported by
the Office of Science of the U.S. Department of Energy under Contract number
DE-AC02- 05CH11231, and in part by the Gordon and Betty Moore Foundation through
Grant GBMF3834 and by the Alfred P. Sloan Foundation through Grant 2013-10-27 to
the University of California, Berkeley.
Tamara Broderick's research was supported in part by an NSF CAREER Award, an ARO
YIP Award, and a Google Faculty Research Award.
This work was also supported by the DARPA program on Lifelong Learning Machines,
the Office of Naval Research under contract/grant number N00014-17-1-2072, and
the Army Research Office under grant number W911NF-17-1-0304.
}
}
 
\clearpage{}

\appendix
\appendixpage

\section{\label{app:sens_and_cov}Proof of \prettyref{thm:sens_cov}}

In this section we prove \prettyref{thm:sens_cov}.
\begin{proof}
Under \prettyref{assu:exchange_order}, we can exchange differentiation
and integration in
$\frac{\partial}{\partial \alpha^{\trans}} \int \pzeropost\left(\theta\right)\exp\left(\covdens\right)
g\left(\theta\right)\lambda\left(d\theta\right)$
and $\frac{\partial}{\partial \alpha^{\trans}} \int \pzeropost\left(\theta\right)\exp\left(\covdens\right)\lambda\left(d\theta\right)$
by \citet[Chapter 5-11, Theorem 18]{fleming:1965:functions}, which
ultimately depends on the Lebesgue dominated convergence theorem.
By \prettyref{assu:exchange_order},
$\mbe_{\pthetapost}\left[g\left(\theta\right)\right]$
is well-defined for $\alpha\in\alphazeroball$ and
\begin{align*}
\frac{\partial \pzeropost\left(\theta\right)\exp\left(\covdens\right)}{\partial\alpha} &
    =\pzeropost\left(\theta\right)\exp\left(\covdens\right)
    \frac{\partial\covdens}{\partial\alpha}\quad\lambda\textrm{-almost everywhere}.
\end{align*}

Armed with these facts, we can directly compute
\begin{align*}
\left.\frac{d\mbe_{\pthetapost}\left[g\left(\theta\right)\right]}
    {d\alpha^{\trans}}\right|_{\alpha_{0}} &
    =\left.\frac{d}{d\alpha^{\trans}}
    \frac{\int g\left(\theta\right)
          \pzeropost\left(\theta\right)\exp\left(\covdens\right)
    \lambda\left(d\theta\right)}
    {\int \pzeropost\left(\theta\right)\exp\left(\covdens\right)
    \lambda\left(d\theta\right)}\right|_{\alpha_{0}}\\
 & =\frac{\left.\frac{\partial}{\partial\alpha^{\trans}}
    \int g\left(\theta\right)
    \pzeropost\left(\theta\right)\exp\left(\covdens\right)
    \lambda\left(d\theta\right)\right|_{\alpha_{0}}}
    {\int \pzeropost\left(\theta\right)
    \exp\left(\covdens[][\alpha_{0}]\right)\lambda\left(d\theta\right)}-
        \mbe_{\pzeropost}\left[g\left(\theta\right)\right]
        \frac{\left.\frac{\partial}{\partial\alpha^{\trans}}
        \int \pzeropost\left(\theta\right)\exp\left(\covdens\right)
        \lambda\left(d\theta\right)\right|_{\alpha_{0}}}
        {\int \pzeropost\left(\theta\right)\exp\left(\covdens[][\alpha_{0}]\right)
        \lambda\left(d\theta\right)}\\
 & =\frac{\int g\left(\theta\right)\pzeropost\left(\theta\right)
    \exp\left(\covdens\right)\left.\frac{\partial\covdens}{\partial\alpha}
    \right|_{\alpha_{0}}\lambda\left(d\theta\right)}
    {\int \pzeropost\left(\theta\right)\exp\left(\covdens[][\alpha_{0}]\right)
    \lambda\left(d\theta\right)}-\mbe_{\pzeropost}
    \left[g\left(\theta\right)\right]\mbe_{\pzeropost}
    \left[\left.\frac{\partial\covdens}{\partial\alpha}\right|_{\alpha_{0}}\right]\\
 & =\cov_{\pzeropost}\left(g\left(\theta\right),\left.
    \frac{\partial\covdens}{\partial\alpha}\right|_{\alpha_{0}}\right).
\end{align*}
\end{proof}

\section{Comparison With MCMC Importance Sampling\label{app:mcmc_importance_sampling}}

In this section, we show that using importance sampling with MCMC
samples to calculate the local sensitivity in \prettyref{eq:local_robustness}
is precisely equivalent to using the same MCMC samples to estimate
the covariance in \prettyref{eq:covariance_sensitivity_general} directly.
For this section, will suppose that \prettyref{assu:exchange_order}
holds. Further suppose, without loss of generality, we have samples
$\theta_{i}$ drawn IID from $\pzeropost\left(\theta\right)$:
\begin{align*}
\theta_{n} & \iid\pzeropost\left(\theta\right),\textrm{ for }n=1,...,N_s\\
\mbe_{\pzeropost}\left[g\left(\theta\right)\right] & \approx\frac{1}{N_s}\sum_{n=1}^{N_s}g\left(\theta_{n}\right).
\end{align*}
Typically we cannot compute the dependence of the normalizing constant
$\int p\left(\theta'\right)\exp\left(\covdens[\theta']\right)\lambda\left(d\theta'\right)$
on $\alpha$, so we use the following importance sampling estimate
for $\mbe_{\pthetapost}\left[g\left(\theta\right)\right]$ \citep[Chapter 9]{owen:2013:mcmcbook}:
\begin{align*}
w_{n} & =\exp\left(\covdens[\theta_{n}]-\covdens[\theta_{n}][\alpha_{0}]\right)\\
\tilde{w}_{n} & :=\frac{w_{n}}{\sum_{n'=1}^{N_s}w_{n'}}\\
\mbe_{\pthetapost}\left[g\left(\theta\right)\right] & \approx\sum_{n=1}^{N_s}\tilde{w}_{n}g\left(\theta_{n}\right).
\end{align*}
Note that $\left.\tilde{w}_{n}\right|_{\alpha_{0}}=\frac{1}{N_s}$,
so the importance sampling estimate recovers the ordinary sample mean
at $\alpha_{0}$. The derivatives of the weights are given by
\begin{align*}
\frac{\partial w_{n}}{\partial\alpha} & =
    w_{n}\frac{\partial\covdens[\theta_{n}]}{\partial\alpha}\\
\frac{\partial\tilde{w}_{n}}{\partial\alpha} & =
    \frac{\frac{\partial w_{n}}{\partial\alpha}}{\sum_{n'=1}^{N_s}w_{n'}}-
    \frac{w_{n}\sum_{n'=1}^{N_s}\frac{\partial w_{n'}}{\partial\alpha}}
    {\left(\sum_{n'=1}^{N_s}w_{n'}\right)^{2}}\\
 & =\frac{w_{n}}{\sum_{n'=1}^{N_s}w_{n'}}
    \frac{\partial\covdens[\theta_{n}]}{\partial\alpha} -
    \frac{w_{n}}{\sum_{n'=1}^{N_s}w_{n'}}\sum_{n'=1}^{N_s}
    \frac{w_{n}}{\sum_{n'=1}^{N_s}w_{n'}}
    \frac{\partial\covdens[\theta_{n'}]}{\partial\alpha}\\
 & =\tilde{w}_{n}\frac{\partial\covdens[\theta_{n}]}{\partial\alpha}-
    \tilde{w}_{n}\sum_{n'=1}^{N_s}\tilde{w}_{n'}
    \frac{\partial\covdens[\theta_{n'}]}{\partial\alpha}.
\end{align*}
It follows that
\begin{align*}
\left.\frac{\partial}{\partial\alpha}
    \sum_{n=1}^{N_s}\tilde{w}_{n}g\left(\theta_{n}\right)\right|_{\alpha_{0}} &
    =\sum_{n=1}^{N_s}\left.\left(\tilde{w}_{n}\frac{\partial\covdens[\theta_{n}]}
    {\partial\alpha}-
    \tilde{w}_{n}\sum_{n'=1}^{N_s}\tilde{w}_{n'}
    \frac{\partial\covdens[\theta_{n'}]}{\partial\alpha}\right)
    \right|_{\alpha_{0}}g\left(\theta_{n}\right)\\
 & =\frac{1}{N_s}\sum_{n=1}^{N_s}
    \left.\frac{\partial\covdens[\theta_{n}]}{\partial\alpha}\right|_{\alpha_{0}}
    g\left(\theta_{n}\right)
    - \left[ \frac{1}{N_s}\sum_{n=1}^{N_s}\left.
    \frac{\partial\covdens[\theta_{n}]}{\partial\alpha}\right|_{\alpha_{0}} \right]
    \left[ \frac{1}{N_s}\sum_{n=1}^{N_s}g\left(\theta_{n}\right) \right],
\end{align*}
which is precisely the sample version of the covariance in
\prettyref{thm:sens_cov}.

\section{Our Use of the Terms ``Sensitivity'' and ``Robustness''\label{app:sens_and_robustness}}

In this section we clarify our usage of the terms ``robustness''
and ``sensitivity.'' The quantity $\psens^{\trans}\left(\alpha-\alpha_{0}\right)$
measures the \emph{sensitivity} of $\epgtheta$ to perturbations in
the direction $\Delta\alpha$. Intuitively, as sensitivity increases,
robustness decreases, and, in this sense, sensitivity and robustness
are opposites of one another. However, we emphasize that sensitivity
is a clearly defined, measurable quantity and that robustness is a
subjective judgment informed by sensitivity, but also by many other
less objective considerations.

Suppose we have calculated $\psens$ from \prettyref{eq:local_robustness}
and found that it has a particular value. To determine whether our
model is robust, we must additionally decide
\begin{enumerate}
\item How large of a change in the prior, $||\alpha-\alpha_{0}||$, is plausible,
and
\item How large of a change in $\epgtheta$ is important.
\end{enumerate}
The set of plausible prior values necessarily remains a subjective
decision.\footnote{This decision can be cast in a formal decision theoretic
framework based on a partial ordering of subjective beliefs
\citep{insua:2012:robustaxioms}.} Whether or not a particular change in
$\epgtheta$ is important depends on the ultimate use of the posterior mean. For
example, the posterior standard deviation can be a guide: if the prior
sensitivity is swamped by the posterior uncertainty then it can be neglected
when reporting our subjective uncertainty about $g\left(\theta\right)$, and the
model is robust. Similarly, even if the prior sensitivity is much larger than
the posterior standard deviation but small enough that it would not affect any
actionable decision made on the basis of the value of $\epgtheta$, then the
model is robust. Intermediate values remain a matter of judgment. An illustration
of the relationship between sensitivity and robustness is shown in
\fig{robustness_vs_sensitivity}.

\begin{figure}
\centering{}\includegraphics[width=0.56\textwidth]{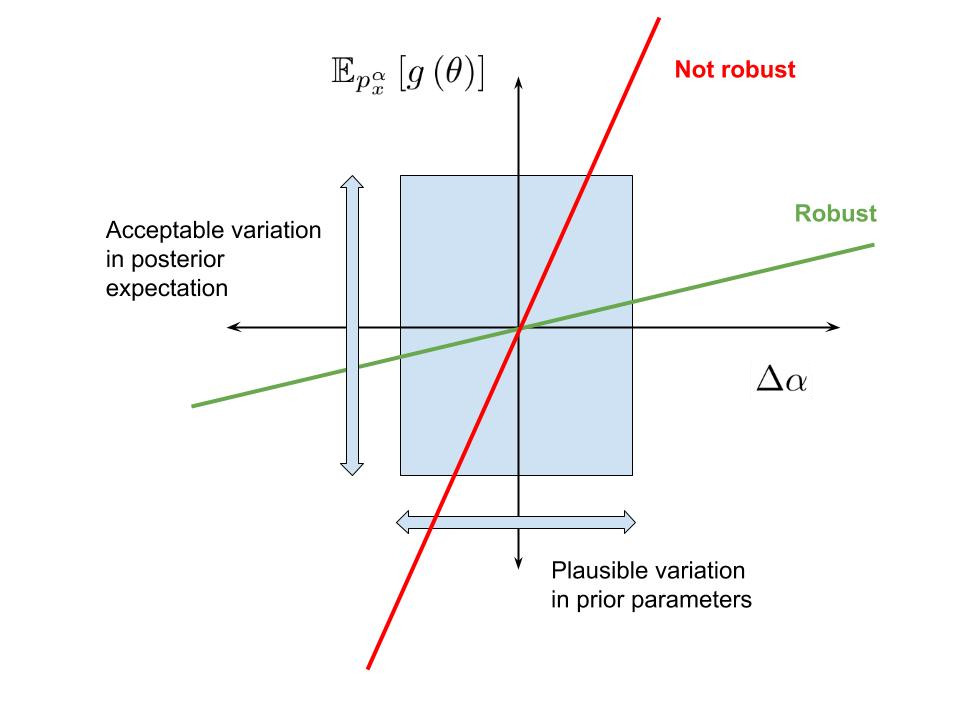}
\caption{The relationship between robustness and sensitivity\label{fig:robustness_vs_sensitivity} }
\end{figure}

Finally, we note that if $\mathcal{A}$ is small enough that $\epgtheta$
is roughly linear in $\alpha$ for $\alpha\in\mathcal{A}$, then calculating
\prettyref{eq:local_robustness} for all $\alpha\in\mathcal{A}$ and
finding the worst case can be thought of as a first-order approximation
to a global robustness estimate. Depending on the problem at hand,
this linearity assumption may not be plausible except for very small
$\mathcal{A}$. This weakness is inherent to the local robustness
approach. Nevertheless, even when the perturbations are valid only
for a small $\mathcal{A}$, these easily-calculable measures may still
provide valuable intuition about the potential modes of failure for
a model.

If $\gtheta$ is a scalar, it is natural to attempt to summarize the
high-dimensional vector $\psens$ in a single easily reported number
such as
\begin{align*}
\psens^{sup} & :=\sup_{\alpha:\left\Vert \alpha - \alpha_{0} \right\Vert \le1}
    \left|\psens^{\trans}\left(\alpha-\alpha_{0}\right)\right|.
\end{align*}
For example, the calculation of $\psens^{sup}$ is the principal ambition of
\citet{basu:1996:local}. The use of such summaries is also particularly common
in work that considers function-valued perturbations
\citep[e.g.,][]{gustafson:1996:localposterior,roos:2015:sensitivity}.
(Function-valued perturbations can be connected to the finite-dimensional
perturbations of the present work through the notion of the Gateaux derivative
\citep[Chapter 2.5]{huber:2011:robust}, the elaboration of which we leave to
future work.) Although the summary $\psens^{sup}$ has obvious merits, in the
present work we emphasize the calculation only of $\psens$ in the belief that
its interpretation is likely to vary from application to application and require
some critical thought and subjective judgment. For example, the unit ball
$\left\Vert \alpha - \alpha_{0}\right\Vert \le1$ (as in \citet{basu:1996:local}) may not make
sense as a subjective description of the range of plausible variability of
$p\left(\theta\vert\alpha\right)$. Consider, e.g.: why should the off-diagonal term of
a Wishart prior plausibly vary as widely as the mean of some other parameter,
when the two might not even have the same units? This problem is easily remedied
by choosing an appropriate scaling of the parameters and thereby making the unit
ball an appropriate range for the problem at hand, but the right scaling will
vary from problem to problem and necessarily be a somewhat subjective choice, so
we refrain from taking a stand on this decision. As another example, the
worst-case function-valued perturbations of
\citet{gustafson:1996:localmarginals,gustafson:1996:localposterior} require a
choice of a metric ball in function space whose meaning may not be intuitively
obvious, may provide worst-case perturbations that depend on the data to a
subjectively implausible degree, and may exhibit interesting but perhaps
counter-intuitive asymptotic behavior for different norms and perturbation
dimensions. Consequently, we do not attempt to prescribe a particular
one-size-fits-all summary measure. The local sensitivity $\psens$ is a
well-defined mathematical quantity. Its relationship to robustness must remain a
matter of judgment.

\section{\label{app:lrvb}Proof of \prettyref{thm:lrvb_formula}}

In this section we prove \prettyref{thm:lrvb_formula}.
\begin{proof}
For notational convenience, we will define
\begin{eqnarray*}
KL\left(\eta,\alpha\right) & := &
    KL\left(q\left(\theta;\eta\right)||\pthetapost\left(\theta\right)\right).
\end{eqnarray*}
By \prettyref{assu:opt_interior}, $\etatopt$ is both optimal and interior
for all $\alpha\in\alphazeroball$, and by \prettyref{assu:kl_differentiable},
$KL\left(\eta,\alpha\right)$ is continuously differentiable in $\eta$. Therefore, the
first-order conditions of the optimization problem in \prettyref{eq:kl_divergence} give:
\begin{eqnarray}
\left.\frac{\partial KL\left(\eta,\alpha\right)}{\partial\eta}\right|_{\eta=\etatopt}
    & = & 0\textrm{ for all }\alpha\in\alphazeroball.
    \label{eq:kl_first_order_condition}
\end{eqnarray}
$\left.\frac{\partial^{2}KL\left(\eta,\alpha\right)}{\partial\eta\partial\eta^{\trans}}\right|_{\alpha_{0}}$
is positive definite by the strict optimality of $\etaopt$ in
\prettyref{assu:opt_interior}, and
$\frac{\partial^{2}KL\left(\eta,\alpha\right)}{\partial\eta\partial\alpha^{\trans}}$
is continuous by \prettyref{assu:kl_differentiable}. It follows that $\etatopt$ is a
continuously differentiable function of $\alpha$ by application of the implicit
function theorem to the first-order condition in
\prettyref{eq:kl_first_order_condition} \citep[Chapter
4.6]{fleming:1965:functions}. So we can use the chain rule to take the total
derivative of \prettyref{eq:kl_first_order_condition} with respect to
$\alpha$.
\begin{eqnarray*}
\frac{d}{d\alpha}\left(\left.
    \frac{\partial KL\left(\eta,\alpha\right)}{\partial\eta}
    \right|_{\eta=\etatopt}\right) & = &
    0\textrm{ for all }\alpha\in\alphazeroball\Rightarrow\\
\left.\frac{\partial^{2}KL\left(\eta,\alpha\right)}
    {\partial\eta\partial\eta^{\trans}}\right|_{\eta=\etatopt}
    \frac{d\etatopt}{d\alpha^{\trans}}+\left.
    \frac{\partial^{2}KL\left(\eta,\alpha\right)}
        {\partial\eta\partial\alpha^{\trans}}\right|_{\eta=\etatopt} & = &
        0\textrm{ for all }\alpha\in\alphazeroball.
\end{eqnarray*}
The strict optimality of $KL\left(\eta,\alpha\right)$ at
$\etaopt\left(\alpha\right)$ in \prettyref{assu:opt_interior} requires that
$\left.\frac{\partial^{2}KL\left(\eta,t\right)}{\partial\eta\partial\eta^{T}}\right|_{\eta=\etatopt}$
be invertible. So we can evaluate at $\alpha=\alpha_{0}$ and solve to find that
\begin{eqnarray*}
\left.\frac{d\etatopt}{d\alpha^{\trans}}\right|_{\alpha_{0}} & = &
    -\left.\left(\frac{\partial^{2}KL\left(\eta,\alpha\right)}
    {\partial\eta\partial\eta^{\trans}}\right)^{-1}
    \frac{\partial^{2}KL\left(\eta,\alpha\right)}
    {\partial\eta\partial\alpha^{\trans}}\right|_{\eta=\etaoptzero,\alpha=\alpha_{0}}.
\end{eqnarray*}
$\mbe_{\qthetapost}\left[\gtheta\right]$ is a continuously differentiable
function of $\etatopt$ by \prettyref{assu:e_q_g_smooth}. So by the
chain rule and \prettyref{assu:kl_differentiable}, we have that
\begin{eqnarray*}
\left.\frac{d\mbe_{q\left(\theta;\eta\right)}
    \left[\gtheta\right]}{d\alpha^{\trans}}\right|_{\alpha_{0}} & = &
    \left.\frac{\partial\mbe_{q\left(\theta;\eta\right)}
    \left[g\left(\theta\right)\right]}{\partial\eta}\frac{d\etatopt}{d\alpha^{\trans}}
    \right|_{\eta=\etaoptzero,\alpha=\alpha_{0}}.
\end{eqnarray*}
Finally, we observe that
\begin{eqnarray*}
KL\left(\eta,\alpha\right) & = &
    \mbe_{q\left(\theta;\eta\right)}\left[
    \log q\left(\theta;\eta\right)-\log p\left(\theta\right) -
    \covdens\right]+\constant\Rightarrow\\
\left.\frac{\partial^{2}KL\left(\eta,\alpha\right)}
    {\partial\eta\partial\alpha^{\trans}}
    \right|_{\eta=\etaoptzero,\alpha=\alpha_{0}} & = &
-\left.\frac{\partial^{2}\mbe_{q\left(\theta;\eta\right)}
    \left[\covdens\right]}{\partial\eta\partial\alpha^{\trans}}
    \right|_{\eta=\etaoptzero,\alpha=\alpha_{0}}.
\end{eqnarray*}
Here, the term $\constant$ contains quantities that do not depend
on $\eta$. Plugging in gives the desired result.
\end{proof}

\section{Exactness of Multivariate Normal Posterior Means\label{app:mvn_exact}}

In this section, we show that the MFVB estimate of the posterior means
of a multivariate normal with known covariance is exact and that,
as an immediate consequence, the linear response covariance recovers
the exact posterior covariance, i.e., $\lrvbcov\left(\theta\right)=\pcov\left(\theta\right)$.

Suppose we are using MFVB to approximate a non-degenerate multivariate
normal posterior, i.e.,
\begin{align*}
\pzeropost\left(\theta\right) & =\mathcal{N}\left(\theta;\mu,\covmat\right)
\end{align*}
for full-rank $\covmat$. This posterior arises, for instance, given
a multivariate normal likelihood
$p\left(x\vert\mu\right)=\prod_{n=1:N}\mathcal{N}\left(x_{n}\vert\theta,\covmat_{x}\right)$
with known covariance $\covmat_{x}$ and a conjugate multivariate
normal prior on the unknown mean parameter $\theta\in\mathbb{R}^{K}$.
Additionally, even when the likelihood is non-normal or the prior
is not conjugate, the posterior may be closely approximated by a multivariate
normal distribution when a Bayesian central limit theorem can be applied
\citep[Chapter 8]{lecam:2012:asymptotics}.

We will consider an MFVB approximation to $\pzeropost\left(\theta\right)$.
Specifically, let the elements of the vector $\theta$ be given by
scalars $\theta_{k}$ for $k=1,...,K$, and take the MFVB normal approximation
with means $m_{k}$ and variances $v_{k}$:
\begin{align*}
\mathcal{Q} & =\left\{ q\left(\theta\right):
    q\left(\theta\right)=\prod_{k=1}^{K}
    \mathcal{N}\left(\theta_{k};m_{k},v_{k}\right)\right\}.
\end{align*}
In the notation of \prettyref{eq:q_mean_field_family}, we have $\eta_{k}=\left(m_{k},v_{k}\right)^{\trans}$
with $\Omega_{\eta}=\left\{ \eta:v_{k}>0,\forall k=1,...,K\right\} $.
The optimal variational parameters are given by $\eta_{k}^{*}=\left(m_{k}^{*},v_{k}^{*}\right)^{\trans}.$
\begin{lrvb_lemma}
    \label{lem:mvn_exact}Let
$\pzeropost\left(\theta\right)=\mathcal{N}\left(\theta;\mu,\covmat\right)$ for
full-rank $\covmat$ and let $\mathcal{Q}=\left\{
q\left(\theta\right):q\left(\theta\right)=\prod_{k=1}^{K}\mathcal{N}\left(\theta_{k};m_{k},v_{k}\right)\right\} $
be the mean field approximating family. Then there exists an
$\etaopt=\left(m^{*},v^{*}\right)$ that solves
\begin{align*}
\etaopt & =\argmin_{\eta:q\left(\theta;\eta\right)\in\mathcal{Q}}KL\left(q\left(\theta;\eta\right)||\pthetapost\left(\theta\right)\right)
\end{align*}
with $m^{*}=\mu$.
\end{lrvb_lemma}
\begin{proof}
    Let $\textrm{diag}\left(v\right)$ denote the $K\times K$ matrix
with the vector $v$ on the diagonal and zero elsewhere. Using the
fact that the entropy of a univariate normal distribution with variance
$v$ is $\frac{1}{2}\log v$ plus a constant, the variational objective in
\prettyref{eq:kl_divergence} is given by
\begin{align}
KL\left(q\left(\theta;\eta\right)||\pthetapost\left(\theta\right)\right) &
    =\mbe_{q\left(\theta;\eta\right)}\left[
        \frac{1}{2}\left(\theta-\mu\right)^{\trans}\covmat^{-1}
        \left(\theta-\mu\right)\right]-\frac{1}{2}\sum_{k}\log v_{k}+
        \constant\nonumber \\
 & =\frac{1}{2}\textrm{trace}\left(\covmat^{-1}
    \mbe_{q\left(\theta;\eta\right)}\left[
    \theta\theta^{\trans}\right]\right)-
    \mu^{\trans}\covmat^{-1}\mbe_{q\left(\theta;\eta\right)}
    \left[\theta\right]-\frac{1}{2}\sum_{k}\log v_{k}+\constant\nonumber \\
 & =\frac{1}{2}\textrm{trace}\left(\covmat^{-1}
    \left(mm^{\trans}+\textrm{diag}\left(v\right)\right)\right)-
    \mu^{\trans}\covmat^{-1}m-\frac{1}{2}\sum_{k}\log v_{k}+\constant\nonumber \\
 & =\frac{1}{2}\textrm{trace}\left(\covmat^{-1}
    \textrm{diag}\left(v\right)\right)+
    \frac{1}{2}m^{\trans}\covmat^{-1}m-
    \mu^{\trans}\covmat^{-1}m-\frac{1}{2}\sum_{k}\log v_{k}+
    \constant.\label{eq:mvn_kl_divergence}
\end{align}
The first-order condition for the optimal $m^{*}$ is then
\begin{align*}
\left.\frac{\partial KL\left(q\left(\theta;\eta\right)||
    \pzeropost\left(\theta\right)\right)}{\partial m}
    \right|_{m=m^{*},v=v^{*}} & =0\Rightarrow\\
\covmat^{-1}m^{*}-\covmat^{-1}\mu & =0\Rightarrow\\
m^{*} & =\mu.
\end{align*}
The optimal variances follow similarly:
\begin{align*}
\left.\frac{\partial KL\left(q\left(\theta;\eta\right)||
    \pzeropost\left(\theta\right)\right)}{\partial v_{k}}
    \right|_{m=m^{*},v=v^{*}} & =0\Rightarrow\\
\frac{1}{2}\left(\covmat^{-1}\right)_{kk}-\frac{1}{2}
    \frac{1}{v_{k}^{*}} & =0\Rightarrow\\
v_{k}^{*} & =\frac{1}{\left(\covmat^{-1}\right)_{kk}}.
\end{align*}
Since $v_{k}^{*}>0$, we have $\etaopt\in\Omega_{\eta}$.

\prettyref{lem:mvn_exact} can be also be derived via the variational
coordinate ascent updates (\citet[Section 10.1.2]{bishop:2006:pattern}
and \citet[Appendix B]{giordano:2015:lrvb}).
\end{proof}
Next, we show that \prettyref{lem:mvn_exact} holds for all perturbations
of the form $\covdens=\alpha^{\trans}\theta$ with $\alpha_{0}=0$
and that \vbassum are satisfied for all finite $\alpha$.
\begin{lrvb_lemma}
\label{lem:mvn_perturbation_exact}Under the conditions of
\prettyref{lem:mvn_exact}, let $\pthetapost\left(\theta\right)$ be defined from
\prettyref{eq:tilting_definition} with $\covdens=\alpha^{\trans}\theta$ and
$\alpha_{0}=0$. Take $\gtheta=\theta$. Then, for all finite $\alpha$, \vbassum
are satisfied, and \prettyref{cond:vb_accurate} is satisfied with equality.
\end{lrvb_lemma}
\begin{proof}
Up to a constant that does not depend on $\theta$, the log density
of $\pthetapost\left(\theta\right)$ is
\begin{align*}
\log\pthetapost\left(\theta\right) & =
    -\frac{1}{2}\left(\theta-\mu\right)^{\trans}\covmat^{-1}
    \left(\theta-\mu\right)+\alpha^{\trans}\theta+\constant\\
 & =-\frac{1}{2}\theta^{\trans}\Sigma^{-1}\theta-
    \frac{1}{2}\mu^{\trans}\Sigma^{-1}\mu+\left(\mu^{\trans}\Sigma^{-1}+
    \alpha^{\trans}\right)\theta+\constant.
\end{align*}
Since $\theta$ is a natural sufficient statistic of the multivariate normal
distribution and the corresponding natural parameter of
$\pthetapost\left(\theta\right),$ $\covmat^{-1}\mu+\alpha$, is interior when
$\covmat$ is full-rank, $\pthetapost\left(\theta\right)$ is multivariate normal
for any finite $\alpha$. \prettyref{assu:exchange_order} follows immediately.

By inspection of \prettyref{eq:mvn_kl_divergence},
\prettyref{assu:kl_differentiable} is satisfied. Because $\Omega_{\eta}$ is an
open set and $\Sigma$ is positive definite, \prettyref{assu:opt_interior} is
satisfied. Since $\mbe_{q\left(\theta;\eta\right)}\left[\gtheta\right]=m$,
\prettyref{assu:e_q_g_smooth} is satisfied. Finally, by
\prettyref{lem:mvn_exact},
$\mbe_{\qthetapost}\left[\theta\right]=\mbe_{\pthetapost}\left[\theta\right]$ ,
so \prettyref{cond:vb_accurate} is satisfied with equality.
\end{proof}
It now follows immediately from
\prettyref{def:lrvb_covariance} that the linear response variational
covariance exactly reproduces the exact posterior covariance for the
multivariate normal distribution.
\begin{lrvb_corollary}
\label{cor:mvn_cov_exact}
Under the conditions of \prettyref{lem:mvn_perturbation_exact},
$\lrvbcov\left(\theta\right)=\pcov\left(\theta\right)$.
\end{lrvb_corollary}

\section{ADVI Model Details}\label{app:advi}

This section reports the Stan code for the models used in \prettyref{subsec:advi}.
For details on how to interpret the models as well as the unconstraining
transforms, see the Stan manual \citep{stan-manual:2015}.  For the associated data, see
the Stan example models wiki \citep{stan-examples:2017}.

\subsection{Election Model (\texttt{election88.stan})}
\lstinputlisting[caption=\texttt{election88.stan}]{static_images/election88.stan}

\subsection{Sesame Street Model (\texttt{sesame\_street1})}
\lstinputlisting[caption=\texttt{sesame\_street1.stan}]{static_images/sesame_street1.stan}

\subsection{Radon Model (\texttt{radon\_vary\_intercept\_floor})}
\lstinputlisting[caption=\texttt{radon\_vary\_intercept\_floor.stan}]{static_images/radon_vary_intercept_floor.stan}

\subsection{Ecology Model (\texttt{cjs\_cov\_randeff})}
\lstinputlisting[caption=\texttt{cjs\_cov\_randeff.stan}]{static_images/cjs_cov_raneff.stan}

\section{LKJ Priors for Covariance Matrices in Mean Field Variational Inference\label{app:lkj}}

In this section we briefly derive closed-form expressions for using
an LKJ prior with a Wishart variational approximation.
\begin{proposition}
Let $\covmat$ be a $K\times K$ positive definite covariance matrix.
Define the $K\times K$ matrix $\lkjdiagmat$ such that
\begin{align*}
\lkjdiagmat_{ij} & =\begin{cases}
\sqrt{\covmat_{ij}} & \textrm{if }i=j\\
0 & \textrm{otherwise}.
\end{cases}
\end{align*}
Define the correlation matrix $R$ as
\begin{align*}
\lkjcorrmat & =\lkjdiagmat^{-1}\covmat\lkjdiagmat^{-1}.
\end{align*}
Define the LKJ prior on $\lkjcorrmat$ with concentration parameter
$\xi$ \citep{lewandowski:2009:lkj}:
\begin{align*}
p_{\mathrm{LKJ}}\left(\lkjcorrmat\vert\xi\right) & \propto\left|\lkjcorrmat\right|^{\xi-1}.
\end{align*}
Let $q\left(\covmat\vert\lkjlocmat^{-1},\nu\right)$ be an inverse
Wishart distribution with matrix parameter $\lkjlocmat^{-1}$ and $\nu$
degrees of freedom. Then
\begin{align*}
\mathbb{E}_{q}\left[\log\left|\lkjcorrmat\right|\right] & =\log\left|\lkjlocmat^{-1}\right|-\psi_{K}\left(\frac{\nu}{2}\right)-\sum_{k=1}^{K}\log\left(\left(\lkjlocmat^{-1}\right)_{kk}\right)+K\psi\left(\frac{\nu-K+1}{2}\right)+\constant\\
\mbe_{q}\left[\log p_{\mathrm{LKJ}}\left(\lkjcorrmat\vert\xi\right)\right] & =\left(\xi-1\right)\mathbb{E}_{q}\left[\log\left|\lkjcorrmat\right|\right]+\constant,
\end{align*}
where $\constant$ does not depend on $\lkjlocmat$ or $\nu$. Here,
$\psi_{K}$ is the multivariate digamma function.
\end{proposition}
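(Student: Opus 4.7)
The plan is to reduce the calculation to two standard facts about the inverse Wishart distribution, by first rewriting $\log|\lkjcorrmat|$ purely in terms of $\covmat$.

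First I would observe that, because $\lkjcorrmat = \lkjdiagmat^{-1}\covmat\lkjdiagmat^{-1}$ and $\lkjdiagmat$ is diagonal with $\lkjdiagmat_{kk}=\sqrt{\covmat_{kk}}$, we have
\begin{align*}
\log|\lkjcorrmat| &= \log|\covmat| - 2\log|\lkjdiagmat|
 = \log|\covmat| - \sum_{k=1}^{K}\log \covmat_{kk}.
\end{align*}
Taking $q$-expectations reduces the problem to evaluating $\mbe_{q}[\log|\covmat|]$ and $\mbe_{q}[\log \covmat_{kk}]$ for $k=1,\dots,K$.

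Next I would invoke two classical properties of $\covmat\sim\mathrm{InverseWishart}(\lkjlocmat^{-1},\nu)$. For the determinant I would use the standard identity (which follows from $\covmat^{-1}\sim\mathrm{Wishart}(\lkjlocmat,\nu)$ together with the Wishart log-determinant formula $\mbe[\log|W|]=\psi_K(\nu/2)+K\log 2+\log|\lkjlocmat|$) to write
\begin{align*}
\mbe_{q}[\log|\covmat|] &= \log|\lkjlocmat^{-1}| - \psi_K(\nu/2) - K\log 2.
\end{align*}
For the diagonal entries I would use the fact that the $(k,k)$-marginal of an inverse Wishart $\mathrm{IW}_K(\Psi,\nu)$ is an inverse gamma with shape $(\nu-K+1)/2$ and scale $\Psi_{kk}/2$ (obtained by applying the standard partitioning theorem to the $1\times 1$ block indexed by $k$, together with the fact that $\mathrm{IW}_1(\psi,m)$ is $\mathrm{IG}(m/2,\psi/2)$). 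Hence
\begin{align*}
\mbe_{q}[\log \covmat_{kk}] &= \log\bigl((\lkjlocmat^{-1})_{kk}\bigr) - \log 2 - \psi\!\left(\tfrac{\nu-K+1}{2}\right).
\end{align*}
Summing this over $k$ and subtracting from $\mbe_q[\log|\covmat|]$, the two $\log 2$ terms cancel and I would recover exactly the stated formula for $\mbe_q[\log|\lkjcorrmat|]$.

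The second identity, $\mbe_{q}[\log p_{\mathrm{LKJ}}(\lkjcorrmat\vert\xi)] = (\xi-1)\mbe_{q}[\log|\lkjcorrmat|] + \constant$, is then immediate: the LKJ log-density equals $(\xi-1)\log|\lkjcorrmat|$ plus a normalizing constant that depends only on $\xi$ and $K$, so its $q$-expectation separates as claimed, and the constant does not depend on the variational parameters $\lkjlocmat$ or $\nu$. The only non-routine ingredient is the marginal distribution of a single diagonal entry of an inverse Wishart, which I expect to be the main obstacle for a reader to verify, although it is standard; everything else is bookkeeping.
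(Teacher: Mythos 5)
Your proposal is correct and follows essentially the same route as the paper's proof: the identity $\log|\lkjcorrmat|=\log|\covmat|-\sum_{k}\log\covmat_{kk}$, the inverse Wishart log-determinant formula, and the marginalization property giving $\covmat_{kk}\sim\mathrm{InverseWishart}\left(\left(\lkjlocmat^{-1}\right)_{kk},\nu-K+1\right)$ (which you equivalently phrase as an inverse gamma), with the $\log 2$ terms cancelling as you note. The second identity is handled identically in both.
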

\begin{proof}
    First note that
\begin{align}
\nonumber
\log\left|\covmat\right| & =
    2\log\left|\lkjdiagmat\right|+\log\left|\lkjcorrmat\right|\\
\nonumber
 & =2\sum_{k=1}^{K}\log\sqrt{\covmat_{kk}}+
    \log\left|\lkjcorrmat\right|\\
\nonumber
 & =\sum_{k=1}^{K}\log\covmat_{kk}+
    \log\left|\lkjcorrmat\right|\Rightarrow\\
\label{eq:logdet_corr}
\log\left|\lkjcorrmat\right| & =\log\left|\covmat\right|-
    \sum_{k=1}^{K}\log\covmat_{kk}.
\end{align}
By Eq.\ B.81 in \citep{bishop:2006:pattern}, a property of the inverse Wishart distribution is the following relation.
\begin{align}
\label{eq:inv_wish_logdet}
E_{q}\left[\log\left|\covmat\right|\right] &
    =\log\left|\lkjlocmat^{-1}\right|-\psi_{K}\left(\frac{\nu}{2}\right)-K\log2,
\end{align}
where $\psi_{K}$ is the multivariate digamma function. By the marginalization
property of the inverse Wishart distribution,
\begin{align}
\nonumber
\covmat_{kk} & \sim\textrm{InverseWishart}\left(
    \left(\lkjlocmat^{-1}\right)_{kk},\nu-K+1\right)\Rightarrow\\
\label{eq:inv_wish_logdet_comps}
E_{q}\left[\log\covmat_{kk}\right] & =
    \log\left(\left(\lkjlocmat^{-1}\right)_{kk}\right)-
    \psi\left(\frac{\nu-K+1}{2}\right)-\log2.
\end{align}
Plugging \prettyref{eq:inv_wish_logdet} and \prettyref{eq:inv_wish_logdet_comps} into \prettyref{eq:logdet_corr} gives the desired result.
\end{proof}

\section{Logistic GLMM Model Details\label{app:glmm_details}}

In this section we include extra details about the model and analysis
of \prettyref{sec:experiments}. We will continue to use the notation
defined therein. We use $\constant$ to denote any constants that do not depend
on the prior parameters, parameters, or data. The log likelihood is
\begin{eqnarray}
\log p\left(y_{it}\vert u_{t},\beta\right) & = &
    y_{it}\log\left(\frac{p_{it}}{1-p_{it}}\right)+
    \log\left(1-p_{it}\right)\nonumber \\
 & = & y_{it}\rho+\log\left(1-p_{it}\right)+\constant\nonumber \\
\log p\left(u\vert\mu,\tau\right) & = &
    -\frac{1}{2}\tau\sum_{t=1}^{T}\left(u_{t}-\mu\right)^{2}-
    \frac{1}{2}T\log\tau\nonumber \\
 & = & -\frac{1}{2}\tau\sum_{t=1}^{T}\left(u_{t}^{2}-\mu u_{t}+
    \mu^{2}\right)-\frac{1}{2}T\log\tau+\constant\nonumber \\
\log p\left(\mu,\tau,\beta\right) & = &
    -\frac{1}{2}\sigma_{\mu}^{-2}\left(\mu^{2}+2\mu\mu_{0}\right)+\nonumber \\
 &  & \left(1-\alpha_{\tau}\right)\tau+\beta_{\tau}\log\tau+\nonumber \\
 &  & -\frac{1}{2}\left(\textrm{trace}\left(
    \covmat_{\beta}^{-1}\beta\beta^{T}\right)+2\textrm{trace}
    \left(\covmat_{\beta}^{-1}\beta_{0}\beta^{T}\right)\right).
    \label{eq:glmm_log_lik}
\end{eqnarray}

The prior parameters were taken to be
\begin{align*}
\mu_{0} & =\glmmMuLoc\\
\sigma_{\mu}^{-2} & =\glmmMuInfo\\
\beta_{0} & =\glmmBetaLoc\\
\sigma_{\beta}^{-2} & =\glmmBetaInfoDiag\\
\alpha_{\tau} & =\glmmTauAlpha\\
\beta_{\tau} & =\glmmTauBeta.
\end{align*}

Under the variational approximation, $\rho_{it}$ is normally distributed
given $x_{it}$, with
\begin{eqnarray*}
\rho_{it} & = & x_{it}^{T}\beta+u_{t}\\
\mbe_{q}\left[\rho_{it}\right] & = &
    x_{it}^{T}\mbe_{q}\left[\beta\right]+\mbe_{q}\left[u_{t}\right]\\
\textrm{Var}_{q}\left(\rho_{it}\right) & = &
    \mbe_{q}\left[\beta^{T}x_{it}x_{it}^{T}\beta\right]-
    \mbe_{q}\left[\beta\right]^{T}x_{it}x_{it}^{T}\mbe_{q}
    \left[\beta\right]+\textrm{Var}_{q}\left(u_{t}\right)\\
 & = & \mbe_{q}\left[\textrm{tr}\left(\beta^{T}x_{it}x_{it}^{T}
    \beta\right)\right]-\textrm{tr}\left(\mbe_{q}
    \left[\beta\right]^{T}x_{it}x_{it}^{T}\mbe_{q}\left[\beta\right]\right)+
    \textrm{Var}_{q}\left(u_{t}\right)\\
 & = & \textrm{tr}\left(x_{it}x_{it}^{T}\left(\mbe_{q}
    \left[\beta\beta^{T}\right]-\mbe_{q}\left[\beta\right]\mbe_{q}
    \left[\beta\right]^{T}\right)\right)+\textrm{Var}_{q}\left(u_{t}\right).
\end{eqnarray*}

We can thus use $n_{MC}=\glmmNumGHPoints$ points of Gauss-Hermite
quadrature to numerically estimate
$\mbe_{q}\left[\log\left(1-\frac{e^{\rho}}{1+e^{\rho}}\right)\right]$:
\begin{align*}
\rho_{it,s} & :=\sqrt{\textrm{Var}_{q}\left(\rho_{it}\right)}z_{s}+
    \mbe_{q}\left[\rho_{it}\right]\\
\mbe_{q}\left[\log\left(1-\frac{e^{\rho_{it}}}{1+e^{\rho_{it}}}\right)\right] &
    \approx\frac{1}{n_{MC}}\sum_{s=1}^{n_{MC}}
    \log\left(1-\frac{e^{\rho_{it,s}}}{1+e^{\rho_{it,s}}}\right)
\end{align*}
We found that increasing the number of points used for the quadrature
did not measurably change any of the results. The integration points
and weights were calculated using the \texttt{\textit{numpy.polynomial.hermite}}
module in Python \citep{scipy}.
 
\clearpage{}

\bibliography{variational_robustness}

\end{document}